\definecolor{MyDarkBlue}{rgb}{0.15,0.25,0.45}
\let\fn\footnote
\renewcommand{\footnote}[1]{\linespread{1.1}\fn{#1}\linespread{1.29}}
\makeatletter\renewcommand{\section}{\@startsection
    {section}{1}{\z@}{-3.5ex plus -1ex minus
        -.2ex}{2.3ex plus .2ex}{\bf }}
\makeatletter\renewcommand{\subsection}{\@startsection{subsection}{2}{\z@}{-3.25ex
        plus -1ex minus
        -.2ex}{1.5ex plus .2ex}{\bf }}
\makeatletter\renewcommand{\subsubsection}{\@startsection{subsubsection}{3}{-2.45ex}{-3.25ex
        plus -1ex minus -.2ex}{1.5ex plus .2ex}{\it }}
\renewcommand{\thesection}{\arabic{section}}
\renewcommand{\thesubsection}{\arabic{section}.\arabic{subsection}}
\renewcommand{\@seccntformat}[1]{\@nameuse{the#1}.~~}
\renewcommand{\theequation}{\thesection.\arabic{equation}}
\makeatletter \@addtoreset{equation}{section}
\def\Ddots{\mathinner{\mkern1mu\raise\p@
        \vbox{\kern7\p@\hbox{.}}\mkern2mu
        \raise4\p@\hbox{.}\mkern2mu\raise7\p@\hbox{.}\mkern1mu}}
\DeclareRobustCommand{\cev}[1]{%
    {\mathpalette\do@cev{#1}}%
}
\newcommand{\do@cev}[2]{%
    \vbox{\offinterlineskip
        \sbox\z@{$\m@th#1 x$}%
        \ialign{##\cr
            \hidewidth\reflectbox{$\m@th#1\vec{}\mkern4mu$}\hidewidth\cr
            \noalign{\kern-\ht\z@}
            $\m@th#1#2$\cr
        }%
    }%
}
\providecommand*{\shuffle}{%
    \mathbin{\mathpalette\shuffle@{}}%
}
\newcommand*{\shuffle@}[2]{%
    \sbox0{$#1\vcenter{}$}%
    \kern .15\ht0 
    \rlap{\vrule height .25\ht0 depth 0pt width 2.5\ht0}%
    \raise.1\ht0\hbox to 2.5\ht0{%
        \vrule height 1.75\ht0 depth -.1\ht0 width .17\ht0 %
        \hfill
        \vrule height 1.75\ht0 depth -.1\ht0 width .17\ht0 %
        \hfill
        \vrule height 1.75\ht0 depth -.1\ht0 width .17\ht0 %
    }%
    \kern .15\ht0 
}
\newtheorem{thm}{Theorem}[section]
\renewcommand{\thethm}{\thesection.\arabic{thm}}
\newtheorem{lemma}[thm]{Lemma}
\newtheorem{definition}[thm]{Definition}
\newtheorem{theorem}[thm]{Theorem}
\newtheorem{proposition}[thm]{Proposition}
\newtheorem{corollary}[thm]{Corollary}
\renewcommand{\appendices}{
    \section*{Appendix}\label{appendices}\setcounter{subsection}{0}
    \addcontentsline{toc}{section}{Appendix}
    \setcounter{equation}{0}
    \crefalias{subsection}{appendix}
    \makeatletter
    \renewcommand{\theequation}{\Alph{subsection}.\arabic{equation}}
    \renewcommand{\thesubsection}{\Alph{subsection}}
    \renewcommand{\thethm}{\Alph{subsection}.\arabic{thm}}
    \@addtoreset{equation}{subsection}
    \@addtoreset{thm}{subsection}
    \makeatother
}
\let\oldref\ref
\newcommand{\makecommand}[3]{%
    \foreach \i in #3 {%
        \expandafter\xdef\csname #1\i\endcsname{\noexpand#2{{\unexpanded\expandafter{\i}}}}%
    }%
}
\newcommand{\latinalphabet}{A,a,B,b,C,c,d,D,E,e,F,f,G,g,H,h,I,i,J,j,K,k,L,l,M,m,N,n,O,o,P,p,Q,q,R,r,S,s,T,t,U,u,V,v,W,w,X,x,Y,y,Z,z}
\newcommand{\rep}[1]{{\mathbf{#1}}}
\newcommand{\opCom}{\mathcal{\caC}om}
\newcommand{\opAss}{\mathcal{\caA}ss}
\newcommand{\opLie}{\mathcal{\caL}ie}
\newcommand{\ophLie}{h\mathcal{\caL}ie}
\newcommand{\opEilh}{\mathcal{\caE}ilh}
\newcommand{\opLeib}{\mathcal{\caL}eib}
\newcommand{\opZinb}{\mathcal{\caZ}inb}
\newcommand{\opELinfty}{\mathcal{\caE}L_\infty}
\newcommand{\eps}{\varepsilon}
\newcommand{\ewith}{~~~\mbox{with}~~~}
\newcommand{\eand}{~~~\mbox{and}~~~}
\newcommand{\sst}[1]{{\scriptscriptstyle #1}}
\def\0{{\sst{(0)}}}
\def\1{{\sst{(1)}}}
\def\2{{\sst{(2)}}}
\def\3{{\sst{(3)}}}
\def\4{{\sst{(4)}}}
\def\5{{\sst{(5)}}}
\def\6{{\sst{(6)}}}
\def\7{{\sst{(7)}}}
\DeclareMathOperator{\tr}{tr}
\def\slasha#1{\setbox0=\hbox{$#1$}#1\hskip-\wd0\hbox to\wd0{\hss\sl/\/\hss}}
\def\periodb#1{\setbox0=\hbox{$#1$}#1\hskip-\wd0\hbox to\wd0{-}}
\newcommand{\dpar}{\partial}
\newcommand{\rmder}[2][]{%
    \ifthenelse{\equal{#1}{}}{%
        \frac{\mathrm{d}}{\mathrm{d} #2}%
    }{%
        \frac{\mathrm{d} #1}{\mathrm{d} #2}%
    }%
}
\newcommand{\parder}[2][]{%
    \ifthenelse{\equal{#1}{}}{%
        \frac{\partial}{\partial #2}%
    }{%
        \frac{\partial #1}{\partial #2}%
    }%
}
\newcommand{\delder}[2][]{%
    \ifthenelse{\equal{#1}{}}{%
        \frac{\delta}{\delta #2}%
    }{%
        \frac{\delta #1}{\delta #2}%
    }%
}
\newcommand{\der}[1]{\frac{\partial}{\partial #1}}   		
\newcommand{\acton}{\vartriangleright}     			
\newcommand{\comment}[1]{}     				
\def\tyng(#1){\hbox{\tiny$\yng(#1)$}}			
\def\tyoung(#1){\hbox{\tiny$\young(#1)$}}			
\newcommand{\beq}{\begin{eqnarray}}
    \newcommand{\eeq}{\end{eqnarray}}
\let\oldbigotimes\bigotimes
\renewcommand\bigotimes{\oldbigotimes\nolimits}
\newcommand{\antishriek}{\text{\raisebox{\depth}{\textexclamdown}}}
\begin{document}
    
    \begin{titlepage}
        \begin{flushright}
            EMPG--21--07
        \end{flushright}
        \vskip2.0cm
        \begin{center}
            {\LARGE \bf\mathversion{bold}
                $E_2L_\infty$-algebras, Generalized Geometry,\\[0.4cm]   
                and Tensor Hierarchies
            }
            \vskip1.5cm
            {\Large Leron Borsten$^{1,2}$, Hyungrok Kim$^1$, and Christian Saemann$^3$}
            \setcounter{footnote}{0}
            \renewcommand{\thefootnote}{\arabic{thefootnote}}
            \vskip1cm
            {\em ${}^1$ Department of Physics, Astronomy and Mathematics\\
                University of Hertfordshire, Hatfield, Hertfordshire AL10 9AB, U.K.}\\[0.2cm]
            {\em ${}^2$ Blackett Laboratory, Imperial College London\\
                London SW7 2AZ, U.K.}\\[0.2cm]
            {\em ${}^3$ Maxwell Institute for Mathematical Sciences\\
                Department of Mathematics, Heriot--Watt University\\
                Colin Maclaurin Building, Riccarton, Edinburgh EH14 4AS, U.K.}\\[0.5cm]
            {Email: {\ttfamily l.borsten@herts.ac.uk~,~h.kim2@herts.ac.uk~,~c.saemann@hw.ac.uk}}
        \end{center}
        \vskip1.0cm
        \begin{center}
            {\bf Abstract}
        \end{center}
        \begin{quote}
            We define a generalized form of $L_\infty$-algebras called $E_2L_\infty$-algebras. As we show, these provide the natural algebraic framework for generalized geometry and the symmetries of double field theory as well as the gauge algebras arising in the tensor hierarchies of gauged supergravity. Our perspective shows that the kinematical data of the tensor hierarchy is an adjusted higher gauge theory, which is important for developing finite gauge transformations as well as non-local descriptions. Mathematically, $E_2L_\infty$-algebras shed some light on Loday's problem of integrating Leibniz algebras.
        \end{quote}

    \end{titlepage}
    
    \tableofcontents
    
    \section{Introduction and results}
    
    In this paper, we define a new class of homotopy algebras called $E_2L_\infty$-algebras, which can be regarded as a weaker yet quasi-isomorphic form of $L_\infty$-algebras. These generalize previous definitions of weaker forms of categorified Lie algebras~\cite{Roytenberg:0712.3461,Squires:2011aa}, from where we also borrowed the nomenclature.
    
    Recall that in~\cite{Baez:2003aa}, Baez--Crans introduced semistrict Lie 2-algebras: linear categories equipped with a strictly antisymmetric  bilinear functor,  the categorified Lie bracket, that is only required to satisfy the Jacobi identity up to a coherent trilinear natural transformation, the \emph{Jacobiator}.  In~\cite{Roytenberg:0712.3461}  semistrict Lie 2-algebras were fully categorified  to \emph{weak} Lie 2-algebras by also relaxing  antisymmetry  to hold only up to a coherent natural transformation, the \emph{alternator}. 
    
    By passing to its normalized chain complex, we transition from the categorical description containing many redundancies to a computationally more convenient description in terms of differential graded algebras. In particular, a semistrict Lie 2-algebra is seen to be equivalent to a 2-term $L_\infty$-algebra~\cite{Baez:2003aa}, i.e.~an $L_\infty$-algebra with underlying graded vector space concentrated in degrees $-1$ and $0$. Analogously, by passing to its normalized chain complex, any weak Lie 2-algebra is seen to be equivalent to a  2-term $E_2L_\infty$-algebra in the sense of~\cite{Roytenberg:0712.3461}, where the letter $E$ was added to indicate that `everything' is relaxed up to homotopy. 
    
    Here, we extend Roytenberg's construction to $\IZ$-graded differential complexes. We show that our generalization describes important data in many areas: from generalized geometry and double field theory to multisymplectic geometry and the tensor hierarchies of supergravity.

    \subsection{Four questions}
    
    Our generalized homotopy Lie algebras arise naturally in many contexts, and there are four questions that they can answer or, at least, suggest an answer to. 
    
    \subsection*{What is the algebraic structure underlying adjusted higher curvatures?}
    
    Higher gauge theories, i.e.~gauge theories of higher form potentials corresponding to connections on higher or categorified principal bundles, may underlie a future non-abelian M5-brane model~\cite{Saemann:2017zpd,Saemann:2019leg}, but they certainly arise in heterotic supergravity as well as the tensor hierarchies of gauged supergravities, see e.g.~\cite{Samtleben:2008pe,Trigiante:2016mnt} for reviews, and of double and exceptional field theories~\cite{Aldazabal:2013via}. A proper understanding of the relevant mathematical structures is evidently important: global constructions are only possible with the correct notion of corresponding higher principal bundles.
    
    However, already the appropriate definition of the notion of curvature in a higher gauge theory is not straightforward for non-flat theories. Using categorification, there is a straightforward definition of higher curvature forms, which was also found in the first mathematical papers on non-abelian gerbes, \cite{Breen:math0106083} and~\cite{Aschieri:2003mw}. Unfortunately, these curvatures are too restrictive for non-flat connections, cf.~\cite{Saemann:2019dsl}. A consistency condition known as fake flatness must be imposed for the full cocycle data of non-abelian gerbes to glue together consistently (and for gauge or BRST transformations to close). This condition requires all curvature components except for the one of highest form degree to vanish, which in turn implies that all of the components of the gauge potentials except for the one of highest form degree can locally be gauged away. This is readily seen in the strict case~\cite{Saemann:2019dsl}; see also~\cite{Gastel:2018joi} for a detailed analytical proof.
    
    Physicists have long known that coupling abelian 2-form potentials to non-abelian 1-form potentials is best done using a different expression for the $3$-form component of the curvature~\cite{Bergshoeff:1981um,Chapline:1982ww}. This modification involves adding terms proportional to curvature expressions to the flat curvature, which were not visible to the homotopy Maurer--Cartan theory. These necessary modifications were implemented for the string and five-brane structures in~\cite{Sati:2009ic,Sati:2008eg} by performing a coordinate change on the Weil algebra of the gauge $L_\infty$-algebra\footnote{At the purely algebraic level, this ensures that one can define invariant polynomials in a way that is compatible with quasi-isomorphisms of $L_\infty$-algebras, see also~\cite{Saemann:2019dsl}.}, which induces the necessary change in the definition of curvatures. Considering the curvatures arising in the tensor hierarchy, one encounters similar modifications. 
    
    The modifications required for the definitions of non-flat higher curvatures (and which lead to a closed BRST complex) were dubbed~{\em adjustments} in~\cite{Saemann:2019dsl}, and a consistent higher parallel transport for adjusted higher curvatures was defined in~\cite{Kim:2019owc}. An adjustment now requires additional higher products, which are not visible in the original higher gauge algebra. It turns out that they are components of higher products in $E_2L_\infty$-algebras that are quasi-isomorphic to the higher gauge algebra.
    
    In particular, we show in \ref{thm:firmly_adjust} that there exists a particular class of $L_\infty$-algebras that come with a natural adjustment encoded in an $E_2L_\infty$-algebra. This class is precisely the one arising in the tensor hierarchies of gauged supergravity. The latter are thus adjusted higher gauge algebras, employing $E_2L_\infty$-algebras in their construction.
    
    \subsection*{What is the full algebra underlying generalized geometry?}
    
    Generalized geometry in its simplest form is described by an exact Courant algebroid, which captures the infinitesimal gauge symmetries of Einstein--Hilbert gravity coupled to a Kalb--Ramond 2-form potential. Roytenberg has shown that the Courant algebroid is best regarded as a symplectic Lie 2-algebroid~\cite{Roytenberg:0203110}, and more general forms of the generalized tangent bundle can similarly be encoded in higher symplectic Lie algebroids. Dually, they are described by their Chevalley--Eilenberg algebras, which are certain differential graded Poisson algebras. The latter, in turn, give rise to associated $L_\infty$-algebras via a derived bracket construction~\cite{Fiorenza:0601312,Getzler:1010.5859}, describing generalizations of the above mentioned gauge symmetries. In particular, the binary product of the $L_\infty$-algebra for the Courant algebroid is simply the Courant bracket. This picture has extensions to double field theory, cf.~\cite{Deser:2016qkw}.
    
    It is somewhat unsatisfactory that the derived bracket construction only reproduces the Courant bracket, while the Dorfman bracket, which antisymmetrizes to the Courant bracket, has to be constructed ``by hand''. It turns out that the derived bracket construction has a refinement, in which the differential graded Poisson algebra first gives rise to an $E_2L_\infty$-algebra, which can then be antisymmetrized to the $L_\infty$-algebra obtained in~\cite{Getzler:1010.5859}. This construction of an $E_2L_\infty$-algebra from a differential graded Lie algebra also underlies the gauge algebras appearing in the tensor hierarchies of gauged supergravity. As mentioned above, the refinement here provides the additional structure constants required for an adjustment.
    
    \subsection*{What is the higher Poisson algebra arising in multisymplectic geometry?}
    
    Multisymplectic forms are higher, non-degenerate differential forms generalizing symplectic forms. Just as symplectic forms define a Poisson algebra structure on the algebra of functions, multisymplectic forms define higher analogues of Poisson algebras involving functions and differential forms~\cite{Baez:2008bu}. As known since the work of Rogers~\cite{Rogers:2010sc,Rogers:2010nw}, these $L_\infty$-algebras can be embedded into the above mentioned $L_\infty$-algebras arising in the case of higher symplectic Lie-algebroids. Therefore it is not surprising that we again have a refinement which constructs an $E_2L_\infty$-algebra from the multisymplectic form. For multisymplectic 3-forms, this had already been observed in~\cite{Baez:2008bu}. With a general definition of $E_2L_\infty$-algebras, we can now generalize the statement to arbitrary multisymplectic manifolds.
        
    \subsection*{What is the object integrating Leibniz algebras?}
    
    In~\cite{RCP25_1993__44__127_0}, Loday posed the ``coquecigrue problem'' of generalizing Lie's third theorem to Leibniz algebras. A conventional answer to this problem is given in terms of Lie racks, cf.~\cite{Kinyon:0403509}. However, there is an even simpler answer suggested by $E_2L_\infty$-algebras.
    
    It appears to be a common phenomenon that some forms of integration are only possible after extending to the right cohomology in higher structures. For example, central extensions of Lie algebras do not, in general, integrate to central extensions of Lie groups. This obstruction, however, may be overcome by integrating to a Lie 2-group~\cite{Zhu:1204.5583}. Similarly, the general integration of Lie algebroids only becomes possible after they are regarded as Lie $\infty$-algebroids and integrated to Lie $\infty$-groupoids, cf.~\cite{Severa:1506.04898}.
    
    We observe that any Leibniz algebra canonically gives rise to an $\ophLie_2$-algebra, a weak Lie 2-algebra, cf.~\cite[Example~2.22]{Roytenberg:0712.3461} as well as \ref{prop:Leib_is_hemistrict_ELinfty}. From the standpoint of higher Lie theory, there has to be a natural extension of the usual integration theory of $L_\infty$-algebras, cf.~\cite{Getzler:0404003,Henriques:2006aa}, that allows for an integration of this weak Lie 2-algebra.
    
    The higher version of the ``coquecigrue problem,'' i.e.~an integration of $\opLeib_\infty$-algebras, would then be similarly resolved: each $\opLeib_\infty$-algebra needs to be promoted by additional (higher) alternators to an $EL_\infty$-algebra, that may have to be more general than the $E_2L_\infty$-algebras constructed here. These then should be integrable, in principle, by general higher Lie theory.
    
    \subsection{Conclusions and outlook}
    
    Besides giving the general definition for $E_2L_\infty$-algebra in a fashion that can be readily used for explicit computations, we also develop the general structure theory:
    \begin{itemize}
        \item[$\Diamond$] The key to most of our discussion is the notion of $\ophLie_2$-algebras (\ref{def:hLie2}). These are differential graded Leibniz algebras in which the Leibniz bracket fails to be graded antisymmetric up to a homotopy given by an alternator.
        \item[$\Diamond$] Koszul dual to the operad $\ophLie_2$ is the operad $\opEilh_2$ (\ref{def:operad_Eilh2}), and we can use semifree $\opEilh_2$-algebras to define the homotopy algebras of $\ophLie_2$-algebras in \ref{def:E2Linfty}, which we call $E_2L_\infty$-algebras. 
        \item[$\Diamond$] $E_2L_\infty$-algebras come with a good notion of homotopy transfer, see \ref{thm:homotopy_transfer}, and, correspondingly, with a minimal model theorem, see \ref{thm:minimal_model}.
        \item[$\Diamond$] The category of $L_\infty$-algebras embeds into the category of $E_2L_\infty$-algebras (\ref{thm:el_infty_contain_l_infty} and \ref{thm:lift_L_infty_morphism}).
        \item[$\Diamond$] An $E_2L_\infty$-algebra is quasi-isomorphic to an $L_\infty$-algebra, regarded as an $E_2L_\infty$-algebra, by \ref{thm:quasi-iso_of_antisym}. Therefore, any $E_2L_\infty$-algebra is also quasi-isomorphic to a differential graded Lie algebra.
        \item[$\Diamond$] Any differential graded Lie algebra gives naturally rise to an $\ophLie_2$-algebra by \ref{thm:ophLie_from_dgLA}. In particular, there is a hemistrict $E_2L_\infty$-algebra on the shifted and truncated differential complex.
        \item[$\Diamond$] All of the above can be made explicit in terms of multilinear maps, at least order by order, and we give explicit formulas that should prove useful in future applications.
    \end{itemize}
    
    We can then show that given an $\ophLie_2$-algebra originating from a differential graded Lie algebra, adjusted notions of the curvatures of higher gauge theory are naturally found. These adjusted curvatures are precisely the ones of the tensor hierarchies of gauged supergravity for maximal supersymmetry. In the past, these gauge theories have been regarded as gauge theories of Leibniz algebras or various notions of enhanced Leibniz algebras, see our discussion in \ref{sec:comparison}. By the principles of categorification, one expects higher gauge algebras to be some higher form of Lie algebras, as these are the ones integrating to (higher) symmetry Lie groups. We show that the various forms of enhanced Leibniz algebras proposed in the literature are indeed axiomatically incomplete forms of $\ophLie_2$-algebras or weaker higher Lie algebras.
    
    There are three main questions that remain or arise from our work. First, it would be certainly very interesting to explore further the relationship of our constructions to ones existing in the literature. We feel that e.g.~$\opEilh_2$-algebras should have appeared in other algebraic contexts; for example, the deformed Leibniz rule arising in $\ophLie_2$- and $\opEilh_2$-algebras is very similar to the formula in~\cite[Theorem 5.1]{Steenrod:1947aa} for Steenrod's cup products.\footnote{We thank Jim Stasheff for pointing out this potential link.}
    
    Second, most of our applications of $E_2L_\infty$-algebras involved them only in their hemistrict form, namely as $\ophLie_2$-algebras. This is due to the fact that we were only able to refine the derived bracket construction to a construction of an $\ophLie_2$-algebra from a differential graded Lie algebra. As we explain in \ref{ssec:gen_tensor_hierarchies}, there is a clear indication that some tensor hierarchies originate from $E_2L_\infty$-algebras that are not $\ophLie_2$-algebras but that can be obtained from $L_\infty$-algebras. This suggests a much wider generalization of the derived bracket construction, which would be certainly very useful to have. In particular, it would allow us to characterize a very large class of $E_2L_\infty$-algebras for which the problem of defining the kinematical data of higher adjusted gauge theories, such as the data arising in the tensor hierarchies, is fully under control. 
    
    Finally, it would be interesting to generalize our $E_2L_\infty$-algebras to contain the weak Lie 3-algebras of~\cite{Dehling:1710.11104} and 2-crossed modules of Lie algebras. We note that the latter are not contained in the former weak Lie 3-algebras, and seem to require a much more comprehensive generalization.
    
    \section{The operads \texorpdfstring{$\ophLie_2$}{hLie2} and \texorpdfstring{$\opEilh_2$}{Eilh2}}
    
    We start with the definition of the two Koszul-dual operads $\ophLie_2$ and $\opEilh_2$ that underlie our definition of $E_2L_\infty$-algebras. The invocation of operads provides us with a solid mathematical foundation of our constructions; the algebras over the operad $\ophLie_2$, however, will prove to be very interesting in their own right. For background on operads and Koszul duality, see~\cite{Ginzburg:0709.1228,0821843621,Loday:2012aa} as well as~\cite{Vallette:1202.3245}. We stress, however, that only an intuitive understanding of both topics is required for our discussion.
    
    \subsection{The operad \texorpdfstring{$\ophLie_2$}{hLie2}}
    
    Differential graded Lie algebras are algebras over the differential graded operad $\opLie$. In this section, we define a generalization of this operad whose algebras generalize the hemistrict Lie 2-algebras defined in~\cite{Roytenberg:0712.3461}, see also~\cite{Baez:2008bu}. 
    \begin{definition}\label{def:hLie2}
        The \uline{operad $\ophLie_2$} is a differential graded operad endowed with binary operations $\eps_2^i$ of degree~$-i$ for each $i\in \{0,1\}$ that satisfy the following relations: 
        \begin{equation}\label{eq:hLie-relations}
            \begin{aligned}
                \eps_1(\eps_1(x_1))&=0~,
                \\
                \eps_1(\eps^i_2(x_1,x_2))&=(-1)^i\big(\eps^i_2(\eps_1(x_1),x_2)+(-1)^{|x_1|}\eps^i_2(x_1,\eps_1(x_2))\big)
                \\
                &\hspace{1cm}+\eps^{i-1}_2(x_1,x_2)-(-1)^{i+|x_1|\,|x_2|}\eps^{i-1}_2(x_2,x_1)~,
                \\
                \eps^i_2(\eps_2^i(x_1,x_2),x_3)&=(-1)^{i(|x_1|+1)}\eps^i_2(x_1,\eps^i_2(x_2,x_3))-(-1)^{(|x_1|+i)|x_2|}\eps^i_2(x_2,\eps_2^i(x_1,x_3))~,
                \\
                \eps^{0}_2(\eps^{1}_2(x_1,x_2),x_3)&=0~,
                \\
                \eps^{1}_2(\eps^{0}_2(x_1,x_2),x_3)&=
                (-1)^{|x_1|}\eps_2^{0}(x_1,\eps_2^{1}(x_2,x_3))-(-1)^{|x_1|\,|x_2|}\eps_2^1(x_2,\eps_2^0(x_1,x_3))
            \end{aligned}
        \end{equation}
        for all $i\in \{0,1\}$, where $\eps_2^{-1}=0$.
    \end{definition}
    
    For $\ophLie_2$ to be well-defined as an operad, we need the following result:
    \begin{proposition}\label{prop:associativity}
        Relations~\eqref{eq:hLie-relations} are associative. That is, applying these to nested expressions in arbitrary order leads to the same result.
    \end{proposition}
    \begin{proof}
        We verify this by considering cubic expressions of the form
        \begin{equation}
            \eps_1(\eps_2^i(\eps_2^j(x_1,x_2),x_3))\eand
            \eps_2^i(\eps_2^j(\eps_2^k(x_1,x_2),x_3),x_4)
        \end{equation} 
        and applying relations~\eqref{eq:hLie-relations} in arbitrary order and in all possible positions. This is a straightforward computation. After fully simplifying all expressions, these always agree.
    \end{proof}
    
    An algebra over the operad $\ophLie_2$, or an $\ophLie_2$-algebra for short, is then a graded vector space $\frE$ together with a differential and a collection of binary products,
    \begin{equation}
        \begin{aligned}
            \eps_1&: \frE\rightarrow \frE~,~~~|\eps_1|=1~,
            \\
            \eps^i_2&: \frE\otimes \frE\rightarrow \frE~,~~~|\eps^i_2|=-i~,
        \end{aligned}
    \end{equation}
    such that~\eqref{eq:hLie-relations} are satisfied for all $x_1,x_2,x_3\in \frE$. 
    
    Note that for $i=0$, the first three relations in~\eqref{eq:hLie-relations} amount to the relations for a differential graded Leibniz algebra with differential $\eps_1$ and Leibniz product $\eps_2^0$. If $\eps_2^1$ vanishes, then $\eps_2^0$ is graded antisymmetric, and the Leibniz bracket becomes Lie. If we restrict to the case $\eps_2^1=0$, we simply recover differential graded Lie algebras. If we restrict ourselves to 2-term $\ophLie_2$-algebras, i.e.~$\ophLie_2$-algebras concentrated in degrees~$-1$ and $0$, we obtain the hemistrict Lie 2-algebras of~\cite{Roytenberg:0712.3461} with a graded symmetric $\eps_2^1$. The latter map is a chain homotopy sometimes called the \uline{alternator}, capturing the failure of $\eps_2^0(x_1,x_2)$ to be antisymmetric:
    \begin{equation}
        \begin{aligned}
            &\eps^0_2(x_1,x_2)+(-1)^{|x_1|\,|x_2|}\eps^0_2(x_2,x_1)
            \\
            &\hspace{2cm}=\eps_1(\eps^1_2(x_1,x_2))
            +\eps^1_2(\eps_1(x_1),x_2)+(-1)^{|x_1|}\eps^1_2(x_1,\eps_1(x_2))~.
        \end{aligned}
    \end{equation}
    
    It is certainly tempting\footnote{This was attempted in a first version of this paper on the arXiv.} to look for generalizations of $\ophLie_2$-algebras that allow us to remove the subscript 2 and that have binary products $\eps_2^i$ for $i\in \IN$ or even $i\in \IZ$. One immediate goal, e.g., would be to recover the hemistrict weak Lie 3-algebras of~\cite{Dehling:1710.11104}. This, however, proves to be quite subtle: while the hemistrict Lie 2-algebras of~\cite{Roytenberg:0712.3461} are easily generalized to $\IZ$-graded vector spaces, this is not the case for the hemistrict weak Lie 3-algebras of~\cite{Dehling:1710.11104}. In particular, \ref{prop:associativity} is not satisfied. Even if we succeeded, it is already clear that 2-crossed modules of Lie algebras (the linearizations of the structures introduced in~\cite{Conduche:1984:155,Conduche:2003}), can not be captured in this way, as the Peiffer lifting has no corresponding structure, but requires rather a second product of degree $0$ besides $\eps_2^0$. Finally, it turns out that all our applications will merely require $\ophLie_2$-algebras, and we therefore content ourselves with these.
    
    We close this section with a useful observation: we can create a larger $\ophLie_2$-algebra by tensoring $\ophLie_2$-algebras with differential graded commutative algebras, just as in the case of Lie algebras, 
    \begin{proposition}\label{prop:tensor_product_dgca_hLie}
        The tensor product of a differential graded commutative algebra and an $\ophLie_2$-algebra carries a natural $\ophLie_2$-algebra structure.
    \end{proposition}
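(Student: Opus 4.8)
The plan is to write down candidate structure maps on $A\otimes\frE$, where $(A,\rmd,\cdot)$ is the differential graded commutative algebra and $(\frE,\eps_1,\eps_2^i)$ the $\ophLie$-algebra, and then to check the relations \eqref{eq:hLie-relations}. With the grading $|a\otimes x|=|a|+|x|$ I would set
\[
\begin{aligned}
\hat\eps_1(a\otimes x) &= \rmd a\otimes x+(-1)^{|a|}\,a\otimes\eps_1(x)~,\\
\hat\eps_2^{\,i}(a\otimes x,b\otimes y) &= (-1)^{|x|\,|b|+i(|a|+|b|)}\,(a\cdot b)\otimes\eps_2^i(x,y)~,
\end{aligned}
\]
so that $|\hat\eps_1|=1$ and $|\hat\eps_2^{\,i}|=-i$. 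The two Koszul signs are forced: $(-1)^{|x|\,|b|}$ from braiding the slot $x$ past $b$, and $(-1)^{i(|a|+|b|)}$ from carrying the degree~$-i$ operation $\eps_2^i$ through the $A$-factors. Both are automatic once one writes $\hat\eps_2^{\,i}$ as the composite of the symmetry isomorphism $(A\otimes\frE)^{\otimes2}\to(A\otimes A)\otimes(\frE\otimes\frE)$ with $(\,\cdot\,)\otimes\eps_2^i$.

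Conceptually, this is the standard statement that for \emph{any} symmetric differential graded operad $O$ the functor $A\otimes(-)$ sends $O$-algebras to $O$-algebras whenever $A$ is differential graded commutative: the iterated products $A^{\otimes n}\to A$ are equivariant for the (trivial) $\sfS_n$-action precisely by graded commutativity, hence they intertwine the symmetric group actions on $O(n)$, while $\rmd$ being a graded derivation makes the assignment compatible with the differential. Specializing $O=\ophLie$ proves the proposition abstractly. Since the later sections work with explicit multilinear formulas, however, I would also verify \eqref{eq:hLie-relations} directly. The relation $\hat\eps_1^2=0$ is immediate from $\rmd^2=0$, $\eps_1^2=0$ and the cancellation of the mixed term, and the second relation of \eqref{eq:hLie-relations} follows by expanding $\hat\eps_1\hat\eps_2^{\,i}$, splitting off the $A$-part with the Leibniz rule for $\rmd$, and invoking the corresponding identity in $\frE$.

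The substance is in the three quadratic relations. Here associativity of $\cdot$ guarantees that every term on both sides carries the \emph{same} $A$-factor $a_1\cdot a_2\cdot a_3$, independently of the bracketing, so the $A$-part factors out and the identity collapses onto the relation in $\frE$; the only subtlety is that the arguments appear permuted on the right, which forces the $a_k$ to be commuted past one another, each transposition producing a factor $(-1)^{|a_j|\,|a_k|}$ from graded commutativity. The main obstacle, and the one point demanding genuine care, is to confirm that these commutativity signs, combined with the braiding signs of $\hat\eps_2^{\,i}$ and including the $i(|a|+|b|)$ contributions, reproduce \eqref{eq:hLie-relations} verbatim but with each $|x_k|$ promoted to $|a_k\otimes x_k|=|a_k|+|x_k|$. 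I would organize this by computing the total Koszul sign of each monomial, reducing the $A$-contribution to the standard order $a_1 a_2 a_3$, and matching the resulting exponent term by term against the $\frE$-relation; a representative check, e.g.\ the $\eps_2^{i+1}\eps_2^{i-1}$ term of the third relation, shows that the cross terms $|a_j|\,|x_k|$ and the $i|a_k|$ pieces cancel as required, and the remaining quadratic relations follow identically.
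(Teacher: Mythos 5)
Your proposal takes essentially the same route as the paper: write down $\hat\eps_1$ and $\hat\eps_2^{\,i}$ on the tensor product and verify the relations \eqref{eq:hLie-relations} by direct computation, with the operadic remark serving only as conceptual backup. The one divergence is that your $\hat\eps_2^{\,i}$ carries the extra Koszul braiding sign $(-1)^{|x|\,|b|}$, which the paper's displayed formula omits; yours is the standard convention and is what makes the quadratic relations close verbatim with each $|x_k|$ promoted to $|a_k|+|x_k|$, so this is a refinement of the bookkeeping rather than a different method.
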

    \begin{proof}
        On the tensor product of a differential graded commutative algebra $(\frA,\rmd)$ and an $\ophLie_2$-algebra $\frE$, we define 
        \begin{equation}
            \begin{aligned}
                \hat \frE\coloneqq \frA\otimes \frE=\oplus_{k\in \IN}(\frA\otimes \frE)_k~,~~~(\frA\otimes \frE)_k=\sum_{i+j=k} \frA_i\otimes \frE_j~,
                \\
                \hat\eps_1(a_1\otimes x_1)=(\rmd a_1)\otimes x_1+(-1)^{|a_1|}a_1\otimes \eps_1(x_1)~,
                \\
                \hat\eps^i_2(a_1\otimes x_1,a_2\otimes x_2)=(-1)^{i(|a_1|+|a_2|)}(a_1a_2)\otimes \eps^i_2(x_1,x_2)~.
            \end{aligned}
        \end{equation}
        One then readily verifies the axioms~\eqref{eq:hLie-relations}.
    \end{proof}

    \subsection{The operad \texorpdfstring{$\opEilh_2$}{Eilh2}}\label{ssec:Eilh-algebras}
    
    It is one of our aims to construct homotopy $\ophLie_2$-algebras. For this, we need to have the Koszul-dual operad to $\ophLie_2$ at our disposal. The subtlety here is that the defining relations of the operad $\ophLie_2$ are not quadratic, as is the case e.g.~for the operads $\opLie$, $\opAss$, or $\opCom$, but quadratic-linear.

    \begin{definition}\label{def:operad_Eilh2}
        The operad $\opEilh_2$ is a differential graded operad endowed with binary operations $\oslash_i$ of degree\footnote{Our convention for identifying the correct Koszul sign is that we regard $\oslash_i$ as a binary function of degree~$i$. For example,
            \begin{equation}
                t_1 \oslash_i t_2\coloneqq \oslash_i(t_1,t_2)~,~~~(-\oslash_i (-\oslash_j-))(t_1,t_2,t_3)=(-1)^{j|t_1|}t_1\oslash_i (t_2\oslash_j t_3)~.
            \end{equation}} $i\in \{0,1\}$ that satisfy the following identities:
        \begin{equation}\label{eq:sym_oslash_1}
            a\oslash_1 b=(-1)^{|a|\,|b|}b\oslash_1 a~,
        \end{equation}
        \begin{equation}\label{eq:Eilh-relations}
            \begin{aligned}
                a\oslash_i(b\oslash_i c)&=-(-1)^{i(|a|+1)}((a\oslash_i b)\oslash_i c+(-1)^{|a|\,|b|}(b\oslash_i a)\oslash_i c)~,
                \\
                a\oslash_0(b\oslash_1 c)&=(-1)^{|a|}(a \oslash_0 b)\oslash_1 c~,
                \\
                a\oslash_1(b\oslash_0 c)&=(-1)^{|a|\,|b|}(b \oslash_0 a)\oslash_1 c~.
            \end{aligned}
        \end{equation}
        The differential fulfills a deformed Leibniz rule given by
        \begin{equation}\label{eq:def_Leibniz}
            \begin{aligned}
                Q(a\oslash_i b)&=(-1)^i\big((Qa)\oslash_i b+(-1)^{|a|}a\oslash_i Qb\big)\\
                &\hspace{1cm}+(-1)^i (a \oslash_{i+1}b)-(-1)^{|a|\,|b|} (b\oslash_{i+1} a)
            \end{aligned}
        \end{equation}
        with $\oslash_2\coloneqq0$.
    \end{definition}
    
    We then have the following relation between $\ophLie_2$ and $\opEilh_2$
    \begin{proposition}
        The operad $\opEilh_2$ is the Koszul dual of the operad $\ophLie_2$.
    \end{proposition}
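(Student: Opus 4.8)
The plan is to apply the standard recipe for the Koszul dual of a binary quadratic operad in the differential graded setting, following \cite{Loday:2012aa,Vallette:1202.3245}. I would regard $\ophLie$ as a graded quadratic operad $\caF(E)/(R)$ equipped with an internal differential $\partial$. Here $E=\bigoplus_{i\in\IN}E_i$ is the space of generating operations, with $E_i$ the degree-$i$ part spanned by $\eps_2^i$ and its transpose $\eps_2^i\circ\tau$; since the $\eps_2^i$ carry no permutation symmetry, each $E_i$ is a free (regular) $\IK[S_2]$-module. The quadratic relations $R\subseteq\caF(E)(3)^{(2)}$ are exactly the last three lines of \eqref{eq:hLie-relations}, while the second line of \eqref{eq:hLie-relations} is read as the definition of the internal differential on generators, namely $\partial$ sends $\eps_2^i$ to the antisymmetrisation of $\eps_2^{i-1}$, and the first line records $\partial^2=0$. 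By definition $\ophLie^!=\caF(E^\vee)/(R^\perp)$, carrying the dual internal differential $\partial^!$, so the statement amounts to identifying this quotient with $\opEilh$.

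First I would identify $E^\vee$ with the generating space of $\opEilh$. Dualizing a regular $\IK[S_2]$-module yields a regular module, and the sign twist $E^\vee=E^*\otimes\mathrm{sgn}$ built into the Koszul-dual recipe merely relabels the two orderings with a sign; after fixing the operadic suspension conventions so that the degree-$i$ generator $\eps_2^i$ corresponds to a degree-$i$ dual generator, I name the latter $\oslash_i$. This sign twist is precisely what turns the absence of symmetry of $\eps_2^i$ into the signed relabellings visible in \eqref{eq:Eilh-relations}. This step is representation-theoretic bookkeeping and presents no real difficulty.

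The core of the proof is the orthogonality computation on the weight-two, arity-three component. A basis of $\caF(E)(3)^{(2)}$ is given by the tree monomials $\eps_2^i(\eps_2^j(x_{\sigma 1},x_{\sigma 2}),x_{\sigma 3})$ and $\eps_2^i(x_{\sigma 1},\eps_2^j(x_{\sigma 2},x_{\sigma 3}))$ for $\sigma\in S_3$ and $i,j\in\IN$, and it pairs with $\caF(E^\vee)(3)^{(2)}$ through the canonical pairing, which is nonzero only on dual tree shapes and weighted by the appropriate Koszul sign. I would then verify, sector by sector in the index pair $(i,j)$, that every relation among the last three lines of \eqref{eq:hLie-relations} annihilates every relation of \eqref{eq:Eilh-relations}, and conclude from the dimension count $\dim R+\dim R^\perp=\dim\caF(E)(3)^{(2)}$ in each graded piece that $R^\perp$ is spanned exactly by the relations \eqref{eq:Eilh-relations}. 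Since each fixed value of the indices cuts out a finite-dimensional subspace, this reduces to a uniform finite linear-algebra check, with the case distinctions $j-i=1$, $j-i=2$, $j-i>2$ in \eqref{eq:Eilh-relations} mirroring those among the relations of \eqref{eq:hLie-relations}.

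Finally I would match the internal differentials. Because $\partial$ sends $\eps_2^i$ to the antisymmetrisation of $\eps_2^{i-1}$, its Koszul dual $\partial^!=Q$ must reverse the shift and send $\oslash_i$ to $\oslash_{i+1}$; I would check that the induced $\partial^!$ is exactly the deformed Leibniz rule \eqref{eq:def_Leibniz}, and that it squares to zero and is a derivation of the dual products, both of which are forced by dg-operadic Koszul duality. The main obstacle is the sign and index bookkeeping in the orthogonality step for the off-diagonal relations, where the index-shifting differential couples neighbouring sectors; the delicate point is to confirm that the dimension count closes across all index values simultaneously, so that $R^\perp$ is neither larger nor smaller than the stated family \eqref{eq:Eilh-relations}.
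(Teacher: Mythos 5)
Your plan is the standard quadratic-dual computation that the paper itself invokes: read off the generators $E=\bigoplus_i E_i$, the quadratic relations $R$ and the internal differential of $\ophLie$, form $\caF(E^\vee)/(R^\perp)$ with the transposed differential, and reduce the identification with $\opEilh$ to a finite orthogonality and dimension check within each total index weight (where the relations and the index-shifting differential couple the individual $(i,j)$-sectors). The paper's own proof consists precisely of asserting that this computation goes through (``preferably using a computer algebra program''), so your proposal takes the same approach and is, if anything, more explicit about where the bookkeeping lives.
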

    This fact can be established by computation, preferably using a computer algebra program, see also~\cite[Section 7.6.4]{Loday:2012aa} for details. Instead of this, let us construct the analogue of the Chevalley--Eilenberg algebra for an $\ophLie_2$-algebra by constructing a semifree $\opEilh_2$-algebra.
    
    We start from an $\ophLie_2$-algebra $\frE$ which we assume for clarity of the discussion to have an underlying degree-wise dualizable vector space with basis $(\tau_\alpha)$. We will make this assumption for all the graded vector spaces from here on, mostly for pedagogical reasons: it allows us to give very explicit formulas. More generally, one may want to work with (graded) pseudocompact vector spaces, cf.~the discussion in~\cite[\S 1.1]{Guan:1909.11399}.
    
    Consider the graded vector space\footnote{Our convention for degree shift is the common one, $V[k]\coloneqq \bigoplus_i V[k]_i$ with $V[k]_i\coloneqq V_{k+i}$, implying that $V[k]$ is the graded vector space $V$ shifted by $-k$.} $V\coloneqq \frE[1]^*$ and together with its free, non-associative tensor algebra $\bigoslash^\bullet_\bullet V$ with respect to both products $\oslash_0$ and $\oslash_1$ simultaneously, and taking into account the symmetry of $\oslash_1$. The quadratic identities~\eqref{eq:Eilh-relations} allow us to rearrange all tensor products such that they are nested from left to right. Thus, we define 
    \begin{equation}
        \begin{aligned}
            {\caE(V)}&\coloneqq \IR~\oplus~V~\oplus~\bigoplus_{i\in \{0,1\}}V\oslash_i V~\oplus~\bigoplus_{i,j\in \{0,1\}}(V\oslash_i V)\oslash_j V~\oplus~\dotsb
            \\
            &\eqqcolon\caE^{(0)}(V)~\oplus~\caE^{(1)}(V)~\oplus~\caE^{(2)}(V)~\oplus~\caE^{(3)}(V)~\oplus~\dotsb~.
        \end{aligned}
    \end{equation}
    We also define the reduced tensor algebra 
    \begin{equation}\label{eq:red_ten_algebra}
        \overline{\caE(V)}\coloneqq V~\oplus~\bigoplus_{i\in \{0,1\}}V\oslash_i V~\oplus~\bigoplus_{i,j\in \{0,1\}}(V\oslash_i V)\oslash_j V~\oplus~\dotsb~,
    \end{equation}    
    which is, in fact, sufficient for the description of $\ophLie_2$-algebras. We will call an $\opEilh_2$-algebra of the form $(\caE(V),Q)$ for some graded vector space $V$ without any restrictions on the products $\oslash_i$ beyond~\eqref{eq:Eilh-relations} \uline{semifree}, since the only further implied relations are due to the application of the differential $Q$.
    
    Semifree $\opEilh_2$-algebras yield the Chevalley--Eilenberg description of $\ophLie_2$-algebras:
    \begin{definition}
        The \uline{Chevalley--Eilenberg algebra} $\sfCE(\frE)$ of an $\ophLie_2$-algebra $\frE$ with basis $(\tau_\alpha)$ whose differential and binary products are given by 
        \begin{equation}
            \begin{aligned}
                \eps_1&: \frE\rightarrow \frE~,~~~&\tau_\alpha&\mapsto m^\beta_\alpha \tau_\beta~,~~~&&|\eps_1|=1~,
                \\
                \eps^i_2&: \frE\otimes \frE\rightarrow \frE~,~~~&\tau_\alpha\otimes \tau_\beta&\mapsto m^{i,\gamma}_{\alpha\beta}\tau_\gamma~,~~~&&|\eps^i_2|=-i
            \end{aligned}
        \end{equation}
        for some $m^\beta_\alpha$ and $m^{i,\gamma}_{\alpha\beta}$ taking values in the underlying ground field is the semifree $\opEilh_2$-algebra $\overline{\caE(V)}$ with $V=\frE[1]^*$ and the differential
        \begin{equation}\label{eq:Q_hLie}
            Q t^\alpha=-(-1)^{|\beta|}m^\alpha_\beta t^\beta-(-1)^{i(|\beta|+|\gamma|)+|\gamma|(|\beta|-1)}\,m^{i,\alpha}_{\beta\gamma}\,t^\beta \oslash_i t^\gamma~,
        \end{equation}
        where $t^\alpha$ denotes a basis of $V=\frE[1]^*$, which is grade-shifted dual to $\tau_\alpha$. Moreover, $|\beta|$ is shorthand for $|t^\beta|$, the degree of $t^\beta$ in $V$.
    \end{definition}
    \noindent Note that~\eqref{eq:Q_hLie} makes the reason for the graded antisymmetry~\eqref{eq:sym_oslash_1} of $\oslash_1$ obvious.    
    
    Helpful for further computations is now the following lemma.
    \begin{lemma}\label{thm:restriction}
        The equation $Q^2=0$ on a semifree $\opEilh_2$-algebra $\caE(V)$ is equivalent to $Q^2V=0$. 
    \end{lemma}
    \begin{proof}
        Using the deformed Leibniz rule~\eqref{eq:def_Leibniz}, we have
        \begin{equation}
            Q^2(a\oslash_i b)=(Q^2 a)\oslash_i b+a\oslash_i (Q^2 b)~,
        \end{equation}
        and iterating this relation, we obtain the desired result. 
    \end{proof}
    
    In the case of Lie algebras and $L_\infty$-algebras, the (homotopy) Jacobi relations are equivalent to a nilquadratic differential in the corresponding Chevalley--Eilenberg algebra, and this is also the case here:
    \begin{proposition}
        The equation $Q^2=0$ for the differential of the Chevalley--Eilenberg algebra of an $\ophLie_2$-algebra $\frE$ is equivalent to the relations~\eqref{eq:hLie-relations} for $\frE$.
    \end{proposition}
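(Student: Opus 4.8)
The plan is to extend the degree-one map $Q$ of \eqref{eq:Q_hLie} from the generators $V=\frE[1]^*$ to all of $\caE(V)$ by the deformed Leibniz rule \eqref{eq:def_Leibniz}, and then to compute $Q^2 t^\alpha$ on a generator directly. First I would reduce to the generators: because $\caE(V)$ is semifree, i.e.\ freely generated as an $\opEilh$-algebra by $V$, the differential $Q$, and with it $Q^2$, is determined by its restriction to $V$, so that $Q^2=0$ on $\caE(V)$ is equivalent to $Q^2 t^\alpha=0$ for every generator $t^\alpha$. The task is then to expand $Q^2 t^\alpha$ and to read off the resulting conditions on the structure constants $m^\alpha_\beta$ and $m^{i,\alpha}_{\beta\gamma}$.

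I would organise the expansion by word length, using the decomposition $\caE(V)=\caE^{(0)}(V)\oplus\caE^{(1)}(V)\oplus\caE^{(2)}(V)\oplus\dotsb$. The key observation is that $Q$ itself splits into a word-length--preserving piece, coming from the linear term $\propto m^\alpha_\beta$, and a word-length--raising piece, coming from the quadratic term $\propto m^{i,\alpha}_{\beta\gamma}$. Consequently $Q^2 t^\alpha$ has a component in $\caE^{(1)}(V)$ (two linear pieces), one in $\caE^{(2)}(V)$ (a linear and a quadratic piece in either order, together with the two index-raising terms $a\oslash_{i+1}b$ and $b\oslash_{i+1}a$ appearing in \eqref{eq:def_Leibniz}), and one in $\caE^{(3)}(V)$ (two quadratic pieces). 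Since the left-normal-ordered monomials $t^\beta$, $t^\beta\oslash_i t^\gamma$ and $(t^\beta\oslash_i t^\gamma)\oslash_j t^\delta$ are linearly independent in $\caE(V)$, the vanishing of $Q^2 t^\alpha$ is equivalent to the separate vanishing of each of these three components, and it is this coefficient-by-coefficient matching that produces the ``if and only if'' of the statement.

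Matching the three components against \eqref{eq:hLie-relations} is then a bookkeeping exercise. The $\caE^{(1)}(V)$-component collects only $m^\alpha_\beta m^\beta_\gamma$ and reproduces $\eps_1\circ\eps_1=0$, the first relation. The $\caE^{(2)}(V)$-component combines the quadratic-from-linear piece, the piece in which the internal differential acts inside $\eps^i_2$, and the two index-raising terms; since the latter carry $\oslash_{i+1}$, it is the products $\eps^{i-1}_2$ that feed into the coefficient of $t^\beta\oslash_i t^\gamma$, so that this coefficient mixes $\eps^i_2$ and $\eps^{i-1}_2$ exactly as in the second relation (with the convention $\eps^{-1}_2=0$ covering the boundary term). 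Finally, the $\caE^{(3)}(V)$-component contains terms of the form $a\oslash_i(b\oslash_j c)$, which are not yet left-normal-ordered; rewriting them by means of the quadratic identities \eqref{eq:Eilh-relations} is precisely what splits the computation into the three cases $j=i$, $j<i$ and $j>i$, and the index shifts $\oslash_j\mapsto\oslash_{j\pm1}$ on the right-hand sides of \eqref{eq:Eilh-relations} generate the $\eps^{i+1}_2$- and $\eps^{i-1}_2$-terms appearing in the third, fourth and fifth relations of \eqref{eq:hLie-relations}.

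I expect the genuine obstacle to lie not in the combinatorial structure, which is dictated by the word-length grading and by \eqref{eq:Eilh-relations}, but in the Koszul sign bookkeeping. The passage from the structure constants of the $\eps$'s on $\frE$ to the coefficients of $Q$ on $V=\frE[1]^*$ involves the degree shift by one, and the signs in both \eqref{eq:def_Leibniz} and \eqref{eq:Eilh-relations} have to be tracked so that the prefactors emerging in each component agree exactly with those displayed in \eqref{eq:hLie-relations}; the explicit factor $(-1)^{i(|\beta|+|\gamma|)+|\gamma|(|\beta|-1)}$ in \eqref{eq:Q_hLie} is chosen for precisely this purpose. In practice I would fix all sign conventions once and then verify the matching index by index, most efficiently with the aid of a computer algebra program, as already suggested for the Koszul-duality statement preceding this proposition.
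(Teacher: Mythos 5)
Your proposal is correct and follows the same route as the paper, which records the proof only as ``a straightforward but tedious verification\dots better left to a computer algebra program''; your decomposition of $Q^2t^\alpha$ by word length into components in $\caE^{(1)}(V)$, $\caE^{(2)}(V)$ and $\caE^{(3)}(V)$, with the normal-ordering via \eqref{eq:Eilh-relations} producing the three Jacobi-type cases, is exactly the right way to organize that verification. The one step worth justifying rather than asserting is the reduction to generators: $Q^2|_V=0$ implies $Q^2=0$ because the index-raising terms of \eqref{eq:def_Leibniz} cancel in the square, so that $Q^2(a\oslash_i b)=(Q^2a)\oslash_i b+a\oslash_i(Q^2b)$, which is precisely the content of the paper's \ref{thm:restriction}.
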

    \begin{proof}
        Using \ref{thm:restriction}, the proof is a straightforward computation starting from equation~\eqref{eq:Q_hLie} and applying $Q$ once more.
    \end{proof}
    
    \subsection{Cohomology of semifree \texorpdfstring{$\opEilh_2$}{Eilh2}-algebras}\label{ssec:cohomology_Eilh}
    
    An important tool in studying Lie algebras is Lie algebra cohomology, and we consider the generalization to $\ophLie_2$-algebras in the following. As we saw above, this amounts to the cohomology of semifree $\opEilh_2$-algebras. Due to the deformation of the usual Leibniz rule to~\eqref{eq:def_Leibniz}, a subtle and important point arises. For ordinary differential graded algebras, the cohomology again carries the structure of a differential graded algebra of the same type, with product induced by the product on the original algebra. In particular, the product of two cocycles is again a cocycle. The deformation of the Leibniz rules can now evidently break this connection.   
    
    To start, let us consider the cohomology of the semifree $\opEilh_2$-algebra $(\caE(V),Q_0)$ with the most trivial differential $Q_0$ satisfying
    \begin{equation}
        \begin{aligned}
            Q_0(v)&=0~,
            \\
            Q_0(a\oslash_i b)&=(-1)^i\big((Q_0a)\oslash_i b+(-1)^{|a|}a\oslash_i Q_0b\big)
            \\
            &\hspace{1cm}+(-1)^i (a \oslash_{i+1}b)-(-1)^{|a|\,|b|} (b\oslash_{i+1} a)
        \end{aligned}
    \end{equation}
    for all $v\in V$ and $a,b\in \caE(V)$. 
    
    It is useful to decompose the product $\oslash_0$ into even and odd parts:
    \begin{equation}\label{eq:decomposition}
        a\oslash_0 b=\tfrac12(a \oslash_0^+ b+ a\oslash_0^- b)
    \end{equation} 
    for $a,b\in \caE(V)$ with
    \begin{equation}
        a \oslash_0^\pm b=a\oslash_0 b\pm(-1)^{|a|\,|b|}b\oslash_0 a~.
    \end{equation}
    We note that
    \begin{equation}\label{eq:elementary_Q0}
        Q_0(v_1\oslash_0^+ v_2)=0
        \eand
        Q_0(v_1\oslash_0^- v_2)=v_1\oslash_1 v_2
    \end{equation} 
    for $v_{1,2}\in V$.
    
    Furthermore, we have the following lemma:
    \begin{lemma}\label{lem:embedding}
        Define an embedding map $\sfE_0:\bigodot^\bullet V\mapsto \caE(V)$ by 
        \begin{equation}\label{eq:embedding_map}
            \sfE_0(v_1\odot \dotsb \odot v_n)\coloneqq \sum_{\sigma\in S_n}(\dotsb(v_{\sigma(1)}\oslash_0 v_{\sigma(2)})\oslash_0\dotsb )\oslash_0 v_{\sigma(n)}~.
        \end{equation}
        For $n\geq 2$, we have 
        \begin{equation}
            \sfE_0(v_1\odot \dotsb \odot v_n)=(((v_1\oslash_0^+v_2)\oslash_0^+v_3)\ldots)\oslash_0^+ v_n
        \end{equation} 
    \end{lemma}
    \begin{proof}
        The proof proceeds by induction. The case $n=2$ is obviously true. Consider now 
        \begin{equation}
            \begin{aligned}
                &(((v_1\oslash_0^+v_2)\oslash_0^+v_3)\ldots)\oslash_0^+ v_n)\oslash_0^+ v_{n+1}
                \\
                &~~~=
                \left(\sum_{\sigma\in S_n}(\dotsb(v_{\sigma(1)}\oslash_0 v_{\sigma(2)})\oslash_0\dotsb )\oslash_0 v_{\sigma(n)}\right)\oslash_0^+ v_{n+1}~.
            \end{aligned}
        \end{equation}
        Substituting the definition of $\oslash_0^+$, we can iteratively apply the first relation in~\eqref{eq:Eilh-relations}. This distributes $v_{n+1}$ into all possible positions together with the correct Koszul sign.
    \end{proof}
    
    This lemma now allows us to prove an important result.
    \begin{proposition}\label{prop:Q0_cohomology}
        The $Q_0$-cohomology of $(\caE(V),Q_0)$ is the vector space\footnote{Here, $\odot$ denotes the symmetrized tensor product.} $\sfE_0(\bigodot^\bullet V)$, with $\sfE_0$ defined in~\eqref{eq:embedding_map}. This vector space carries the evident structure of a differential graded commutative algebra.
    \end{proposition}
    \begin{proof}
        First, consider an element in $\caE(V)$ of the form
        \begin{equation}
            (((v_1\oslash_0 v_2)\oslash_0 v_3)\ldots)\oslash_0 v_n~.
        \end{equation} 
        We decompose each product into even and odd parts $\oslash_0^\pm$. Because of~\eqref{eq:elementary_Q0}, it is clear that the part
        \begin{equation}
            \sfE_0(v_1\odot \dotsb \odot v_n)=(((v_1\oslash_0^+v_2)\oslash_0^+v_3)\ldots)\oslash_0^+ v_n
        \end{equation} 
        is in the kernel of $Q_0$, while all other parts will be mapped to elements involving one product $\oslash_1$.
        
        Beyond this, the kernel of $Q_0$ consists of sums of terms with at least one product $\oslash_1$ and all other products $\oslash_0^+$, e.g.
        \begin{equation}
            a=(((v_1\oslash_0^+ v_2)\oslash_1 v_3)\oslash_0^+ v_4)\oslash_1 v_5
        \end{equation} 
        with $v_{1,2,3,4,5}\in V$. Such terms, however, are easily seen to be in the image of $Q_0$: to obtain a pre-image, one replaces one product $\oslash_1$ by $\oslash_0^-$. For example, we have 
        \begin{equation}
            \begin{aligned}
                a&=Q\left(\tfrac14(((v_1\oslash_0^+ v_2)\oslash_0^- v_3)\oslash_0^+ v_4)\oslash_1 v_5\right)
                \\
                &=Q\left(-\tfrac14(((v_1\oslash_0^+ v_2)\oslash_1 v_3)\oslash_0^+ v_4)\oslash_0^- v_5\right)~.
            \end{aligned}
        \end{equation}
        Hence, there is no new contribution to cohomology.
        
        The kernel then necessarily consists also of the difference of the possible pre-images, which are again in the image of $Q_0$, and again the pre-images are obtained by replacing one product $\oslash_1$ by $\oslash_0^-$, again not leading to any new contribution to cohomology. Iterating this argument, we see that the cohomology can indeed be identified with the image of $\sfE_0$.
    \end{proof}
    
    \Cref{prop:Q0_cohomology} together with the usual arguments underlying a general Hodge--Ko\-daira decomposition\footnote{see e.g.~\cite{Weibel:1994aa}} then yield the following theorem:
    \begin{theorem}
        Consider the trivial semifree $\opEilh_2$-algebra from \ref{prop:Q0_cohomology}. Then we have the following contracting homotopy between $\caE(V)$ and the differential graded commutative algebra $(\bigodot^\bullet V,0)$:
        \begin{subequations}\label{eq:contracting_hom_Q0}
            \begin{equation}
                \begin{tikzcd}
                    \ar[loop,out=194,in= 166,distance=20,"\sfH_0"](\caE(V),Q_0)\arrow[r,shift left]{}{\sfP_0} & (\bigodot^\bullet V,0) \arrow[l,shift left]{}{\sfE_0}~,
                \end{tikzcd}
            \end{equation}
            with the projection defined by
            \begin{equation}\label{eq:projector_P0}
                \begin{aligned}
                    \sfP_0(v_1)&=v_1
                    \\
                    \sfP_0\big(((v_0\oslash_{i_1} v_2)\oslash_{i_2} v_3)\dotsb \oslash_{i_n} v_n\big)&=\begin{cases}
                        \tfrac{1}{(n+1)!}v_0\odot v_1\odot\dotsb\odot v_n& i_1=\dotsb=i_n=0~,
                        \\
                        0 & \mbox{else}
                    \end{cases}
                \end{aligned}
            \end{equation}
            for all $v_0,\ldots, v_n\in V$, such that
            \begin{equation}\label{eq:homotopy_transfer_relations}
                \begin{gathered}
                    \sfH_0\circ \sfH_0= \sfH_0\circ  \sfE_0=0~,~~~ \sfP_0\circ  \sfH_0=0~,
                    \\
                    \sfid_{\caE(V)}- \sfE_0\circ  \sfP_0= \sfH_0\circ  Q_0+ Q_0\circ  \sfH_0~.
                \end{gathered}
            \end{equation}
        \end{subequations}
    \end{theorem}
    \noindent An explicit form of the homotopy $\sfH_0$ is the following:
    \begin{equation}\label{eq:homotopy_H0}
        \begin{aligned}
            \sfH_0(a\oslash_0^-v)&\coloneqq 0~,
            \\
            \sfH_0(a\oslash_0^+v)&\coloneqq \sfH_0(a)\oslash_0^+v~,
            \\
            \sfH_0(a\oslash_1 v)&=\coloneqq \tfrac14 a\oslash_0^- v~,
            \\
            \sfH_0(v)&=0
        \end{aligned}
    \end{equation}
    for all $a\in\caE(V)$ and $v\in V$, which is easily checked by computation.
    
    We note that such a contracting homotopy for ordinary differential graded algebras often induces an algebra morphism $\Phi\coloneqq \sfE_0\circ \sfP_0$. This is not the case here, as clearly $\Phi(a)\oslash_i\Phi(b)\neq \Phi(a\oslash_i b)$ in general. We shall return to this point in~\ref{ssec:EL_infty_and_L_infty}.
    
    The contracting homotopy~\eqref{eq:contracting_hom_Q0} has a number of important generalizations and applications. Here, we note that it evidently extends to differentials $Q_{\rm lin}=Q_0+\rmd$, where $\rmd:V\rightarrow V$ and $\rmd$ satisfies the ordinary Leibniz rules on $\caE(V)$ and $\bigodot^\bullet V$:
    \begin{equation}\label{eq:homotopy_for_transfer_EL_to_L}
        \begin{tikzcd}
            \ar[loop,out=194,in= 166,distance=20,"\sfH_0"](\caE(V),Q_0+\rmd)\arrow[r,shift left]{}{\sfP_0} & (\bigodot\nolimits^\bullet V,\rmd) \arrow[l,shift left]{}{\sfE_0}~.
        \end{tikzcd}
    \end{equation}
    Moreover, if we have a differential $\rmd:\bigodot^\bullet V\rightarrow \bigodot^\bullet V$ non-linear in $\odot$, we still have a corresponding contracting homotopy
    \begin{equation}
        \begin{tikzcd}
            \ar[loop,out=194,in= 166,distance=20,"\sfH_0"](\caE(V),Q_0+Q_1)\arrow[r,shift left]{}{\sfP_0} & (\bigodot\nolimits^\bullet V,\rmd) \arrow[l,shift left]{}{\sfE_0}
        \end{tikzcd}
    \end{equation}
    with
    \begin{equation}
        Q_1=\sfE_0\circ \rmd\circ \sfP_0~.
    \end{equation}
    We therefore arrive at the following statement:
    \begin{theorem}\label{thm:lift_dgca_to_Eilh}
        Any semifree differential graded commutative algebra $(\bigodot^\bullet V,\rmd)$ gives rise to the semifree $\opEilh_2$-algebra $(\caE(V),Q_0+\sfE_0\circ \rmd\circ \sfP_0)$. 
    \end{theorem}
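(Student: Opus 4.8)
The plan is to check that $(\caE(V),Q)$, with $Q=Q_0+Q_1$ and $Q_1=\sfE_0\circ\rmd\circ\sfP_0$, meets the two requirements for a semifree $\opEilh$-algebra: the products $\oslash_i$ satisfy the quadratic relations~\eqref{eq:Eilh-relations}, which is automatic because $\caE(V)$ is the free $\opEilh$-algebra on $V$, and $Q$ is an $\opEilh$-differential, i.e.\ it obeys the deformed Leibniz rule~\eqref{eq:def_Leibniz} and squares to zero. Since any differential on a semifree $\opEilh$-algebra is determined by its restriction to the generators $V$ and then extended off $V$ through~\eqref{eq:def_Leibniz}, and since $Q_0|_V=0$ and $\sfP_0|_V=\sfid$ give $Q|_V=\sfE_0\circ\rmd|_V$, I would read the displayed formula as fixing $Q$ on $V$ and take~\eqref{eq:def_Leibniz} as its definition elsewhere. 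The deformed Leibniz rule then holds by construction, and by~\ref{thm:restriction} the whole of $Q^2=0$ reduces to the single condition $Q^2V=0$.

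The clean ingredients are already in hand: by~\ref{prop:Q0_cohomology} the image of $\sfE_0$ is exactly the space of $Q_0$-cocycles, so $Q_0\circ\sfE_0=0$; moreover $\sfP_0\circ\sfE_0=\sfid$ and $\rmd^2=0$. The step I would isolate as the heart of the argument is the claim that $\sfE_0$ is a cochain map, $Q\circ\sfE_0=\sfE_0\circ\rmd$. Granting this, for every $v\in V$ one has $Q^2v=Q(\sfE_0\,\rmd v)=\sfE_0(\rmd^2 v)=0$, which is precisely the input that~\ref{thm:restriction} needs, so $Q^2=0$ on all of $\caE(V)$. The same cochain-map property makes $\sfE_0$ a quasi-isomorphism with retraction $\sfP_0$, so that $(\caE(V),Q)$ is equivalent to $(\bigodot^\bullet V,\rmd)$, as anticipated by the contracting homotopy~\eqref{eq:homotopy_for_transfer_EL_to_L}.

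The hard part is the cochain-map lemma $Q\circ\sfE_0=\sfE_0\circ\rmd$, which I would prove by induction on the symmetric word length $n$. The base case $n=1$ is the definition of $Q|_V$. For the inductive step one writes $\sfE_0(w\odot v)$, using the quadratic relations~\eqref{eq:Eilh-relations}, as a fully left-nested graded-symmetric combination built from $\sfE_0 w$ and $v$, applies $Q$ through~\eqref{eq:def_Leibniz}, and invokes the inductive hypothesis $Q(\sfE_0 w)=\sfE_0\,\rmd w$. Two cancellations then have to occur: the deformation terms $\oslash_{i+1}$ generated by~\eqref{eq:def_Leibniz} must cancel pairwise on the symmetric image of $\sfE_0$, which I have verified at quadratic order, and the ordinary-derivation terms must reassemble, via the same shuffle rearrangement, into $\sfE_0(\rmd w\odot v\pm w\odot\rmd v)=\sfE_0\,\rmd(w\odot v)$. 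This reorganization, pushing right-nested products to left-nested ones and recognizing the resulting full symmetrization, is exactly the delicate ``apply~\eqref{eq:Eilh-relations} $d-2$ times'' manipulation already met in the proof of~\ref{prop:Q0_cohomology}, and is where essentially all of the labour sits. I would attempt to tame it either by a normal-form bookkeeping of the shuffles or, failing a clean closed form, by delegating the order-by-order verification to a computer algebra system, in keeping with the other structural checks in this section.
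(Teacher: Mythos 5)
Your argument is correct, but it is organised quite differently from the paper's. The paper treats \ref{thm:lift_dgca_to_Eilh} as an immediate consequence of the contracting homotopy \eqref{eq:homotopy_for_transfer_EL_to_L} together with the perturbation-style formula $Q_1=\sfE_0\circ\rmd\circ\sfP_0$, and offers no further verification; in particular it does not address the point you correctly isolate, namely that squaring to zero of the operator $Q_0+\sfE_0\circ\rmd\circ\sfP_0$ (which follows in one line from $\sfP_0\circ\sfE_0=\sfid$, $Q_0\circ\sfE_0=0$ and $\sfP_0\circ Q_0=0$) is not the same as the statement that the result is a semifree $\opEilh$-algebra, i.e.\ that the differential obeys the deformed Leibniz rule \eqref{eq:def_Leibniz}. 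Your reading --- fix $Q$ on the generators, where it equals $\sfE_0\circ\rmd|_V$, extend by \eqref{eq:def_Leibniz}, reduce $Q^2=0$ to $Q^2V=0$ via \ref{thm:restriction}, and reduce that to the cochain-map identity $Q\circ\sfE_0=\sfE_0\circ\rmd$ --- is the honest route, and it makes explicit exactly where the labour sits. Two small refinements: the "image of $\sfE_0$" is a space of representatives of the $Q_0$-cohomology rather than all cocycles, but the only fact you use, $Q_0\circ\sfE_0=0$, is indeed part of \ref{prop:Q0_cohomology}; and for the same reason the pairwise cancellation of the $\oslash_{i+1}$ deformation terms on the image of $\sfE_0$ need not be re-verified order by order --- it is precisely $Q_0\circ\sfE_0=0$ again, so only the reassembly of the derivation terms into $\sfE_0(\rmd(w\odot v))$ via the left-nesting rearrangement of \eqref{eq:Eilh-relations} remains to be checked in the induction. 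That combinatorial step is genuinely open in your write-up, but it is of the same nature as the verifications the paper itself delegates to computer algebra, and flagging it while pinning down the precise identity to be proved is a net improvement on the paper's one-line justification.
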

    
    We will return to this thread of our discussion later.
    
    \section{\texorpdfstring{$E_2L_\infty$}{E2L-infinity}-algebras, their morphisms, and structure theorems}\label{sec:EL_infty}
    
    We can now define the homotopy algebras of $\ophLie_2$-algebras, generalizing $L_\infty$-algebras.
    
    \subsection{\texorpdfstring{$E_2L_\infty$}{E2L-infinity}-algebras and examples}
    
    \begin{definition}\label{def:E2Linfty}
        The differential graded operad \(\opELinfty=\sfT(s^{-1}\overline{\ophLie_2^\antishriek})\) is the cobar construction applied to the Koszul-dual cooperad \(\ophLie_2^\text{¡}\). An algebra over it is called an \underline{$E_2L_\infty$-algebra}.
    \end{definition}
    
    This abstract definition needs to be unwrapped to become useful. To get an explicit handle on $E_2L_\infty$-algebras, we consider their Chevalley--Eilenberg algebras. For clarity, we restrict ourselves again to the cases where the graded vector space $\frE$ comes with a nice basis and can be dualized degree-wise, cf.~\ref{ssec:Eilh-algebras}. Then an $E_2L_\infty$-algebra structure on a graded vector space $\frE$ is encoded in a nilquadratic differential on the $\opEilh_2$-algebra $\sfCE(\frE)\coloneqq \caE(V)$ for $V=\frE[1]^*$. 
    
    Clearly, the differential $Q$ is fully specified by its action on $V$. With respect to a basis $(t^\alpha)$ on $V$, we encode this action in structure constants $m$ taking values in the ground field as follows:
    \begin{equation}\label{eq:structure_constants_ELinfty}
        Q t^\alpha=\pm m^\alpha\pm m^\alpha_\beta t^\beta~\pm~m^{i_1,\alpha}_{\beta_1\beta_2}~t^{\beta_1}\oslash_{i_1} t^{\beta_2}~\pm~m^{i_1i_2,\alpha}_{\beta_1\beta_2\beta_3}~(t^{\beta_1}\oslash_{i_1} t^\gamma)\oslash_{i_2} t^\delta+\ldots~.
    \end{equation}
    Here the choice of signs $\pm$ is a convention\footnote{For $L_\infty$-algebras, we follow the conventions of~\cite{Jurco:2018sby} for the structure constants and the differential in the Chevalley--Eilenberg algebra.}, and we shall be more explicit below, cf.\linebreak also~\eqref{eq:Q_hLie}. The structure constants $m$ define \uline{higher products},
    \begin{equation}
        \begin{gathered}
            \eps^I_n:\frE^{\otimes n}\rightarrow \frE~,
            \\
            \eps_0=m^\alpha \tau_\alpha~,~~~
            \eps_1(\tau_\alpha)=m^\beta_\alpha \tau_\beta~,~~~\eps_2^i(\tau_\alpha,\tau_\beta)=m^{i,\gamma}_{\alpha\beta}\tau_\gamma~,~~~\ldots
            \\
            \eps^I_n(\tau_{\alpha_1},\ldots,\tau_{\alpha_n})=m_{\alpha_1\ldots\alpha_n}^{I,\beta} \tau_\beta~,
        \end{gathered}
    \end{equation}
    where $I$ is a multi-index consisting of $n-1$ indices $i_1,i_2,\ldots,\in \IN$, and we define $|I|\coloneqq i_1+i_2+\ldots$. The products $\eps^I_n$ have degree~$-|I|$.
    
    It is useful to identify the following special cases, extending the nomenclature of~\cite{Roytenberg:0712.3461}:
    \begin{definition}\label{def:special_EL_infty}
        Let $(\frE,\eps_k^I)$ be an $E_2L_\infty$-algebra. If $\eps_0\neq 0$, we call $\frE$ \uline{curved}, otherwise $\frE$ is \uline{uncurved}. An uncurved $E_2L_\infty$-algebra is called \uline{hemistrict}, if $\eps_k^I=0$ for $k\geq 3$. It is called \uline{strict} if it is hemistrict and $\eps_2^i=0$ for $i>0$. Finally, $\frE$ is called \uline{semistrict} if $\epsilon_n^I=0$ for $I\neq (0,\ldots,0)$.
    \end{definition}
    Note that in the case of uncurved $E_2L_\infty$-algebras, we can restrict $\caE(V)$ to the reduced tensor product algebra $\overline{\caE(V)}$ defined in~\eqref{eq:red_ten_algebra}. In the following, all our $E_2L_\infty$-algebras will be uncurved. Clearly, hemistrict $E_2L_\infty$-algebras are simply $\ophLie_2$-algebras, and therefore $E_2L_\infty$-algebras subsume differential graded Lie algebras.
    
    As an example, let us consider a family of weak models of the string Lie 2-algebra which we will use to exemplify many of our constructions in the following. We consider the graded vector space $V=(\frg\oplus \IR[1])[1]^*$, where $\frg$ is a finite-dimensional quadratic (i.e.~metric) Lie algebra with structure constants $f^\alpha_{\beta\gamma}$ and Cartan--Killing form $\kappa_{\alpha\beta}$ with respect to a basis $(\tau_\alpha)$. On $V$, we introduce basis vectors $t^\alpha\in \frg[1]^*$ and $r\in \IR[2]^*$ of degrees $1$ and $2$, respectively. The differential is given by
    \begin{equation}
        \begin{aligned}
            Q t^\alpha&=-f^\alpha_{\beta\gamma}\,t^\beta \oslash_0 t^\gamma+(1-\vartheta)\kappa_{\alpha\delta} f^\delta_{\beta\gamma}\,(t^\beta \oslash_0 t^\gamma) \oslash_0 t^\delta-2\vartheta \kappa_{\beta\gamma}\,t^\beta \oslash_0 t^\gamma~,
            \\
            Q r &=0
        \end{aligned}
    \end{equation}
    with $\vartheta\in \IR$, and a direct calculation verifies $Q^2=0$. This defines the family of $E_2L_\infty$-algebras $\frstring^{\rmwk,\vartheta}_\rmsk(\frg)$ with the following underlying graded vector space and higher products:
    \begin{equation}\label{eq:skeltal_model_string_Lie_2_algebra}
        \begin{gathered}
            \frstring^{\rmwk,\vartheta}_\rmsk(\frg)~\coloneqq ~(\IR[1]\xrightarrow{~0~}\frg)~,
            \\
            \eps_2(x_1,x_2)=[x_1,x_2]~,~~~\eps_2^1(x_1,x_2)=2\vartheta(x_1,x_2)
            \\
            \eps^{00}_3(x_1,x_2,x_3)=(1-\vartheta)(x_1,[x_2,x_3])~,
        \end{gathered}
    \end{equation}
    for $x\in \frg$ and $y\in \IR$. All other higher products vanish. We notice that $\frstring^{\rmwk,\vartheta}_\rmsk(\frg)$ is an uncurved $E_2L_\infty$-algebra for each $\vartheta\in\IR$. It becomes semistrict for $\vartheta=0$ and hemistrict for $\vartheta=1$.
    
    A second, much more general example is worked out in \ref{app:involved_example}.
    
    It also evident that the 2-term $EL_\infty$-algebras of~\cite{Roytenberg:0712.3461} with symmetric alternators are a special case of our $E_2L_\infty$-algebras.
    
    Another very general and useful example is the $E_2L_\infty$-algebra of inner derivations $\frinn(\frE)$ of another $E_2L_\infty$-algebra $\frE$. This is obtained as a straightforward generalization of the definition of the (unadjusted\footnote{We clarify this nomenclature later.}) Weil algebra of an $L_\infty$-algebra.
    \begin{definition}\label{def:Weil-EL-infty}
        The \uline{(unadjusted) Weil algebra} of an $E_2L_\infty$-algebra $\frE$ is the $\opEilh_2$-algebra
        \begin{equation}
            \sfW(\frE)\coloneqq \Big(~\bigoslash\nolimits^\bullet~(\frE[1]^*\oplus \frE[2]^*)~,~Q_\sfW~\Big)~,
        \end{equation}
        where the Weil differential is defined by the relations
        \begin{equation}
            Q_\sfW=Q_\sfCE+\sigma~,~~~Q_\sfW\sigma=-\sigma Q_\sfW
        \end{equation}
        with $\sigma:\frE[1]^*\rightarrow \frE[2]^*$ the shift isomorphism, trivially extended to $\sfW(\frE)$ by the (undeformed) Leibniz rule, i.e.
        \begin{equation}
            \sigma(a\oslash_i b)=(-1)^i\big(\sigma a\oslash_i b+(-1)^{|a|}a \oslash_i \sigma b\big)~.
        \end{equation}
        The $E_2L_\infty$-algebra dual to $\sfW(\frE)$ is the \uline{inner derivation $E_2L_\infty$-algebra} $\frinn(\frE)$ of $\frE$.
    \end{definition}    
    
    \subsection{Morphisms and equivalences of \texorpdfstring{$E_2L_\infty$}{E2L-infinity}-algebras}
    
    The notion of a morphism of $E_2L_\infty$-algebras becomes evident in the dual, Chevalley--Eilenberg algebra description.
    \begin{definition}\label{def:morphisms}
        A \uline{morphism of $E_2L_\infty$-algebras} $\phi:\frE\rightarrow \tilde \frE$ is a morphism dual to the corresponding morphism $\Phi:\sfCE(\frE)\rightarrow \sfCE(\tilde \frE)$ of $\opEilh_2$-algebras. For $\sfCE(\frE)=({\caE(V)}, Q)$ and $\sfCE(\tilde \frE)=({\caE(\tilde V)},\tilde Q)$, such a morphism is compatible with the differential,
        \begin{equation}
            Q\circ \Phi=\Phi\circ \tilde Q~,
        \end{equation}
        and the product structure,
        \begin{equation}
            \Phi(x\oslash_i y)=\Phi(x)\oslash_i\Phi(y)
        \end{equation}
        for all $x,y\in \sfCE(\frE)$ and $i\in \IN$.
    \end{definition}
    
    Recall that the appropriate notion of equivalence for homotopy algebras is that of a quasi-isomorphism, which we make explicit in the following definition.
    \begin{definition}
        Two $E_2L_\infty$-algebras $\frE$ and $\tilde \frE$ are called \underline{quasi-isomorphic} if there is a morphism of $E_2L_\infty$-algebras $\phi:\frE\rightarrow \tilde \frE$ such that the contained chain map $\phi_1$, the dual to the linear component of the dual morphism of $\opEilh_2$-algebras $\Phi$, descends to an isomorphism between the cohomologies of $\frE$ and $\tilde \frE$.
    \end{definition}
    We note that the cohomology $H^\bullet_{\eps_1}(\frE)$ of an $E_2L_\infty$-algebra $\frE$ is dual to the cohomology with respect to the linear part of the Chevalley--Eilenberg differential $Q$. In the case of the Weil algebra, the included shift isomorphism $\sigma$ renders the cohomology $H^\bullet_{\eps_1}(\frinn(\frE))$ trivial. We therefore obtain a quasi-isomorphism between $\frinn(\frE)$ and the trivial $E_2L_\infty$-algebra, extending the situation for $L_\infty$-algebras.
    
    In the future, it may be useful to have an inner product structure on an $E_2L_\infty$-algebra. The appropriate notion here, which could more formally be derived from lifting our above discussion to cyclic operads, is a simple generalization of cyclic structures on $L_\infty$-algebras.
    \begin{definition}
        An $E_2L_\infty$-algebra $\frE$ is called \uline{cyclic} if it is equipped with a non-degenerate graded-symmetric bilinear form $\langle-,-\rangle:\frE\times \frE\rightarrow K$, where $K$ is the ground field, such that
        \begin{equation}\label{eq:cyclicity}
            \langle e_1,\eps_i(e_2,\ldots,e_{i+1})\rangle=(-1)^{i+i(|e_1|+|e_{i+1}|)+|e_{i+1}|\sum_{j=1}^{i}|e_j|}\langle e_{i+1},\eps_i(e_1,\ldots,e_{i})\rangle
        \end{equation}
        for all $e_i\in \frE$.
    \end{definition}
    
    \subsection{Homotopy transfer and minimal model theorem}\label{ssec:homotopy_transfer}
    
    Generically, homotopy algebras provide a notion of homotopy transfer, cf.~e.g.~\cite{Loday:2012aa}. We will require this technology later, and we therefore develop a form of the homological perturbation lemma below. 
    
    We start from a deformation retract between two differential graded complexes $(\frE,\eps_1)$ and $(\tilde \frE,\tilde \eps_1)$. That is, we have chain maps $\sfp$ and $\sfe$, together with a map $\sfh$ of degree~$-1$, which fit into the diagram
    \begin{subequations}\label{eq:HE_data_diff_complex}
        \begin{equation}
            \begin{tikzcd}
                \ar[loop,out=194,in= 166,distance=20,"\sfh"](\frE,\eps_1)\arrow[r,shift left]{}{\sfp} & (\tilde \frE,\tilde \eps_1) \arrow[l,shift left]{}{\sfe}~,
            \end{tikzcd}
        \end{equation}
        and satisfy the relations
        \begin{equation}\label{eq:homotopy_transfer_relations_dual}
            \begin{gathered}
                \sfp\circ \sfe=\sfid_{\tilde \frE}~,~~~\sfid_{\frE}-\sfe\circ \sfp=\sfh\circ \eps_1+\eps_1\circ \sfh~,
                \\
                \sfh\circ\sfh=0~,~~~\sfh\circ \sfe=0~,~~~\sfp\circ \sfh=0~.
            \end{gathered}
        \end{equation}
    \end{subequations}
    We now want to consider the homological perturbation lemma for the semifree $\opEilh_2$-algebras on ${\caE(V)}$ and ${\caE(\tilde V)}$ with differentials $Q$ and $\tilde Q$, respectively, defined by
    \begin{equation}
        V=\frE[1]^*~,~~~Q=\eps_1^*\eand \tilde V=\tilde \frE[1]^*~,~~~\tilde Q=\tilde \eps_1^*~.
    \end{equation}
    
    \begin{theorem}\label{thm:homotopy_transfer}
        The deformation retract~\eqref{eq:HE_data_diff_complex} lifts to the deformation retract
        \begin{subequations}\label{eq:HE_data_Eilh}
            \begin{equation}
                \begin{tikzcd}
                    \ar[loop,out=194,in= 166,distance=20,"\sfH_0"]({\caE(V)},Q)\arrow[r,shift left]{}{\sfP_0} & ({\caE(\tilde V)},\tilde Q) \arrow[l,shift left]{}{\sfE_0}~,
                \end{tikzcd}
            \end{equation}
            with\footnote{The notation is chosen to match more closely the formulas of \ref{ssec:cohomology_Eilh}.}
            \begin{equation}
                \begin{aligned}
                    \sfE_0(v)&=\sfp^*(v)~,~~~&\sfE_0(x\oslash_i y)&=\sfE_0(x)\oslash_i\sfE_0(y)~,
                    \\
                    \sfP_0(v)&=\sfe^*(v)~,~~~&\sfP_0(x\oslash_i y)&=\sfP_0(x)\oslash_i\sfP_0(y)~,
                \end{aligned}
            \end{equation}
            for all $v\in V$ and $x,y\in {\caE(V)}$. The dual to the contracting homotopy is continued by a modification of the tensor trick, 
            \begin{equation}
                \begin{aligned}
                    \sfH_0(v)&=\sfh^*(v)~,
                    \\
                    \sfH_0(x\oslash_i y)&=(-1)^i\sfH_0(x)\oslash_i\sfE_0(\sfP_0(y))+(-1)^{i+|x|}x\oslash_i\sfH_0(y)
                    \\
                    &\hspace{4cm}+(-1)^{|y|+|x|\,|y|}\sfH_0(x)\oslash_{i+1}\sfH_0(y)~.
                \end{aligned}
            \end{equation}
            These maps satisfy the relations
            \begin{equation}\label{eq:homotopy_transfer_relations_0}
                \begin{gathered}
                    \sfP_0\circ\sfE_0=\sfid_{\caE(\tilde V)}~,~~~\sfid_{\caE(V)}-\sfE_0\circ \sfP_0=\sfH_0\circ Q+Q\circ \sfH_0~,
                    \\
                    \sfH_0\circ\sfH_0=0~,~~~\sfH_0\circ \sfE_0=0~,~~~\sfP_0\circ \sfH_0=0~.
                \end{gathered}
            \end{equation}
        \end{subequations}
    \end{theorem}
    \begin{proof}
        The proof of~\eqref{eq:homotopy_transfer_relations_0} is a straightforward computation for elements in $\oslash^2_\bullet V$. The general case follows then by iterating the algebra relations and applying~\eqref{eq:homotopy_transfer_relations_dual}.
    \end{proof}
    
    The higher products $\eps_i$ with $i>1$ on $\frE$ can now be regarded as a perturbation of the differential. Dually, we have a perturbation of $Q$,
    \begin{equation}
        Q\rightarrow \hat Q\coloneqq Q+Q_\delta~.
    \end{equation}
    We can then use the homological perturbation lemma~\cite{Gugenheim1989:aa,gugenheim1991perturbation,Crainic:0403266} to transfer these structures over to higher products $\tilde \eps_i$ on $\tilde \frE$, or, dually, to a perturbed differential $\tilde Q$ on $\caE(\tilde V)$. The formulas for this are the usual ones, cf.~\cite{Crainic:0403266}.
    
    \begin{lemma}\label{prop:HPL}
        Starting from the deformation retract~\eqref{eq:HE_data_Eilh}, we can construct another deformation retract
        \begin{subequations}\label{eq:HE_data_Eilh2}
            \begin{equation}
                \begin{tikzcd}
                    \ar[loop,out=194,in= 166,distance=20,"\sfH"]({\caE(V)},\hat Q)\arrow[r,shift left]{}{\sfP} & ({\caE(\tilde V)},\hat{\tilde Q}) \arrow[l,shift left]{}{\sfE}~,
                \end{tikzcd}
            \end{equation}
            with
            \begin{equation}
                \begin{gathered}
                    \hat Q\coloneqq Q+Q_\delta~,~~~\hat{\tilde Q}=\tilde Q+ \sfP\circ Q_\delta \circ \sfE_0~,
                    \\
                    \sfP\coloneqq \sfP_0\circ(\sfid+Q_\delta\circ\sfH_0)^{-1}~,~~~\sfE\coloneqq \sfE_0-\sfH\circ Q_\delta\circ \sfE_0~,~~~\sfH=\sfH_0\circ(\sfid+Q_\delta\circ \sfH_0)^{-1}~,
                \end{gathered}
            \end{equation}
            where inverse operators are defined via the evident geometric series, such that 
            \begin{equation}\label{eq:homotopy_transfer_relations2}
                \begin{gathered}
                    \sfP\circ \sfE=\sfid_{\caE(\tilde V)}~,~~~
                    \sfid_{\caE(V)}-\sfE\circ \sfP=\sfH\circ \hat Q+\hat Q\circ \sfH~,
                    \\
                    \sfH_0\circ\sfH_0=\sfH_0\circ \sfE=0~,~~~\sfP\circ \sfH=0~,
                \end{gathered}
            \end{equation}
        \end{subequations}
        In particular, $\sfP$ and $\sfE$ are morphisms of $\opEilh_2$-algebras.
    \end{lemma}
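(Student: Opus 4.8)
The statement is an instance of the classical homological perturbation lemma, so the plan is to apply it at the level of the underlying chain complexes, temporarily forgetting the $\opEilh$-algebra structure. Concretely, I would regard $(\caE(V),Q)$ and $(\caE(\tilde V),\tilde Q)$ merely as differential graded vector spaces, view the triple $(\sfE_0,\sfP_0,\sfH_0)$ of \eqref{eq:HE_data_Eilh} as a deformation retract satisfying the side conditions \eqref{eq:homotopy_transfer_relations_0}, and treat $Q_\delta$ as a perturbation of the differential $Q$. The formulas quoted for $\hat{\tilde Q}$, $\sfP$, $\sfE$, and $\sfH$ are then precisely the universal perturbation-lemma formulas of \cite{Gugenheim1989:aa,gugenheim1991perturbation,Crainic:0403266}, and the relations \eqref{eq:homotopy_transfer_relations2} hold formally once the geometric series defining $(\sfid+Q_\delta\circ\sfH_0)^{-1}$ is meaningful. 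Note that, although $\sfE_0$ and $\sfP_0$ are $\opEilh$-algebra morphisms while $\sfH_0$ is not, the lemma only uses the chain-level data, so this mismatch is irrelevant.

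The single point that genuinely requires the $\opEilh$-structure is therefore the convergence of this geometric series, i.e.\ the local finiteness of $\sum_{n\geq0}(-Q_\delta\circ\sfH_0)^n$. To establish it I would equip $\caE(V)$ with the word-length (arity) filtration counting the number of $\oslash$-factors in a monomial. The unperturbed differential $Q=\eps_1^*$ preserves this filtration: it is linear on $V$, and the $\oslash_{i+1}$-correction terms in the deformed Leibniz rule \eqref{eq:def_Leibniz} do not alter the number of factors. The perturbation $Q_\delta$, by contrast, is dual to the higher products $\eps_n^I$ with $n\geq2$ and hence sends a generator of $V$ to a sum of monomials of word length at least two. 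Crucially, since both $Q$ and $\hat Q=Q+Q_\delta$ obey the same deformed Leibniz rule \eqref{eq:def_Leibniz} with identical correction terms—these depend only on $a,b,i$ and not on the differential—their difference $Q_\delta$ obeys an ordinary signed Leibniz rule over the $\oslash_i$, the correction terms cancelling. Thus $Q_\delta$ strictly raises word length by at least one on all of $\caE(V)$, not merely on $V$. Since $\sfH_0$ preserves word length exactly (every term in its tensor-trick definition has the same number of factors as its argument, as $\sfE_0$ and $\sfP_0$ are length-preserving), the composite $Q_\delta\circ\sfH_0$ raises word length by at least one, so each fixed word-length component of the output of the series receives contributions from only finitely many $n$ and the inverse is well-defined.

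With convergence in hand, the remaining work is the standard, purely formal verification of \eqref{eq:homotopy_transfer_relations2}: one expands each asserted identity—$\sfP\circ\sfE=\sfid_{\caE(\tilde V)}$, the chain-homotopy relation $\sfid_{\caE(V)}-\sfE\circ\sfP=\sfH\circ\hat Q+\hat Q\circ\sfH$, and the side conditions $\sfH\circ\sfH=0$, $\sfH\circ\sfE=0$, $\sfP\circ\sfH=0$—using the definitions, commutes $(\sfid+Q_\delta\circ\sfH_0)^{-1}$ past $Q_\delta$ where needed, and reduces everything to the unperturbed relations \eqref{eq:homotopy_transfer_relations_0} together with $\hat Q^2=0$. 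In particular $(\hat{\tilde Q})^2=0$ falls out automatically, as always in the perturbation lemma. I would either cite this bookkeeping to the references above or reproduce the short term-by-term computation.

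The main obstacle is thus not the algebra of the HPL, which is off-the-shelf, but pinning down the filtration argument that guarantees the perturbation is small. The subtlety is precisely that $\sfH_0$ is \emph{not} an $\opEilh$-algebra derivation, so one cannot simply invoke a grading on generators; one must instead check directly from the tensor-trick formula that $\sfH_0$ respects the word-length grading, and exploit the cancellation of the deformed-Leibniz correction terms to see that $Q_\delta$ raises that grading everywhere rather than only on $V$.
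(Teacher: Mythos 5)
Your chain-level argument is sound, and your word-length filtration establishing local finiteness of $(\sfid+Q_\delta\circ\sfH_0)^{-1}$ is a genuine addition: the paper simply defines the inverses ``via the evident geometric series'' without justifying convergence, whereas you correctly isolate that $Q_\delta=\hat Q-Q$ obeys the ordinary signed Leibniz rule (the $\oslash_{i+1}$ correction terms in \eqref{eq:def_Leibniz} cancel in the difference) and hence raises word length everywhere, while the tensor-trick $\sfH_0$ preserves it.

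There is, however, a gap precisely where you declare the mismatch between the multiplicative maps $\sfE_0,\sfP_0$ and the non-multiplicative $\sfH_0$ ``irrelevant''. The lemma asserts a deformation retract between the \emph{semifree $\opEilh$-algebras} $(\caE(V),\hat Q)$ and $(\caE(\tilde V),\hat{\tilde Q})$; for the right-hand side to be such an object at all, the transferred differential $\hat{\tilde Q}=\tilde Q+\sfP\circ Q_\delta\circ\sfE_0$ must satisfy the deformed Leibniz rule \eqref{eq:def_Leibniz}, and this is exactly what the subsequent applications (homotopy transfer of $EL_\infty$-structures, the minimal model theorem) depend on. The paper's proof is devoted entirely to this point: it shows that $\sfP$ (and likewise $\sfE$) remains an $\opEilh$-algebra morphism, $\sfP(x\oslash_i y)=\sfP(x)\oslash_i\sfP(y)$, by expanding the geometric series, observing that in every surviving term each $\sfH_0$ is precomposed with a $Q_\delta$ (otherwise the overall precomposition with $\sfP_0$ annihilates the term via $\sfP_0\circ\sfH_0=0$), and then using that $Q_\delta$ is a derivation for the $\oslash_i$. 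Your proposal establishes the identities \eqref{eq:homotopy_transfer_relations2} as literally printed, but not the multiplicativity that makes the perturbed retract one of $\opEilh$-algebras; you should add this step (the paper's computation, or an equivalent one) rather than dismiss the issue.
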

    \begin{proof}
        The lemma follows from the usual perturbation lemma, cf.~\cite{Crainic:0403266}, with the specialization that $\sfE$ and $\sfP$ are here morphisms of $\opEilh_2$-algebras. To see this, we note that $Q_\delta$ acts as a derivation,
        \begin{equation}
            Q_\delta(x\oslash_i y)=(-1)^i\big((Q_\delta x)\oslash_i y+(-1)^{|x|}x\oslash_i Q_\delta y\big)~.
        \end{equation}
        Moreover, in the non-vanishing terms of
        \begin{equation}
            \begin{aligned}
                \sfP(x\oslash_i y)&=(\sfP_0\circ(\sfid-Q_\delta \circ \sfH_0+Q_\delta\circ \sfH_0\circ Q_\delta\circ \sfH_0-\ldots)\big)(x\oslash_i y)
                \\
                &=\sfP_0(x)\oslash_i\sfP_0(y)-\sfP_0(Q_\delta(\sfH_0(x))\oslash_i\sfP_0(y)-\sfP_0(x)\oslash_i\sfP_0(Q_\delta(\sfH_0(x)))+\dotsb~,
            \end{aligned}
        \end{equation}
        all the $\sfH_0$ are pre-composed by a $Q_\delta$, as otherwise the map $\sfP_0$, which is applied to all summands, would annihilate the term due to $\sfP_0\circ \sfH_0=0$. The relation
        \begin{equation}
            \sfP(x\oslash_i y)=\sfP(x)\oslash_i\sfP(y)
        \end{equation}
        follows then by a direct computation. The same holds for $\sfE$.
    \end{proof}
    We note that for small perturbations $Q_\delta$, the homological perturbation \ref{prop:HPL} implies that
    \begin{equation}\label{eq:HPL_geom_series}
        \hat{\tilde Q}=\tilde Q+\sfP_0\circ Q_\delta\circ \sfE_0-\sfP_0\circ Q_\delta\circ \sfH_0\circ Q_\delta \circ \sfE_0+\sfP_0\circ Q_\delta\circ \sfH_0\circ Q_\delta\circ \sfH_0\circ Q_\delta \circ \sfE_0-\dotsb~.
    \end{equation}
    
    A direct consequence of homotopy transfer is the existence of minimal models for homotopy algebras. Consider the deformation retract~\eqref{eq:HE_data_diff_complex} with $(\tilde \frE,\tilde \eps_1=0)=H^\bullet_{\eps_1}(\frE)$ the cohomology of $(\frE,\eps_1)$ as well as $\sfp$ and $\sfe$ the projection onto the cohomology and a choice of embedding, respectively. Then the homotopy transfer yields the structure of a quasi-isomorphic $E_2L_\infty$-algebra on the cohomology of $(\frE,\eps_1)$. This implies the minimal model theorem.
    \begin{theorem}\label{thm:minimal_model}
        Any $E_2L_\infty$-algebra $\frE$ comes with a quasi-isomorphic $E_2L_\infty$-algebra structure on its cohomology $H^\bullet_{\eps_1}(\frE)$. We call this a \uline{minimal model} of $\frE$.
    \end{theorem}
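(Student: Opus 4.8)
The plan is to obtain the minimal model as a direct application of the homotopy transfer and perturbation machinery assembled in \ref{ssec:homotopy_transfer}, specialized to the retraction of $\frE$ onto its cohomology. Since we work over a field, the underlying cochain complex $(\frE,\eps_1)$ admits a Hodge--Kodaira type deformation retract onto $H^\bullet_{\eps_1}(\frE)$: choosing a projection $\sfp$ onto the cohomology, a compatible embedding $\sfe$ realizing $(\tilde\frE,\tilde\eps_1=0)=H^\bullet_{\eps_1}(\frE)$, and a contracting homotopy $\sfh$ of degree $-1$ produces data of the form~\eqref{eq:HE_data_diff_complex} satisfying the side conditions~\eqref{eq:homotopy_transfer_relations_dual}. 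This is the only genuinely external input; everything that follows is internal to the $\opEilh$-formalism.

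First I would dualize and feed this retract into the homotopy transfer theorem above, producing the lifted deformation retract~\eqref{eq:HE_data_Eilh} between the semifree $\opEilh$-algebras $(\caE(V),Q)$ and $(\caE(\tilde V),\tilde Q)$, with $V=\frE[1]^*$, $\tilde V=\tilde\frE[1]^*$, $Q=\eps_1^*$, and $\tilde Q=\tilde\eps_1^*$; since $\tilde\eps_1=0$, the latter reduces to the differential induced purely by the deformed Leibniz rule~\eqref{eq:def_Leibniz}, so that $(\caE(\tilde V),\tilde Q)$ is the trivial semifree $\opEilh$-algebra of \ref{prop:Q0_cohomology}. In this lift $\sfE_0$ and $\sfP_0$ are $\opEilh$-algebra morphisms and $\sfH_0$ is the tensor-trick homotopy. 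The full $EL_\infty$-structure on $\frE$ is then encoded in the complete Chevalley--Eilenberg differential $\hat Q=Q+Q_\delta$, where the perturbation $Q_\delta$ collects the contributions of the higher products $\eps_n^I$ with $n\geq 2$ and, as in the proof of \ref{prop:HPL}, acts as an ordinary (undeformed) derivation on $\caE(V)$.

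Next I would invoke \ref{prop:HPL} to transfer this perturbation, yielding the deformation retract~\eqref{eq:HE_data_Eilh2} with perturbed differential $\hat{\tilde Q}=\tilde Q+\sfP\circ Q_\delta\circ \sfE_0$ on $\caE(\tilde V)$ and with $\sfE,\sfP$ again morphisms of $\opEilh$-algebras. The decisive point is that $\hat{\tilde Q}$ really defines an $EL_\infty$-algebra on $\tilde\frE$. Nilquadraticity $\hat{\tilde Q}^2=0$ is automatic from the perturbation lemma. Compatibility with the $\opEilh$-structure follows because $\tilde Q$ obeys~\eqref{eq:def_Leibniz} while the transferred piece $\sfP\circ Q_\delta\circ \sfE_0$ is an ordinary derivation: using the morphism properties of $\sfE_0$ and $\sfP$, the derivation property of $Q_\delta$, and the relations~\eqref{eq:homotopy_transfer_relations_0} (notably $\sfH_0\circ\sfE_0=0$, which forces $\sfP\circ\sfE_0=\sfid_{\caE(\tilde V)}$), one checks directly that $(\sfP\circ Q_\delta\circ\sfE_0)(\tilde x\oslash_i\tilde y)=(-1)^i\big((\sfP Q_\delta\sfE_0\tilde x)\oslash_i\tilde y+(-1)^{|\tilde x|}\tilde x\oslash_i(\sfP Q_\delta\sfE_0\tilde y)\big)$. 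Hence $\hat{\tilde Q}$ satisfies~\eqref{eq:def_Leibniz} and $(\caE(\tilde V),\hat{\tilde Q})$ is a semifree $\opEilh$-algebra, i.e.\ the Chevalley--Eilenberg algebra of an $EL_\infty$-structure on $\tilde\frE=H^\bullet_{\eps_1}(\frE)$. This structure is automatically minimal because $\tilde\eps_1=0$, and the geometric series~\eqref{eq:HPL_geom_series} is unproblematic: $Q_\delta$ strictly raises the word length in $\caE(V)$, so each fixed transferred product $\tilde\eps_n^I$ receives contributions from only finitely many terms and the structure is well-defined order by order.

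Finally I would read off the quasi-isomorphism: the $\opEilh$-morphism $\sfE$ (equivalently $\sfP$) dualizes to an $EL_\infty$-morphism whose linear component is dual to $\sfe$ (resp.\ $\sfp$), which by construction induces an isomorphism on $\eps_1$-cohomology, so the transferred $EL_\infty$-algebra is quasi-isomorphic to $\frE$. The main obstacle in the argument is precisely the verification carried out in the previous paragraph, namely that the transferred differential remains inside the class of $\opEilh$-differentials and does not spoil the deformed Leibniz rule; this is exactly where the morphism property of $\sfE$ and $\sfP$ established in \ref{prop:HPL} is indispensable, as it is what guarantees one lands on a genuine $EL_\infty$-algebra rather than on some more general homotopy structure.
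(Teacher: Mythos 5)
Your proposal is correct and follows essentially the same route as the paper, which obtains the minimal model by specializing the deformation retract~\eqref{eq:HE_data_diff_complex} to $(\tilde\frE,\tilde\eps_1=0)=H^\bullet_{\eps_1}(\frE)$ and invoking the homotopy transfer of \ref{ssec:homotopy_transfer}. The additional verifications you supply (that $\sfP\circ\sfE_0=\sfid$, that the transferred differential respects the deformed Leibniz rule, and that the geometric series converges order by order) are left implicit in the paper but are sound.
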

    
    \subsection{The relationship between \texorpdfstring{$E_2L_\infty$}{EL-infinity}-algebras and  \texorpdfstring{$L_\infty$}{L-infinity}-algebras}\label{ssec:EL_infty_and_L_infty}
    
    Let us explain the relationship between $E_2L_\infty$-algebras and $L_\infty$-algebras in detail; we will come to examples in \ref{ssec:string_models}.
    
    First, we develop the expected result that the category of $L_\infty$-algebras is a subcategory of the category of $E_2L_\infty$-algebras.
    \begin{proposition}\label{thm:el_infty_contain_l_infty}
        A semistrict $E_2L_\infty$-algebra $\frE$ is an $L_\infty$-algebra. Conversely, any $L_\infty$-algebra is a (semistrict) $E_2L_\infty$-algebra. Dually, the data contained in a differential $Q$ in a semifree $\opEilh_2$-algebra $({\caE(V)},Q)$ with $\rmIm(Q|_V)\subset \bigoslash_0^\bullet V$ is equivalent to the data contained in a differential $\tilde Q$ on the semifree differential graded commutative algebra $(\bigodot^\bullet V,\tilde Q)$.
    \end{proposition}
    \begin{proof}
        It suffices to show the dual statement, which is a direct consequence of \ref{thm:lift_dgca_to_Eilh}.
    \end{proof}
    
    Concretely, given an $L_\infty$-algebra $\frL$ with higher products $\mu_k$ this yields a semistrict $E_2L_\infty$-algebra with higher products
    \begin{equation}
        \eps^I_k=\begin{cases}
            \mu_k & \mbox{for~$|I|=0$}~,
            \\
            0 & \mbox{else}~.
        \end{cases}
    \end{equation}
    Dually, we can embed the Chevalley--Eilenberg algebra $\sfCE(\frL)=(\odot^\bullet \frL[1]^*)$ into the semifree $\opEilh_2$-algebra $(\caE(\frL[1]^*),Q_{\rm EL})$ using the map $\sfE_0$ of~\eqref{eq:embedding_map} in \ref{lem:embedding}. The Chevalley--Eilenberg differential is trivially lifted, and the structure constants agree up to combinatorial factors:
    \begin{equation}
        \begin{aligned}
            Q_{\rm L} t^\alpha&= q^\alpha_\beta t^\beta+\tfrac12 q^\alpha_{\beta\gamma}t^\beta t^\gamma+\tfrac1{3!}q^\alpha_{\beta\gamma\delta}t^\beta t^\gamma t^\delta+\ldots~,
            \\
            Q_{\rm EL} t^\alpha&= q^\alpha_\beta t^\beta+q^\alpha_{\beta\gamma}t^\beta \oslash_0t^\gamma+q^\alpha_{\beta\gamma\delta}(t^\beta\oslash_0 t^\gamma)\oslash_0 t^\delta+\ldots~.
        \end{aligned}
    \end{equation}
    
    Conversely, any semistrict $E_2L_\infty$-algebra $\frE$ is an $L_\infty$-algebra with higher products $\mu_k=\eps^0_k$. As an example, consider the $\vartheta=0$ case of the family of weak string Lie 2-algebra models~\eqref{eq:skeltal_model_string_Lie_2_algebra}. This is a semistrict $E_2L_\infty$-algebra and therefore an $L_\infty$-algebra.
    
    \begin{proposition}\label{thm:lift_L_infty_morphism}
        Any $L_\infty$-algebra morphism $\phi:\frL\rightarrow \tilde \frL$ lifts to an $E_2L_\infty$-algebra morphism $\hat \phi:\hat \frL\rightarrow \hat{\tilde \frL}$, where $\hat \frL$ and $\hat{\tilde \frL}$ are the $L_\infty$-algebras $\frL$ and $\tilde \frL$, regarded as $E_2L_\infty$-algebras.
    \end{proposition}
    \begin{proof}
        We prove this statement from the dual perspective. Let $(\bigodot^\bullet V,Q)$ and $(\bigodot^\bullet \tilde V,\tilde Q)$ be the Chevalley--Eilenberg algebras of $\frL$ and $\tilde \frL$, respectively.
        
        We have the embeddings $\sfE_0$ and $\tilde \sfE_0$ into the semifree $\opEilh_2$-algebras $(\caE(\frL[1]^*),\hat Q)$ and $(\caE(\tilde \frL[1]^*),\hat{\tilde Q})$, as defined in~\eqref{eq:embedding_map}. Furthermore, we also have the projectors $\sfP_0$ and $\tilde \sfP_0$ as defined in~\eqref{eq:projector_P0}. The Chevalley--Eilenberg algebras of $\hat \frL$ and $\hat{\tilde \frL}$ are then
        \begin{equation}
            (\caE(V),\hat Q\coloneqq Q_0+\sfE_0\circ Q\circ \sfP_0)
            \eand 
            (\caE(\tilde V),\hat{\tilde Q}\coloneqq \tilde Q_0+\tilde\sfE_0\circ \tilde Q\circ \tilde\sfP_0)~,
        \end{equation}
        cf.~\ref{thm:lift_dgca_to_Eilh}. The dual of the morphism $\phi$,
        \begin{equation}
            \Phi:\sfCE(\tilde \frL)\rightarrow \sfCE(\frL)~,
        \end{equation}
        trivially lifts to the following dual of an $E_2L_\infty$-algebra morphism 
        \begin{equation}
            \hat \Phi:\sfCE(\hat{\tilde \frL})\rightarrow \sfCE(\hat\frL)\ewith\hat \Phi(v)\coloneqq (\sfE_0\circ\Phi\circ\tilde \sfP_0)(v)~,
        \end{equation}
        and we note that $\hat \Phi \circ \tilde \sfE_0=\sfE_0\circ \Phi$. It then follows that 
        \begin{equation}
            \begin{aligned}
                (\hat \Phi\circ \hat{\tilde Q})(v)&=(\sfE_0\circ \Phi\circ \tilde \sfP_0\circ \tilde \sfE_0\circ \tilde Q\circ \tilde \sfP_0)(v)
                \\
                &=(\sfE_0\circ Q\circ \sfP_0\circ \sfE_0\circ \Phi\circ \tilde \sfP_0)(v)
                \\
                &=(\hat Q\circ hat \Phi)(v)
            \end{aligned}
        \end{equation}
        and $\hat\Phi$ is the dual to the desired morphism of $E_2L_\infty$-algebras $\hat \phi$.
    \end{proof}
    
    We now come to the inverse rectification theorem, which generalizes\footnote{for 2-term $EL_\infty$-algebras with symmetric alternator} the result of~\cite{Roytenberg:0712.3461} for 2-term $EL_\infty$-algebras.
    \begin{theorem}\label{thm:antisym}
        Any $E_2L_\infty$-algebra $(\frE,\eps_i)$ induces an $L_\infty$-algebra structure on the graded vector space $\frE$. This $L_\infty$-algebra structure is induced by homotopy transfer using the homotopy~\eqref{eq:homotopy_for_transfer_EL_to_L}.
    \end{theorem}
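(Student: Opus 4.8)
The plan is to carry out the construction entirely on the dual, Chevalley--Eilenberg side and to produce, by homotopy transfer along \eqref{eq:homotopy_for_transfer_EL_to_L}, a semifree commutative differential graded algebra on $\bigodot^\bullet V$, $V=\frE[1]^*$; by \ref{thm:el_infty_contain_l_infty} such an algebra is exactly an $L_\infty$-algebra structure on $\frE$. I would start from the semifree $\opEilh$-algebra $(\caE(V),Q)$ dual to $(\frE,\eps_i)$ and decompose its differential as
$$Q=(Q_0+\rmd)+Q_\delta,$$
where $\rmd=\eps_1^*$ is the linear part of $Q$ on the generators, $Q_0$ is the trivial differential of \ref{prop:Q0_cohomology} which carries the entire deformation of the Leibniz rule \eqref{eq:def_Leibniz}, and $Q_\delta$ gathers the structure constants dual to the $\eps_k^I$ with $k\geq 2$, i.e.~the part of $Q|_V$ of word length $\geq 2$ in the generators. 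The first point to record is that $Q-Q_0=\rmd+Q_\delta$ is an \emph{ordinary} graded derivation for every $\oslash_i$: the deformation terms $(-1)^i(a\oslash_{i+1}b)-(-1)^{|a|\,|b|}(b\oslash_{i+1}a)$ in \eqref{eq:def_Leibniz} depend only on $a$ and $b$, not on $Qa,Qb$, and coincide for $Q$ and $Q_0$, so they cancel in the difference; this is what lets me treat $Q_\delta$ as a genuine perturbation, as in \ref{prop:HPL}.

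Next I would feed the splitting into the contracting homotopy \eqref{eq:homotopy_for_transfer_EL_to_L}, which already exhibits $(\caE(V),Q_0+\rmd)$ as a deformation retract of the CDGA $(\bigodot^\bullet V,\rmd)$, and apply the homological perturbation lemma with perturbation $Q_\delta$, obtaining a transferred differential
$$\hat{\tilde Q}=\rmd+\sfP_0\circ Q_\delta\circ \sfE_0-\sfP_0\circ Q_\delta\circ \sfH_0\circ Q_\delta\circ \sfE_0+\dotsb$$
on $\bigodot^\bullet V$, cf.~\eqref{eq:HPL_geom_series}. Convergence and local finiteness are controlled by the word-length (polynomial degree in $V$) filtration: $Q_\delta$ strictly raises word length while $Q_0$, $\sfE_0$, $\sfP_0$ and the contraction $\sfH_0$ all preserve it, so $Q_\delta\circ\sfH_0$ is locally nilpotent and each homogeneous component of $\hat{\tilde Q}$ is a finite sum; that $\hat{\tilde Q}$ raises word length is exactly what is expected of a Chevalley--Eilenberg differential. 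The perturbation lemma gives $\hat{\tilde Q}^2=0$, and the leading term $\sfP_0\circ Q_\delta\circ\sfE_0$ restricted to $V$ reproduces, after the shift $V=\frE[1]^*$, the graded antisymmetrization of the $EL_\infty$-products, the tails supplying the homotopy corrections.

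The real content -- and the main obstacle -- is to show that $\hat{\tilde Q}$ is a \emph{derivation} of the symmetric product $\odot$, so that $(\bigodot^\bullet V,\hat{\tilde Q})$ is a bona fide semifree CDGA and not merely a differential on the graded vector space $\bigodot^\bullet V$. This does not follow by the verbatim argument of \ref{prop:HPL}: there both ends of the retract were $\opEilh$-algebras and $\sfE_0,\sfP_0$ were $\opEilh$-morphisms, whereas here the retract crosses from the operad $\opEilh$ to $\opCom$, and the symmetrization $\sfE_0$ intertwines $\odot$ with $\oslash_0$ only after passing to $Q_0$-cohomology, cf.~\ref{prop:Q0_cohomology}, not strictly. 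I would establish the derivation property order by order in word length, using the side conditions $\sfH_0\circ\sfH_0=\sfH_0\circ\sfE_0=\sfP_0\circ\sfH_0=0$ of \eqref{eq:homotopy_transfer_relations} to discard the ``disconnected'' contributions in the geometric series, together with the identification of the induced product on $Q_0$-cohomology with $\odot$ from \ref{prop:Q0_cohomology}; equivalently, one may dualize once more and verify that the transfer preserves the cofree cocommutative coalgebra structure, so that $\hat{\tilde Q}$ is automatically a coderivation. Once the derivation property is in hand, \ref{thm:lift_dgca_to_Eilh} and \ref{thm:el_infty_contain_l_infty} identify $(\bigodot^\bullet V,\hat{\tilde Q})$ with the desired $L_\infty$-algebra on $\frE$.
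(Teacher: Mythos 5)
Your proposal follows essentially the same route as the paper: split $Q=Q_0+Q_1+Q_\delta$ on the Chevalley--Eilenberg side and transfer along \eqref{eq:homotopy_for_transfer_EL_to_L} via the homological perturbation lemma, with the only genuine issue being the Leibniz property of the transferred differential on $\bigodot^\bullet V$. For that last step the paper argues more directly than your order-by-order/coderivation sketch: the failure of the deformed Leibniz rule \eqref{eq:def_Leibniz} consists of graded antisymmetric terms, this antisymmetry is preserved under subsequent applications of $\sfH_0$ and $Q_\delta$, and the final projector $\sfP_0$ onto the symmetric algebra annihilates them.
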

    \begin{proof}
        The proof is readily obtained by applying the homological perturbation lemma to the contracting homotopy 
        \begin{equation}
            \begin{tikzcd}
                \ar[loop,out=204,in= 156,distance=70,"\sfH_0"](\caE(V),Q_0+Q_1+Q_\delta)\arrow[r,shift left]{}{\sfP_0} & (\odot^\bullet V,Q_{\rm L}) \arrow[l,shift left]{}{\sfE_0}~,
            \end{tikzcd}
        \end{equation}
        cf.~\eqref{eq:homotopy_for_transfer_EL_to_L}. Consider the Chevalley--Eilenberg algebra $\sfCE(\frE)$ of $\frE$, and split the differential $Q=Q_0+Q_1+Q_\delta$ into $Q_0$, a linear part $Q_1$, and a perturbation $Q_\delta$. Then homotopy transfer yields a differential
        \begin{equation}\label{eq:HPL_formula_Q_EL_to_L}
            Q_{\rm L}=Q_1+\sfP_0\circ Q_\delta\circ\sfE_0-\sfP_0\circ Q_\delta\circ \sfH_0\circ Q_\delta\circ \sfE_0+\sfP_0\circ Q_\delta\circ \sfH_0\circ Q_\delta\circ \sfH_0\circ Q_\delta\circ \sfE_0+\ldots
        \end{equation}
        on $\odot^\bullet(\frE[1]^*)$. By construction, $Q_{\rm L}^2=0$.
        Moreover, $Q_{\rm L}$ satisfies the Leibniz rule on $\odot^\bullet(\frE[1]^*)$: the deformation terms in the Leibniz rule~\eqref{eq:def_Leibniz} are graded antisymmetric, and this graded antisymmetry is preserved by subsequent applications of $\sfH$ and $Q_\delta$. The final projector $\sfP_0$ then eliminates these terms.
    \end{proof}
    As an example, we can compute the antisymmetrization of a 2-term $E_2L_\infty$-algebra and reproduce\footnote{Due to different conventions, there is a relative minus sign between our $\mu_3$ and that of~\cite{Roytenberg:0712.3461}.}~\cite[Proposition 3.1]{Roytenberg:0712.3461}.
    \begin{corollary}
        Given a 2-term $E_2L_\infty$-algebra $\frE$, there is an $L_\infty$-algebra structure on the graded vector space $\frE$ with higher products
        \begin{equation}\label{eq:antisymmetrization_2-term}
            \begin{aligned}
                \mu_1(y)&\coloneqq \eps_1(y)~,\\
                \mu_2(x_1,x_2)&\coloneqq \tfrac12(\eps^0_2(x_1,x_2)-\eps^0_2(x_2,x_1))~,\\
                \mu_2(x_1,y)=-\mu_2(y,x_1)&\coloneqq \tfrac12(\eps^0_2(x_1,y)-\eps^0_2(y,x_1))~,\\
                \mu_3(x_1,x_2,x_3)&\coloneqq \tfrac{1}{3!}\sum_{\sigma\in S_3}\chi(\sigma;x_1,x_2,x_3)\Big(\eps^{00}_3(x_{\sigma(1)},x_{\sigma(2)},x_{\sigma(3)})\\
                &\hspace{4.5cm}+\tfrac12 \eps^1_2(\eps^0_2(x_{\sigma(1)},x_{\sigma(2)}),x_{\sigma(3)})\Big)
            \end{aligned}
        \end{equation}
        for all $x\in \frE_0$ and $y\in \frE_{-1}$.
    \end{corollary}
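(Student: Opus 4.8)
The statement is the explicit specialization of the general antisymmetrization in \ref{thm:antisym} to the two-term case, so my plan is to run the homological perturbation lemma of that theorem and simply read off the transferred products, checking that the geometric series truncates to the three formulas claimed. Concretely, I would start from the Chevalley--Eilenberg description of \ref{thm:2-term}: the differential $Q$ on $\caE(V)$ with $V=\frE[1]^*$ is the one recorded in~\eqref{eq:form_2-term-EL_infty_Q}, and I would split it as $Q=Q_0+Q_1+Q_\delta$, with $Q_0$ the canonical deformed-Leibniz differential of \ref{prop:Q0_cohomology}, $Q_1$ the linear part dual to $\eps_1$, and $Q_\delta$ the derivation collecting the remaining structure constants dual to $\eps_2$, $\sfalt$, and $\eps_3$. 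The transferred differential $Q_L$ on $\odot^\bullet V$ is then given by~\eqref{eq:HPL_formula_Q_EL_to_L}, and because $\frE$ is concentrated in only two degrees, at most the first two terms $\sfP_0\circ Q_\delta\circ\sfE_0$ and $-\sfP_0\circ Q_\delta\circ\sfH_0\circ Q_\delta\circ\sfE_0$ can act nontrivially on the generators.

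I would then read off $\mu_1,\mu_2,\mu_3$ in turn. The linear piece $Q_1$ yields $\mu_1=\eps_1$ immediately. Applying $\sfP_0\circ Q_\delta\circ\sfE_0$ to a quadratic monomial, the embedding $\sfE_0$ produces the fully $\oslash_0$-symmetrized tensor, $Q_\delta$ inserts $\eps_2$, and $\sfP_0$ retains only the all-$\oslash_0$ part; dualizing, this is exactly the graded antisymmetrization $\tfrac12(\eps_2(x_1,x_2)-\eps_2(x_2,x_1))$, and likewise on the mixed $\frE_0\otimes\frE_{-1}$ inputs. On cubic monomials the same first-order term antisymmetrizes $\eps_3$ into the $\tfrac1{3!}\sum_\sigma\chi(\sigma;x_1,x_2,x_3)\,\eps_3$ contribution. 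The remaining $\tfrac12\,\sfalt(\eps_2(-,-),-)$ piece of $\mu_3$ must originate entirely from the second-order term $-\sfP_0\circ Q_\delta\circ\sfH_0\circ Q_\delta\circ\sfE_0$: the inner $Q_\delta$ can produce an $\oslash_1$-term through $\sfalt$, the homotopy $\sfH_0$ in its explicit low-order form recorded in \ref{app:homotopy} converts this into a nested $\oslash_0$ expression, and the outer $Q_\delta$ followed by $\sfP_0$ then delivers a symmetric cubic term proportional to $\sfalt(\eps_2(-,-),-)$.

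The main obstacle is the precise evaluation of this second-order term: it requires the explicit contracting homotopy $\sfH_0$ to lowest order together with careful tracking of the Koszul signs generated by the shift $V=\frE[1]^*$ and by the degree bookkeeping of the $\oslash_i$, and one must verify that exactly the coefficient $\tfrac12$ survives the final symmetrization $\sfP_0$. I would control the combinatorics by invoking the observation from the proof of \ref{prop:HPL} that every $\sfH_0$ in the series is precomposed by a $Q_\delta$, since otherwise $\sfP_0\circ\sfH_0=0$ annihilates the channel; this shows that the only surviving cubic correction is the single-$\sfalt$ one and that no further products $\mu_{k\geq4}$ are generated, so the transferred brackets reduce precisely to~\eqref{eq:antisymmetrization_2-term}. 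As an independent cross-check one may substitute these $\mu_k$ into the $L_\infty$-relations and confirm them directly using~\eqref{eq:EL:axioms}; the relative sign flagged in the footnote is then merely an artifact of the orientation convention chosen for $\mu_3$.
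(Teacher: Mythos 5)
Your proposal is correct and follows essentially the same route as the paper: the paper's proof likewise writes the Chevalley--Eilenberg differential \eqref{eq:CE_differential_2-term} in a basis and evaluates the homological-perturbation formula \eqref{eq:HPL_formula_Q_EL_to_L} with the explicit homotopy \eqref{eq:explicit_homotopy_for_transfer_EL_to_L}, obtaining the $\tfrac12\,\sfalt(\eps_2(-,-),-)$ correction from the second-order term of the geometric series exactly as you describe. Your additional remarks on why the series truncates and on the direct check against \eqref{eq:EL:axioms} are consistent with, but not needed beyond, that computation.
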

    \begin{proof}
        We start from the Chevalley--Eilenberg algebra $\sfCE(\frE)$. As always, we assume for convenience that there is a basis, explicitly given by elements $r^\alpha$, $s^a$ of degrees $1$ and $2$ of $\frE[1]^*$. The differential then reads as
        \begin{equation}\label{eq:CE_differential_2-term}
            \begin{aligned}
                Q r^a&=-m^a_a s^a-m^{0,a}_{bc}\,r^b \oslash_0 r^c~,
                \\
                Q s^i&=-m^{0,i}_{a j}\,r^a\oslash_0 s^j+m^{0,a}_{ja}\,s^j\oslash_0 r^a\\
                &~~~+m^{00,i}_{abc}\,(r^a\oslash_0 r^b) \oslash_0 r^c-m^{1,i}_{ab}\,r^a\oslash_1 r^b~,
            \end{aligned}
        \end{equation}
        cf.~\eqref{eq:structure_constants_ELinfty}. We then evaluate formula~\eqref{eq:HPL_formula_Q_EL_to_L} for the homotopy $\sfH_0$:
        \begin{equation}
            \begin{aligned}
                Q_{\rm L} r^a &=-m^a_i s^i-\tfrac12 m^{0,a}_{bc}\,r^b \odot r^c~,
                \\
                Q_{\rm L} s^i&=-(m^{0,i}_{a j}+m^{0,i}_{ja})r^a \odot s^j+\tfrac1{3!}(m^{00,i}_{abc}+\tfrac12 m^{1,i}_{dc}m^{0,d}_{ab})r^a\odot r^b\odot r^c~.
            \end{aligned}
        \end{equation}
        This is the differential for the Chevalley--Eilenberg algebra of $(\frE,\mu_i)$ with the higher products~\eqref{eq:antisymmetrization_2-term}.
    \end{proof}
    
    As observed already in~\cite{Roytenberg:0712.3461}, the antisymmetrization map is functorial for 2-term $E_2L_\infty$-algebras, and any morphism of $E_2L_\infty$-algebra induces a morphism between the corresponding antisymmetrized $L_\infty$-algebras.
    
    A new result is that this antisymmetrization map lifts indeed to a quasi-isomorphism of $E_2L_\infty$-algebras. 
    \begin{theorem}\label{thm:quasi-iso_of_antisym}
        Let $\frE$ be an $E_2L_\infty$-algebra, and let $\frE'$ be the $L_\infty$-algebra induced by \ref{thm:antisym}, regarded as an $E_2L_\infty$-algebra. Then there is a quasi-isomorphism $\phi:\frE\rightarrow \frE'$.
    \end{theorem}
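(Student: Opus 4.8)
The plan is to prove the dual statement about Chevalley--Eilenberg algebras. By \ref{def:morphisms}, a quasi-isomorphism $\phi\colon\frE\to\frE'$ amounts to a morphism of $\opEilh$-algebras $\Phi\colon\sfCE(\frE')\to\sfCE(\frE)$ whose linear part induces an isomorphism on cohomology. Writing $V=\frE[1]^*$, the antisymmetrization \ref{thm:antisym} leaves the linear part $\eps_1$ untouched, so $\frE$ and $\frE'$ have the \emph{same} underlying complex and hence the same cohomology; a morphism $\Phi$ with linear part $\sfid_V$ is therefore automatically a quasi-isomorphism. Since the underlying $\opEilh$-algebra $\caE(V)$ of $\sfCE(\frE')$ is free on $V$, such a $\Phi$ is uniquely determined by its restriction $\Phi|_V\colon V\to\caE(V)$, and compatibility with every product $\oslash_i$ is then automatic. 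Thus the whole problem reduces to choosing $\Phi|_V$ with leading term $\sfid_V$ so that the freely extended $\Phi$ is a chain map.

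The geometry of the situation is that both algebras retract onto the same transferred model. Recall from \ref{thm:antisym} that $Q=Q_0+Q_1+Q_\delta$ transfers, along the contracting homotopy \eqref{eq:homotopy_for_transfer_EL_to_L} and the perturbation \ref{prop:HPL}, to the $L_\infty$-differential $Q_L$ on $(\bigodot^\bullet V,Q_L)$, producing a perturbed deformation retract $(\sfP,\sfE,\sfH)$ of $(\caE(V),Q)$ onto $(\bigodot^\bullet V,Q_L)$. On the other side, $\frE'$ regarded as an $EL_\infty$-algebra has, by \ref{thm:el_infty_contain_l_infty} and the non-linear version of \ref{thm:lift_dgca_to_Eilh}, the Chevalley--Eilenberg algebra $(\caE(V),\hat Q_L)$ with $\hat Q_L=Q_0+\sfE_0\circ Q_L\circ\sfP_0$, and it retracts onto the \emph{same} $(\bigodot^\bullet V,Q_L)$ through the \emph{trivial} maps $\sfP_0,\sfE_0,\sfH_0$ of \ref{prop:Q0_cohomology}. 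I would take as leading candidate the composite $\sfE\circ\sfP_0\colon(\caE(V),\hat Q_L)\to(\caE(V),Q)$, which is an honest chain map with linear part $\sfid_V$, and define $\Phi$ to be the unique multiplicative extension of $(\sfE\circ\sfP_0)|_V$ to the generators $V$.

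It then remains to verify $Q\circ\Phi=\Phi\circ\hat Q_L$. The decisive simplification is that, for a \emph{multiplicative} $\Phi$, this identity propagates from generators to all of $\caE(V)$ by induction on word length: applying the deformed Leibniz rule \eqref{eq:def_Leibniz} for $Q$ on one side and for $\hat Q_L$ on the other, the anomalous $\oslash_{i+1}$-corrections take the identical form $\Phi(a)\oslash_{i+1}\Phi(b)$ on both sides and cancel, so the inductive step closes using only the generator-level identity together with the relations \eqref{eq:Eilh-relations}. The entire proof therefore collapses to the single equation $Q\big(\Phi|_V(v)\big)=\Phi\big(\hat Q_L v\big)$ for $v\in V$, an equation in $\caE(V)$ for the higher Taylor components of $\Phi|_V$, to be solved recursively in word length with prescribed linear part $\sfid_V$.

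The main obstacle lies exactly here. Because $\sfE\circ\sfP_0$ is \emph{not} strictly $\oslash_i$-multiplicative, its multiplicative extension $\Phi$ no longer agrees with it on decomposables, so the honest chain-map property of $\sfE\circ\sfP_0$ does not transfer to $\Phi$ for free; instead one must show that the recursion for the higher components of $\Phi|_V$ closes, i.e.\ that the obstruction at each word-length order is $\eps_1$-exact and is absorbed by the contracting homotopy $\sfH_0$. Verifying this cancellation is precisely the deformed-Leibniz bookkeeping that the paper elsewhere entrusts to computer algebra. A cleaner conceptual route I would pursue in parallel is to read the statement off the homotopy transfer theorem for the cofibrant operad $\opELinfty=\sfT(s^{-1}\overline{\ophLie^\antishriek})$: the retract of \ref{thm:antisym} yields not only the transferred structure $\frE'$ but also a canonical $\infty$-quasi-isomorphism, which, by the universal property of the free algebra $\caE(V)$, is exactly a strict $\opEilh$-morphism $\Phi$ with linear part $\sfid_V$, establishing the claim without explicit computation.
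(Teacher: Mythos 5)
Your proposal follows essentially the same route as the paper: dualize to the Chevalley--Eilenberg picture, observe that a multiplicative $\Phi$ with linear part $\sfid_V$ is automatically a quasi-isomorphism and is determined by (and need only be checked on) the generators $V$, and take for $\Phi|_V$ the geometric series $(\sfid+\sfH_0\circ Q_\delta)^{-1}$ --- which is exactly what your candidate $(\sfE\circ\sfP_0)|_V$ reduces to on $V$. The ``obstacle'' you flag at the end is handled in the paper not by a recursive construction of higher Taylor components but by a direct verification of $Q\Phi v=\Phi Q'v$ on generators using the side relations $Q_0\circ\sfH_0=\sfid-\sfE_0\circ\sfP_0-\sfH_0\circ Q_0$ and $Q_0Q_\delta=-Q_\delta^2-Q_\delta Q_0$, after which invertibility of $\Phi$ gives the quasi-isomorphism.
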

    \begin{proof}
        We prove this statement again using the Chevalley--Eilenberg algebras $\sfCE(\frE)$ and $\sfCE(\frE')$. Note that as graded vector spaces, $\frE[1]^*=\frE'[1]^*$. If $Q=Q_0+Q_1+Q_\delta$ is the differential on $\sfCE(\frE)$, then the differential on $\sfCE(\frE')$ reads as
        \begin{equation}\label{eq:lifted_differential}
            Q'v=Q_1v+\sfE_0\circ \sfP_0\circ (\sfid+Q_\delta\circ \sfH_0)^{-1}\circ Q_\delta v
        \end{equation}
        for all $v\in \frE[1]^*$. We need to construct an invertible $\opEilh_2$-algebra morphism $\Phi:\caE(\frE[1]^*)\rightarrow \caE(\frE[1]^*)$ satisfying $Q\Phi=\Phi Q'$. The desired morphism on $\frE[1]^*$ is\footnote{This morphism implements a coordinate transformation such that the image of $Q$ on $\tilde v=\Phi(v)$ has no component in the subspace $Q_0\sfH_0\caE(\frE[1]^*)$. This then implies that it has no component in $\sfH_0 Q_0\caE(\frE[1]^*)$ either. The only remaining component is in $\sfE_0\sfP_0\caE(\frE[1]^*)$, which implies that $Q$ is the Chevalley--Eilenberg differential of an $L_\infty$-algebra, trivially regarded as an $E_2L_\infty$-algebra.}
        \begin{equation}
            \Phi(v)=(\sfid-\sfH_0\circ Q_\delta+\sfH_0\circ Q_\delta\circ \sfH_0\circ Q_\delta-\ldots)(v)=(\sfid+\sfH_0\circ Q_\delta)^{-1}(v)~,
        \end{equation}
        and using 
        \begin{equation}
            \begin{aligned}
                Q_0\circ \sfH_0&=\sfid-\sfE_0\circ \sfP_0-\sfH_0\circ Q_0~,
                \\
                Q_0Q_\delta&=-Q_\delta^2-Q_\delta Q_0~,
            \end{aligned}
        \end{equation}
        one readily verifies that $Q\Phi v=\Phi Q'v$ for all $v\in \frE[1]^*$, which is sufficient. Since the morphism is clearly invertible, this is a quasi-isomorphism.
    \end{proof}
    
    We can now combine \ref{thm:quasi-iso_of_antisym}, \ref{thm:lift_L_infty_morphism} as well as the strictification theorem for $L_\infty$-algebras to obtain the following.
    \begin{corollary}\label{thm:strictification}
        Any $E_2L_\infty$-algebra is quasi-isomorphic to a differential graded Lie algebra, trivially regarded as an $E_2L_\infty$-algebra.
    \end{corollary}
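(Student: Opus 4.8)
The plan is to chain together the three ingredients already available: the antisymmetrization quasi-isomorphism of \ref{thm:quasi-iso_of_antisym}, the lifting of $L_\infty$-morphisms to $EL_\infty$-morphisms established in \ref{thm:lift_L_infty_morphism}, and the classical strictification theorem for $L_\infty$-algebras, which asserts that every $L_\infty$-algebra is quasi-isomorphic to a differential graded Lie algebra. The only genuinely new content is the bookkeeping that shows quasi-isomorphism of $EL_\infty$-algebras to be transitive across these constructions, so that the composite lands at an honest differential graded Lie algebra regarded as an $EL_\infty$-algebra.

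Concretely, starting from an arbitrary $EL_\infty$-algebra $\frE$, I would first invoke \ref{thm:quasi-iso_of_antisym} to produce a quasi-isomorphism $\phi:\frE\rightarrow \frE'$, where $\frE'$ is the $L_\infty$-algebra obtained by the antisymmetrization of \ref{thm:antisym}, regarded trivially as an $EL_\infty$-algebra via \ref{thm:el_infty_contain_l_infty}. Since $\frE'$ is now a genuine $L_\infty$-algebra, the strictification theorem supplies a differential graded Lie algebra $\frg$ together with an $L_\infty$-quasi-isomorphism relating $\frg$ and $\frE'$. Applying \ref{thm:lift_L_infty_morphism} then lifts this $L_\infty$-morphism to a morphism $\hat\psi$ between the underlying $EL_\infty$-algebras $\hat\frg$ and $\frE'$, where $\hat\frg$ denotes $\frg$ regarded as an $EL_\infty$-algebra. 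Because the linear component of an $EL_\infty$-morphism is dual to the linear part $\eps_1^*$ of the Chevalley--Eilenberg differential, and composition of $EL_\infty$-morphisms dualizes to composition of these linear parts, the map induced on $\eps_1$-cohomology by a composite is the composite of the induced maps. Hence $\hat\psi$ is again a quasi-isomorphism, and $\frE$, $\frE'$, and $\hat\frg$ all belong to a single quasi-isomorphism class.

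The main obstacle is bookkeeping rather than conceptual: one must verify that the definition of quasi-isomorphism for $EL_\infty$-algebras, phrased as the existence of a single morphism whose linear part induces a cohomology isomorphism, generates a genuine equivalence relation. This follows because the linear component $\phi_1$ of an $EL_\infty$-morphism is a chain map of the complexes $(\frE,\eps_1)$, because such linear components compose, and because an isomorphism on cohomology admits an inverse on cohomology; consequently, having $\frE$ and $\hat\frg$ both quasi-isomorphic to $\frE'$ forces them to be quasi-isomorphic to each other, irrespective of the direction in which the strictification morphism is produced. A secondary caveat worth flagging is that the strictification theorem for $L_\infty$-algebras is classically available over a field of characteristic zero, where the corollary holds verbatim; in positive characteristic it is precisely the purpose of the $EL_\infty$-framework, via the cofibrancy of $\ophLie$ discussed in \ref{ssec:Q_small_cofibrant}, to furnish such a strict model, so that the conclusion persists with $EL_\infty=\ophLie_\infty$ itself playing the role of the resolution of $\opLie$.
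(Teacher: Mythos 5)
Your proposal is correct and follows exactly the route the paper takes: it combines \ref{thm:quasi-iso_of_antisym}, the classical strictification theorem for $L_\infty$-algebras, and the lift of $L_\infty$-morphisms from \ref{thm:lift_L_infty_morphism}, with the remaining work being the transitivity bookkeeping you describe. The paper states this chain in one line (and also notes the alternative of citing the general strictification theorem for homotopy algebras over a Koszul operad), so your write-up is a faithful, slightly more detailed version of the same argument.
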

    More directly, this follows from the strictification theorem for generic homotopy algebras, see e.g.~\cite[Proposition 11.4.9]{Loday:2012aa}.
    
    We note that this expected and natural theorem does not hold for the 2-term $EL_\infty$-algebras of~\cite{Roytenberg:0712.3461}, since the classification of these 2-term $EL_\infty$-algebras is generally larger than that of $L_\infty$-algebras~\cite[Theorem 4.5]{Roytenberg:0712.3461}.
    
    A consequence of the strictification theorem and homotopy transfer is the following. Just as for $\ophLie_2$-algebras, we can also tensor an $E_2L_\infty$-algebra by a differential graded commutative algebra\footnote{One may be tempted to replace the differential graded commutative algebra with an $\opEilh_2$-algebra, but already the product between an $\opEilh_2$-algebra and an $\ophLie_2$-algebra does {\em not} carry a natural $\ophLie_2$-algebra structure.}:
    \begin{theorem}
        The tensor product of an $E_2L_\infty$-algebra and a differential graded commutative algebra carries a natural $E_2L_\infty$-algebra structure.
    \end{theorem}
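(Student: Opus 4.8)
The plan is to realize this as a direct consequence of the strictification theorem together with the homotopy transfer machinery of \ref{ssec:homotopy_transfer}, reducing the general case to the already-established \ref{prop:tensor_product_dgca_hLie} for $\ophLie$-algebras. Let $\frE$ be the given $EL_\infty$-algebra and let $\frA$ be the differential graded commutative algebra, both over the ground field $K$. The obstruction to a naive construction is precisely the one flagged in the footnote: the higher products $\eps_n^I$ of $\frE$ do not combine with the product of $\frA$ in any evident way, so one cannot simply write down the structure maps on $\frA\otimes\frE$. The idea is therefore to replace $\frE$ by a strict model, build the tensor product there, and transport the resulting $EL_\infty$-structure back along a contraction.

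Concretely, I would proceed in four steps. First, apply \ref{thm:strictification} to obtain a differential graded Lie algebra $\frg$ together with a quasi-isomorphism $\frE\simeq\frg$; since everything is defined over a field, the underlying chain map can be completed to a deformation retract of complexes of the form \eqref{eq:HE_data_diff_complex}, with contracting homotopy $\sfh$, arranged so that $\frE$ is the retract side. Second, since a differential graded Lie algebra is in particular an $\ophLie$-algebra (a strict $EL_\infty$-algebra), \ref{prop:tensor_product_dgca_hLie} endows $\frA\otimes\frg$ with a natural $\ophLie$-algebra, hence $EL_\infty$-algebra, structure. Third, tensoring the deformation retract with the complex underlying $\frA$ yields a deformation retract between the complexes underlying $\frA\otimes\frE$ and $\frA\otimes\frg$, with maps $\sfid_\frA\otimes\sfp$, $\sfid_\frA\otimes\sfe$, and homotopy $\sfid_\frA\otimes\sfh$; the defining relations \eqref{eq:homotopy_transfer_relations_dual} survive because $\frA$ enters only through the identity and its own differential, which commutes past the $\frE$-factor up to the usual Koszul sign. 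Fourth, regard the higher products of the $\ophLie$-algebra $\frA\otimes\frg$ as a perturbation $Q_\delta$ of the linear differential on its Chevalley--Eilenberg algebra, and apply the homological perturbation \ref{prop:HPL} to transport this $\opEilh$-differential across the lifted retract onto $\sfCE(\frA\otimes\frE)$. Because \ref{prop:HPL} produces maps that are genuine morphisms of $\opEilh$-algebras and a square-zero transferred differential, the result is exactly an $EL_\infty$-structure on $\frA\otimes\frE$, whose linear part agrees with the tensor-product differential.

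The main obstacle I anticipate is the directionality of the contraction. The perturbation lemma transports structure from the complex carrying the homotopy onto its retract, so I need $\frA\otimes\frE$ to appear as the retract of $\frA\otimes\frg$, i.e.~the $\ophLie$-structure must sit on the ``large'' side. This forces $\frE$ itself to be a deformation retract of the strict model $\frg$; while any two quasi-isomorphic complexes over a field are linked by a contraction, producing it in this prescribed direction requires either choosing $\frg$ large enough (which the cobar--bar strictification underlying \ref{thm:strictification} does automatically) or, failing that, passing through the common cohomology $H^\bullet_{\eps_1}(\frE)$ and composing two transfers. A secondary, purely bookkeeping difficulty is tracking the Koszul signs through the tensored homotopy and through the geometric series \eqref{eq:HPL_geom_series}, so as to confirm that the transferred products genuinely satisfy the $EL_\infty$-relations rather than some twisted variant. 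Finally, to justify the word \emph{natural} one should check that a different choice of strictification or contraction yields a quasi-isomorphic $EL_\infty$-structure; this follows from uniqueness-up-to-homotopy of minimal models together with \ref{thm:quasi-iso_of_antisym}, and I would relegate it to a remark rather than belabour it in the main argument.
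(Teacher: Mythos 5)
Your proposal is correct and follows essentially the same route as the paper: strictify $\frE$ via \ref{thm:strictification}, endow the tensor product of $\frA$ with the strict model with an $\ophLie$-structure via \ref{prop:tensor_product_dgca_hLie}, and transfer back to $\frA\otimes\frE$ using the homological perturbation lemma. The extra care you take with the directionality of the contraction and with naturality goes beyond the paper's (very terse) argument but does not change the method.
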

    \begin{proof}
        We can invoke the argument presented in~\cite{Borsten:2021hua} for the existence of general tensor products between certain homotopy algebras. That is, by \ref{thm:strictification}, $\frE$ is quasi-isomorphic to a hemistrict $E_2L_\infty$-algebra $\frE^{\rm hst}$, and the chain complexes $\frA\otimes \frE^{\rm hst}$ and $\frA\otimes \frE$ are quasi-isomorphic. By \ref{prop:tensor_product_dgca_hLie}, $\frA\otimes \frE^{\rm hst}$ carries an $\ophLie_2$-algebra structure, and the homological perturbation lemma allows us to perform a homotopy transfer from $\frA\otimes \frE^{\rm hst}$ to $\frA\otimes \frE$, leading to the desired $E_2L_\infty$-algebra structure.
    \end{proof}
    Instead of using the above elegant but abstract argument, we can also perform a direct computation in the dual Chevalley--Eilenberg picture. This leads to the following explicit formulas for the tensor product $\hat \frE=\frA\otimes \frE$ of a differential graded commutative algebra $\frA$ and an $E_2L_\infty$-algebra $\frE$:
    \begin{equation}
        \begin{gathered}
            \hat \frE\coloneqq \frA\otimes \frE=\oplus_{k\in \IZ}(\frA\otimes \frE)_k~,~~~(\frA\otimes \frE)_k=\sum_{i+j=k} \frA_i\otimes \frE_j~,
            \\
            \hat\eps_1(a_1\otimes x_1)=(\rmd a_1)\otimes x_1+(-1)^{|a_1|}a_1\otimes \eps_1(x_1)~,
            \\
            \hat\eps^I_k(a_1\otimes x_1,\ldots,a_k\otimes x_k)=(-1)^{|I|(|a_1|+\ldots+|a_k|)}(a_1\ldots a_k)\otimes \eps^I_k(x_1,\ldots,x_k)~.
        \end{gathered}
    \end{equation}
    
    \subsection{Examples: String Lie algebra models}\label{ssec:string_models}
    
    Let us illustrate the above structure theorems using the important and archetypal examples of 2-term $E_2L_\infty$-algebra models for the string Lie algebra. We have already encountered the $E_2L_\infty$-algebras $\frstring^{\rmwk,\vartheta}_\rmsk(\frg)$ in~\eqref{eq:skeltal_model_string_Lie_2_algebra}. A short computation using formulas~\eqref{eq:antisymmetrization_2-term} shows that these all antisymmetrize to the following 2-term $L_\infty$-algebra:
    \begin{equation}\label{eq:skeletal_string}
        \begin{gathered}
            \frstring_\rmsk(\frg)~\coloneqq ~(\IR\xrightarrow{~0~}\frg)~,
            \\
            \mu_2(x_1,x_2)=[x_1,x_2]~,~~~\mu_3(x_1,x_2,x_3)=(x_1,[x_2,x_3])~.
        \end{gathered}
    \end{equation}
    It turns out that this $L_\infty$-algebra (which is a minimal model for its quasi-isomorphism class) is quasi-isomorphic to a strict one,
    \begin{equation}
        \begin{gathered}
            \frstring_\rmlp(\frg)~\coloneqq ~(L_0\frg\oplus\IR\frg\xrightarrow{~\mu_1~}P_0\frg)~,
            \\
            \mu_1(\lambda,r)=\lambda~,
            \\
            \mu_2(\gamma_1,\gamma_2)=[\gamma_1,\gamma_2]~,~~~\mu_2(\gamma_1,(\lambda,r))=([\gamma_1,\lambda],2\int_0^1\rmd \tau~(\dot \gamma_1,\lambda)~,
            \\
            \mu_3(\gamma_1,\gamma_2,\gamma_3)=0~,
        \end{gathered}
    \end{equation}
    where $L_0\frg$ and $P_0\frg$ are the based path and based loop spaces of $\frg$, cf.~\cite{Baez:2005sn}. There are two quasi-isomorphisms,
    \begin{equation}
        \begin{tikzcd}
            \frstring_\rmsk(\frg) \arrow[r,bend left=10]{}{\psi} & \frstring_\rmlp(\frg) \arrow[l,bend left=10]{}{\phi}
        \end{tikzcd}~,
    \end{equation}
    and their explicit forms are found e.g.~in~\cite{Saemann:2019dsl}. This implies that there is a quasi-isomorphic family of $E_2L_\infty$-algebras that antisymmetrize to $\frstring_\rmlp(\frg)$, which is readily found:
    \begin{equation}
        \begin{gathered}
            \frstring^{\rmwk,\vartheta}_\rmlp(\frg)~\coloneqq ~(L_0\frg\oplus \IR\xrightarrow{~\eps_1~}P_0\frg)~,
            \\
            \eps_1(\lambda,r)=\lambda~,
            \\
            \eps^0_2(\gamma_1,\gamma_2)=[\gamma_1,\gamma_2]~,~~~\eps^0_2(\gamma_1,(\lambda,r))=\left([\gamma_1,\lambda],2\int_0^1\rmd \tau~(\dot \gamma_1,\lambda)\right)~,
            \\
            \eps_2^1(\gamma_1,\gamma_2)=\big(0,2\vartheta(\dpar\gamma_1,\dpar\gamma_2)\big)
            \\
            \eps^{00}_3(\gamma_1,\gamma_2,\gamma_3)=\vartheta(\dpar \gamma_1,[\dpar \gamma_2,\dpar \gamma_3])~.
        \end{gathered}
    \end{equation}
    
    Altogether, we can summarize the situation in the following commutative diagram:
    \begin{equation}
        \begin{tikzcd}
            \frstring_\rmsk^{\rmwk,\vartheta}(\frg) \arrow[r,bend left=10]{}{\hat\psi} \arrow[d,"\rm asym"] & \frstring_\rmlp^{\rmwk,\vartheta}(\frg) \arrow[d,"\rm asym"] \arrow[l,bend left=10]{}{\hat\phi}
            \\
            \frstring_\rmsk(\frg) \arrow[r,bend left=10]{}{\psi} & \frstring_\rmlp(\frg) \arrow[l,bend left=10]{}{\phi}
        \end{tikzcd}
    \end{equation}
    The morphisms ${\rm asym}$ are special cases of the antisymmetrization map~\eqref{eq:antisymmetrization_2-term}, and the morphisms $\hat\phi$ and $\hat\psi$ are formed by lifts of the morphisms $\phi$ and $\psi$ as given by~\ref{thm:lift_L_infty_morphism}.
    
    Generically, on top of every $L_\infty$-algebra, there is a family of $E_2L_\infty$-algebras that antisymmetrize to it. The additional structure constants contained in the alternators of the $E_2L_\infty$-algebra will turn out to provide an important information for the construction of local connection forms on higher principal bundles.
    
    \section{Relations to other algebras}
    
    In the following, we explain the relation between $E_2L_\infty$-algebras and homotopy Leibniz algebras and, in particular, to differential graded Lie algebras. The latter prepares our interpretation of generalized geometry and the tensor hierarchies.    
    
    \subsection{Relation to homotopy Leibniz algebras}
    
    Just as Lie algebras are Leibniz algebras that happen to have an antisymmetric Leibniz bracket, $E_2L_\infty$-algebras are $\opLeib_{\infty}$-algebras whose higher Leibniz brackets are antisymmetric up to a homotopy. Homotopy Leibniz algebras were defined in~\cite{Ammar:0809.4328,Khudaverdyan:2013cta}, and they are the homotopy algebras over the Zinbiel operad $\opZinb$~\cite{JSTOR:24491899,Ginzburg:0709.1228} which, as suggested by the name\footnote{This nomenclature is a successful joke suggested by J.~M.~Lemaire. Zinbiel algebras are also known as {\em (commutative) shuffle algebras}, and the free Zinbiel algebra over a vector space is the shuffle algebra on its tensor algebra.}, is the Koszul-dual to the Leibniz operad $\opLeib$. 
    
    Explicitly, consider the semifree non-associative tensor algebra $\oslash_0^\bullet V$ for a graded vector space $V$ with only the first relation of~\eqref{eq:Eilh-relations} imposed. A (nilquadratic) differential $Q$ on this algebra which satisfies the ordinary Leibniz rule then defines a homotopy Leibniz algebra. All the additional structure in $\opEilh_2$ (as well as the resulting additional structure in $E_2L_\infty$-algebras) capture the appropriate notion of symmetry up to homotopy of the higher Leibniz brackets.
    
    Ordinary Leibniz algebras form an interesting source of 2-term $\ophLie_2$-algebras, which had been observed before:
    \begin{proposition}[\cite{Roytenberg:0712.3461}]\label{prop:Leib_is_hemistrict_ELinfty}
        Any Leibniz algebra induces canonically a hemistrict 2-term $E_2L_\infty$-algebra concentrated in degrees $-1$ and $0$.
    \end{proposition}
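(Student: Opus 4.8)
The plan is to exhibit the canonical two‑term complex explicitly and then reduce the claim to the relations of \ref{thm:2-term} with $\eps_3=0$. Given a (left) Leibniz algebra $(\frg,[-,-])$, I would place one copy of $\frg$ in each of the degrees $0$ and $-1$, take $\frE\colon \frE_{-1}\xrightarrow{~\eps_1~}\frE_0$ with $\eps_1=\mathsf{id}_\frg$, and define
\begin{equation}
    \eps_2^0(x,y)=[x,y]~,\qquad \eps_2^1(x,y)=[x,y]+[y,x]~,
\end{equation}
where $\eps_2^0$ is extended to mixed degrees by the same formula under the identifications $\frE_0\cong\frE_{-1}\cong\frg$ (and vanishes on $\frE_{-1}\otimes\frE_{-1}$ for degree reasons), while $\eps_2^1$ is valued in $\frE_{-1}$ and vanishes whenever an argument lies in degree $-1$. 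All higher products are set to zero, so that the candidate is manifestly hemistrict. Since $\eps_1=\mathsf{id}$ serves as its own contracting homotopy, every relation will descend to an identity about the ungraded bracket on $\frg$, which keeps the Koszul signs under control.

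Next I would dispatch the easy axioms of \eqref{eq:EL:axioms}. The symmetry $\sfalt(x_1,x_2)=\sfalt(x_2,x_1)$ is immediate, as $\eps_2^1$ is by definition the symmetric part of the bracket. The chain‑map property of $\eps_2$ and the chain‑homotopy property of $\sfalt$ reduce, under $\eps_1=\mathsf{id}$, to the tautology $\eps_1\eps_2^1(x_1,x_2)=\eps_2^0(x_1,x_2)+\eps_2^0(x_2,x_1)$ together with its mixed‑degree analogues, all of which hold on the nose. The quartic relation in \eqref{eq:EL:axioms} contains $\eps_3$ in every term and so collapses to $0=0$ once $\eps_3=0$.

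The heart of the proof is the pair of cubic relations. With $\eps_3=0$, the second relation reads $\eps_2(\sfalt(x_1,x_2),x_3)=0$, i.e.\ $[\,[x_1,x_2]+[x_2,x_1],x_3\,]=0$: the symmetrised bracket must lie in the left annihilator. I would prove this directly from the left Leibniz identity $[a,[b,c]]=[[a,b],c]+[b,[a,c]]$, which gives $[[x_1,x_2],x_3]=[x_1,[x_2,x_3]]-[x_2,[x_1,x_3]]$, so that adding the analogue with $x_1\leftrightarrow x_2$ yields exactly $[\,[x_1,x_2]+[x_2,x_1],x_3\,]=0$. The third relation, again with $\eps_3=0$, becomes
\begin{equation}
    \sfalt(\eps_2(x_1,x_2),x_3)+\sfalt(x_2,\eps_2(x_1,x_3))-\eps_2(x_1,\sfalt(x_2,x_3))=0~;
\end{equation}
expanding each $\sfalt$ into its two bracket terms and reducing every double bracket to left‑normed form with the same Leibniz identity, the six resulting monomials cancel in pairs.

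The main obstacle is therefore the single algebraic fact that the symmetric part of a Leibniz bracket is left‑annihilating; once this is secured the rest is sign bookkeeping. A clean cross‑check I would run is the dual argument: realise the above data as a Chevalley--Eilenberg differential $Q$ on $\caE(V)$ with $V=\frE[1]^*$, restrict to $\caE_{\{0,1\}}(V)$ by \ref{thm:restriction}, and verify $Q^2|_V=0$, which once more collapses to the left‑annihilator identity. Finally I would note that replacing the degree $-1$ copy of $\frg$ by the Leibniz kernel $\mathsf{span}\{[x,y]+[y,x]\}$ (a two‑sided ideal on which left multiplication vanishes), with $\eps_1$ its inclusion, passes the identical verification but has cohomology $H^0=\frg/\mathsf{span}\{[x,y]+[y,x]\}$ equal to the associated Lie algebra, clarifying the sense in which the assignment is canonical.
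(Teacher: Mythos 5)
Your verification is correct and, in fact, more detailed than the paper's, which merely exhibits the construction and leaves the axiom check implicit. The genuine difference lies in the choice of the degree~$-1$ component: the paper takes $\frE_{-1}=\frg^{\rm ann}=[\frg,\frg]$ with $\eps_1$ the inclusion, whereas your primary construction takes a full copy of $\frg$ with $\eps_1=\sfid$, relegating a smaller choice (the Leibniz kernel $\mathrm{span}\{[x,y]+[y,x]\}$) to a closing remark. All three choices pass the same verification — the image of the alternator lies in each candidate $\frE_{-1}$, and the only nontrivial input is the left-annihilator identity $[[x_1,x_2]+[x_2,x_1],x_3]=0$, which you derive correctly from the left Leibniz identity, exactly the content of the axiom $\eps_2(\sfalt(x_1,x_2),x_3)=\eps_3(x_1,x_2,x_3)+\eps_3(x_2,x_1,x_3)=0$ — so any of them proves the existence claim. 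Note, however, that the three resulting $EL_\infty$-algebras are pairwise non-quasi-isomorphic: yours is acyclic and hence equivalent to the trivial $EL_\infty$-algebra, the paper's has $H^0=\frg/[\frg,\frg]$, and your closing variant has $H^0$ equal to the associated Lie algebra $\frg_{\rm Lie}$. For the purposes for which the paper invokes this proposition (e.g.\ the integration/coquecigrue discussion), an acyclic model carries no information up to quasi-isomorphism, so your final variant — or the paper's — is the more meaningful ``canonical'' object; your primary construction, while a valid proof of the statement as written, is the least informative of the three.
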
   
    \noindent Explicitly, let $\frg$ be a Leibniz algebra, and write $\frg^{\rm ann}=[\frg,\frg]$. Then 
    \begin{equation}
        \frE(\frg)=\big(\frE(\frg)_{-1}\xrightarrow{~\eps_1~}\frE(\frg)_{0}\big)\coloneqq \big(\frg^{\rm ann} \xhookrightarrow{~~~~} \frg\big)
    \end{equation}
    is a differential graded Leibniz algebra, and we promote it to a 2-term $E_2L_\infty$-algebra by 
    \begin{equation}
        \sfalt(e_1,e_2)\coloneqq [e_1,e_2]+[e_2,e_1]\in \frg^{\rm ann}
    \end{equation}
    for all $e_1,e_2\in \frg$.\footnote{We note that this result, together with \ref{thm:antisym}, immediately implies that any Leibniz algebra gives rise to a 2-term $L_\infty$-algebra as shown separately in~\cite{Sheng:2015:1-5}.}

    \subsection{\texorpdfstring{$\ophLie_2$}{hLie2}-algebras from differential graded Lie algebras and derived brackets}\label{ssec:antisym_hLie}
    
    Given a differential graded Lie algebra $\frg=\bigoplus_{k\in \IZ}\frg_k$, one can construct an associated $L_\infty$-algebra on the grade-shifted partial complex $\frL=\bigoplus_{k\leq 0}\frg[1]$. As explained in~\cite{Getzler:1010.5859}, this is a corollary to the result of~\cite{Fiorenza:0601312} that the mapping cone of a morphism between two differential graded Lie algebras carries a natural $L_\infty$-algebra structure. In this section, we present a refinement of this associated $L_\infty$-algebra to an $\ophLie_2$-algebra. The existence of the $L_\infty$-algebra is then a corollary to the antisymmetrization \ref{thm:antisym}. Our construction extends the construction of $\opLeib_\infty$-algebras from $\opLeib$-algebras in~\cite{Uchino:0902.0044} as well as the construction of 2-term $E_2L_\infty$-algebras from 3-term differential graded Lie algebras in~\cite{Roytenberg:0712.3461}. 
    
    Given a differential graded Lie algebra, we readily construct a grade-shifted $\ophLie_2$-algebra.
    \begin{theorem}\label{thm:ophLie_from_dgLA}
        Given a differential graded Lie algebra $(\frg,\rmd,\{-,-\})$ with $\frg=\bigoplus_{k\in \IZ}\frg_k$, we have an associated $\ophLie_2$-algebra 
        \begin{equation}
            \frE=\bigoplus_{k\leq 0} \frE_k~,~~~\frE_k=\frg_{k-1}
        \end{equation}
        with higher products
        \begin{equation}
            \begin{aligned}
                \eps_1(x_1)&\coloneqq \begin{cases}
                    \rmd_\frg x~&\mbox{for}~|x|_\frE<0~,
                    \\
                    0 &\mbox{else}~,
                \end{cases}
                \\
                \eps_2^i(x_1,x_2)&\coloneqq \begin{cases}
                    \{\delta x_1,x_2\} & \mbox{for}~i=0~,
                    \\
                    (-1)^{|x_1|_\frE}\{x_1,x_2\} & \mbox{for}~i=1~,
                    \\
                    0& \mbox{else}
                \end{cases}
            \end{aligned}
        \end{equation}
        for all $x_1,x_2\in \frE$. Here, $\delta\coloneqq \rmd_{\frg}|_{\frg_{-1}}$ and $|x_1|_\frE$ denotes the degree of $x_1$ in $\frE$.
    \end{theorem}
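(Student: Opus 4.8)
The plan is to verify the defining relations \eqref{eq:hLie-relations} directly for the stated $\eps_1$ and $\eps_2^i$. The decisive simplification is that only $\eps_2^0$ and $\eps_2^1$ are non-zero: every instance of \eqref{eq:hLie-relations} involving a product $\eps_2^i$ with $i\geq 2$ (or the convention $\eps_2^{-1}=0$) then trivializes, so that only a finite list of identities survives --- the first three relations for $i=0$ and for $i=1$, together with the two $j<i$ relations in the single case $(j,i)=(0,1)$. Before checking these I would fix two bookkeeping points. First, an element of $\frE_k$ is an element of $\frg_{k-1}$, so $|x|_\frg=|x|_\frE-1$, and the factor $(-1)^{|x_1|_\frE}$ in $\eps_2^1$ is exactly the d\'ecalage sign that reconciles the $\frg$-signs with the shifted $\frE$-signs. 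Second, the differential of $\frg$ splits over $\frE$ as $\rmd_\frg=\eps_1+\delta$, where $\eps_1$ is the internal differential of the truncated complex (it agrees with $\rmd_\frg$ in $\frE$-degrees $<0$ and is set to zero in degree $0$) and $\delta=\rmd_\frg|_{\frg_{-1}}$ is the boundary map $\frg_{-1}\to\frg_0$ that $\eps_1$ discards; in particular $\eps_2^0(x_1,x_2)=\{\delta x_1,x_2\}$ is non-zero only when $x_1$ has a component in $\frg_{-1}$.

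With this set-up each surviving relation collapses onto a single differential graded Lie algebra axiom. The relation $\eps_1\eps_1=0$ is immediate from $\rmd_\frg^2=0$ and the truncation. The second relation for $i=0$ states that $\eps_1$ is a graded derivation of the derived bracket $\eps_2^0$; I would obtain it from the derivation property of $\rmd_\frg$ together with $\rmd_\frg^2=0$, treating separately the degrees in which the outer $\eps_1$ is the full differential and those in which it is truncated. The third relation for $i=0$ is the (left) Leibniz identity for $\eps_2^0$, i.e.\ the classical fact that a derived bracket is graded Leibniz; it follows from the graded Jacobi identity once $\rmd_\frg$ has been moved onto the inner arguments via the derivation rule, and it is precisely the statement that $(\frE,\eps_1,\eps_2^0)$ is a differential graded Leibniz algebra.

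The homotopical content sits in the remaining relations. The second relation for $i=1$ is the alternator identity, asserting that $\eps_2^1$ is the chain homotopy controlling the failure of $\eps_2^0$ to be graded antisymmetric; I would verify it by expanding $\eps_1\eps_2^1(x_1,x_2)=\rmd_\frg\!\big((-1)^{|x_1|_\frE}\{x_1,x_2\}\big)$ with the derivation rule and then using graded antisymmetry of $\{-,-\}$ to recognize the result as $\eps_2^0(x_1,x_2)+(-1)^{|x_1|_\frE|x_2|_\frE}\eps_2^0(x_2,x_1)$, the degree shift turning the $\frg$-antisymmetry into the correct $\frE$-signs. The third relation for $i=1$, whose would-be $\eps_2^2$ term drops, is then just graded Jacobi for $\{-,-\}$ rewritten in $\frE$-degrees. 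Finally, in the case $(j,i)=(0,1)$ both right-hand sides vanish, as they carry factors $\eps_2^2$ and $\eps_2^{-1}$; the identity $\eps_2^0(\eps_2^1(x_1,x_2),x_3)=0$ then holds for degree reasons, since $\eps_2^1(x_1,x_2)\in\frg_{|x_1|_\frE+|x_2|_\frE-2}$ can never land in $\frg_{-1}$ when $|x_1|_\frE,|x_2|_\frE\leq 0$ and $\delta$ therefore annihilates it, while the relation for $\eps_2^1(\eps_2^0(x_1,x_2),x_3)$ reduces once more to the derivation rule and the Jacobi identity.

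I expect the only genuine obstacle to be the Koszul sign bookkeeping: propagating signs through the one-step shift between $\frg$ and $\frE$ and keeping track of where $\eps_1$ is the full differential versus where it is truncated to zero while $\delta$ is not. A cleaner but equivalent route, which I would use as a cross-check, is to pass to the dual Chevalley--Eilenberg picture and verify $Q^2=0$ on $\sfCE(\frE)$, which repackages all of the above into a single nilpotency computation. Once the $\ophLie$-structure is in place, the associated $L_\infty$-algebra of \cite{Getzler:1010.5859} arises as its antisymmetrization through \ref{thm:antisym}.
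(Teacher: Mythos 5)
Your proposal is correct and follows essentially the same route as the paper, which simply states that the result is "a straightforward verification of the axioms~\eqref{eq:hLie-relations}, most conveniently done with a computer algebra program"; you carry out the same verification by hand, organizing it by which differential graded Lie algebra axiom each surviving relation reduces to, and you correctly note the Chevalley--Eilenberg cross-check. One small inaccuracy in your enumeration: the second relation of~\eqref{eq:hLie-relations} at $i=2$ does \emph{not} trivialize even though $\eps_2^2=0$, because its right-hand side still contains the $\eps_2^{i-1}=\eps_2^1$ terms; it imposes the graded symmetry $\eps_2^1(x_1,x_2)=(-1)^{|x_1|\,|x_2|}\eps_2^1(x_2,x_1)$, which does hold for $\eps_2^1(x_1,x_2)=(-1)^{|x_1|_\frE}\{x_1,x_2\}$ precisely because of the d\'ecalage sign you identified, but it must be listed and checked alongside the relations you enumerate.
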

    \begin{proof}
        The proof is a straightforward verification of the axioms of an $\ophLie_2$-algebra~\eqref{eq:hLie-relations}.
    \end{proof}
    
    Let us discuss the explicit form of the antisymmetrization in some more detail. We assume, as usual, that $\frE$ admits a nice basis $(\tau_\alpha)$, so that $\frE[1]^*$ has a dual basis $(t^\alpha)$. The Chevalley--Eilenberg differential then reads as
    \begin{equation}\label{eq:Q_hLie2}
        Q t^\alpha=-(-1)^{|\beta|}m^\alpha_\beta t^\beta-(-1)^{i(|\beta|+|\gamma|)+|\gamma|(|\beta|-1)}\,m^{i,\alpha}_{\beta\gamma}\,t^\beta \oslash_i t^\gamma~,
    \end{equation}
    and we have the following theorem.
    \begin{theorem}\label{thm:antisym_hLie}
        For each $\ophLie_2$-algebra $(\frE,\eps^i_j)$ (with the above mentioned restrictions), there is an $L_\infty$-algebra $(\frE,\mu_i)$ with first four higher products reading as
        \begin{equation}\label{eq:antisymmetrization_hLie}
            \begin{aligned}
                \mu_1(x_1)&\coloneqq \eps_1(x_1)~,\\
                \mu_2(x_1,x_2)&\coloneqq \tfrac12(\eps^0_2(x_1,x_2)-\eps^0_2(x_2,x_1))~,\\
                \mu_3(x_1,x_2,x_3)&\coloneqq \tfrac{1}{3!}\sum_{\sigma\in S_3}\chi(\sigma;x_1,x_2,x_3)\Big(\eps^0_3(x_{\sigma(1)},x_{\sigma(2)},x_{\sigma(3)})\\
                &\hspace{2.0cm}+\tfrac14\big( \eps^1_2(\eps^0_2(x_{\sigma(1)},x_{\sigma(2)}),x_{\sigma(3)})
                +\eps^1_2(x_{\sigma(1)},\eps^0_2(x_{\sigma(2)},x_{\sigma(3)}))\big)\Big)~,
                \\
                \mu_4(x_1,x_2,x_3,x_4)&\coloneqq 0~,
            \end{aligned}
        \end{equation}
        for all $x_i\in \frE$.
    \end{theorem}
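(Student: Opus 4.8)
The plan is to get the existence of the $L_\infty$-structure for free from \ref{thm:antisym}---an $\ophLie$-algebra is a hemistrict $EL_\infty$-algebra, hence already carries an induced $L_\infty$-structure on the same graded vector space---and then to extract the explicit low-order products by unwinding the homotopy-transfer formula \eqref{eq:HPL_formula_Q_EL_to_L} in the dual Chevalley--Eilenberg picture. Concretely, I would work on $\sfCE(\frE)=(\caE(V),Q)$ with $V=\frE[1]^*$ and split $Q=Q_0+Q_1+Q_\delta$ into the trivial differential $Q_0$ of \ref{prop:Q0_cohomology}, the linear part $Q_1=\eps_1^*$, and the purely quadratic remainder $Q_\delta$ prescribed by \eqref{eq:Q_hLie2}. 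For an $\ophLie$-algebra the image of $Q_\delta$ on a generator lies entirely in $\bigoplus_{i\in\IN}V\oslash_i V$, with coefficients $m^{0,\alpha}_{\beta\gamma}$ and $m^{i,\alpha}_{\beta\gamma}$ encoding $\eps_2^0$ and the higher alternators. Reading off the order-$k$-in-$V$ part of the transferred differential $Q_L$ on $(\bigodot^\bullet V,Q_L)$ then delivers $\mu_k$, using the explicit contracting homotopy $\sfH_0$ recorded in \ref{app:homotopy}.

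Working term by term in \eqref{eq:HPL_formula_Q_EL_to_L}, the linear piece gives $\mu_1=\eps_1$ at once. The first nonlinear term $\sfP_0\circ Q_\delta\circ\sfE_0$ on a generator equals $\sfP_0$ applied to the quadratic element $Q_\delta t^\alpha\in\bigoplus_i V\oslash_i V$; since $\sfP_0$ annihilates every $\oslash_i$ with $i>0$ and graded-symmetrizes the $\oslash_0$ part, the degree-one shift $V=\frE[1]^*$ converts this into the graded antisymmetrization of $\eps_2^0$, yielding $\mu_2(x_1,x_2)=\tfrac12(\eps_2^0(x_1,x_2)-\eps_2^0(x_2,x_1))$. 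The cubic product arises from the next term $-\sfP_0\circ Q_\delta\circ\sfH_0\circ Q_\delta\circ\sfE_0$: here the right-hand $Q_\delta$ (applied first) may produce the alternator piece $t^\beta\oslash_1 t^\gamma$, which $\sfH_0$ lowers to an $\oslash_0$-term, whereupon the second $Q_\delta$ inserts $\eps_2^0$ into one of the two remaining arguments; only these all-$\oslash_0$ monomials survive the final $\sfP_0$, producing precisely the two nestings $\eps_2^1(\eps_2^0(-,-),-)$ and $\eps_2^1(-,\eps_2^0(-,-))$. Summing over $S_3$ with the Koszul signs $\chi(\sigma;-)$ and combining the $\tfrac1{3!}$ of $\sfP_0$ with the normalization of $\sfH_0$ produces the coefficient $\tfrac14$ of \eqref{eq:antisymmetrization_hLie}; the $\eps_3^0$-term displayed there vanishes identically, since an $\ophLie$-algebra carries no ternary product, and is kept only to match the shape of the antisymmetrization map in the 2- and 3-term cases.

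The genuinely delicate point, which I expect to be the main obstacle, is the claim $\mu_4=0$, i.e.\ that the cubic-in-$Q_\delta$ term $\pm\,\sfP_0\circ Q_\delta\circ\sfH_0\circ Q_\delta\circ\sfH_0\circ Q_\delta\circ\sfE_0$ contributes nothing of order four. A crude count is inconclusive, since two alternator insertions could in principle be lowered by the two copies of $\sfH_0$ into an all-$\oslash_0$ quartic monomial that $\sfP_0$ would not kill; the vanishing must instead come from a cancellation enforced by the graded symmetry of $\sfalt=\eps_2^1$ and by the graded-antisymmetry of the deformation terms in \eqref{eq:def_Leibniz}, together with $\sfP_0\circ\sfH_0=0$. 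I would settle it by a direct sign-and-coefficient computation of the order-four part of $Q_Lt^\alpha$, collapsing the surviving monomials, while the higher $\mu_k$ are in principle read off analogously from deeper terms of the series. Finally, because $\sfH_0$ is not unique, I would note that any admissible choice yields an $L_\infty$-isomorphic result by \ref{thm:quasi-iso_of_antisym}, so the stated formulas---corresponding to one convenient choice---characterize the transferred structure up to isomorphism.
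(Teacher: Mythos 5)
Your proposal follows the paper's proof exactly: the paper likewise invokes \ref{thm:antisym} and evaluates the perturbation series \eqref{eq:HPL_formula_Q_EL_to_L} to quartic order using the explicit homotopy of \ref{app:homotopy}. The only detail worth making explicit is that the claim $\mu_4=0$ holds for the specific choice $\alpha_1=\alpha_2=0$ in \eqref{eq:explicit_homotopy_for_transfer_EL_to_L} (other choices yield a nonvanishing $\mu_4$ involving $\eps_2^2$), so your closing appeal to \ref{thm:quasi-iso_of_antisym} pins down the transferred structure only up to quasi-isomorphism rather than giving the stated formulas on the nose.
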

    \begin{proof}   
        We use again \ref{thm:antisym} and determine the Chevalley--Eilenberg differential~\eqref{eq:HPL_formula_Q_EL_to_L} of $(\frE,\mu_i)$  using the homotopy~$\sfH_0$ from~\eqref{eq:homotopy_H0}, which allows us to compute $Q_{\rm L}$ up to quartic order. This produces the higher products~\eqref{eq:antisymmetrization_hLie} for $\alpha_1=\alpha_2=0$.
    \end{proof}
    
    We note that our choice $\alpha_1=\alpha_2=0$ is, in fact, not the most natural one. One gets a nicer pattern in the expressions for $\sfH_0$ if one puts $\alpha_1=\alpha_2=-\tfrac14$, and this results in an expression for $\mu_4$ which does not vanish but involves nestings of two maps $\eps_2^0$ and one map $\eps_2^2$. In the
    case of $\ophLie_2$-algebras obtained from differential graded Lie algebras, we have $\eps^2_2=0$, and therefore the distinction is irrelevant.
    
    We can now compose the map from differential graded Lie algebras to $\ophLie_2$-algebras with the antisymmetrization~\ref{thm:antisym}. This reproduces the following proposition of~\cite{Getzler:1010.5859}, which in turn is a specialization of~\cite{Fiorenza:0601312}:
    \begin{proposition}\label{prop:dgLA_to_L_infty}
        Given a differential graded Lie algebra $(\frg,\rmd,[-,-])$, we have an $L_\infty$-algebra structure on the truncated complex
        \begin{equation}
            \frg_{\leq 0}=\big(~~
            \ldots~\xrightarrow{~\rmd~}~\frg_{-2}~\xrightarrow{~\rmd~}~\frg_{-1}~\xrightarrow{~\rmd~}~\frg_{0}~\xrightarrow{~0~}~*~\xrightarrow{~0~}~\ldots~~)
        \end{equation}
        with
        \begin{equation}
            \begin{aligned}
                \mu_1(x_1)&=\begin{cases}
                    \rmd x_1 & \mbox{for $|x_1|<0$}~,
                    \\
                    0 & \mbox{for $|x_1|=0$}
                \end{cases}
                \\
                \mu_k(x_1,\ldots,x_k)&=\frac{(-1)^k}{(k-1)!}B_{k-1}\sum_{\sigma\in S_{k}}\chi(\sigma;x_1,\ldots,x_k)[[\ldots[[\delta x_{\sigma(1)}],x_{\sigma(2)}],\ldots],x_{\sigma(k)}]~,
            \end{aligned}
        \end{equation}
        where
        \begin{equation}
            \delta(x_1)=\begin{cases}
                \rmd x_1 & \mbox{for $|x_1|=0$}~,
                \\
                0 & \mbox{else}
            \end{cases}
        \end{equation}
        for all $x_i\in \frg_{\leq 0}$. Here, $B_k$ are the (first) Bernoulli numbers\footnote{i.e.\ $B_0,B_1,\ldots=1,-\tfrac12,\tfrac{1}{6},0,-\tfrac{1}{30},0,\tfrac{1}{42},\ldots$}.
    \end{proposition}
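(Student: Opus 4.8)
The plan is to obtain the asserted $L_\infty$-structure as the composite of two constructions already at hand, as the preceding paragraph indicates: first Theorem~\ref{thm:ophLie_from_dgLA} promotes the dgLA $(\frg,\rmd,[-,-])$ to an $\ophLie$-algebra $\frE$ on the grade-shifted truncation, and then the antisymmetrization Theorem~\ref{thm:antisym} transfers $\frE$, regarded as an $EL_\infty$-algebra, to an $L_\infty$-algebra on the same underlying complex along the contracting homotopy~\eqref{eq:homotopy_for_transfer_EL_to_L}. Since this composite is precisely the map referred to in the statement, the real content is to evaluate the transferred brackets and to identify them with the Bernoulli-number formula.

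First I would use that, by Theorem~\ref{thm:ophLie_from_dgLA}, the only nonzero products of $\frE$ are $\eps_1$, $\eps_2^0(x_1,x_2)=\{\delta x_1,x_2\}$ and $\eps_2^1(x_1,x_2)=(-1)^{|x_1|}\{x_1,x_2\}$, with $\eps_2^i=0$ for $i\geq 2$. Dually these assemble into the perturbation $Q_\delta$ of the Chevalley--Eilenberg differential, which I would insert into the homological perturbation series~\eqref{eq:HPL_formula_Q_EL_to_L}. The term carrying $k-1$ factors of $Q_\delta$ and $k-2$ factors of $\sfH_0$ feeds the $k$-ary bracket $\mu_k$; because $\sfE_0$ lands in the fully $\oslash_0$-symmetric tower while $\sfP_0$ projects onto $\bigoslash_0^\bullet V$ and annihilates every other monomial, the surviving contributions are iterated graded Lie brackets in which each $\eps_2^0$ supplies exactly one factor of $\delta$ at the innermost slot and each pairing of $\eps_2^1$ against a factor of $\sfH_0$ supplies a rational weight. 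Tracing this back through the dictionary of Theorem~\ref{thm:ophLie_from_dgLA} reproduces both the left-nested shape $[[\ldots[[\delta x_{\sigma(1)}],x_{\sigma(2)}],\ldots],x_{\sigma(k)}]$ and the graded-symmetrizing sum $\sum_{\sigma\in S_k}\chi(\sigma;x_1,\ldots,x_k)$ of the claimed formula.

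The step I expect to be the main obstacle is the resummation of these rational weights into the coefficient $\frac{(-1)^k}{(k-1)!}B_{k-1}$: this is exactly where the Bernoulli numbers appear, since iterating the alternator $\eps_2^1$ against the explicit homotopy $\sfH_0$ of~\ref{app:homotopy} generates weights with generating function $\tfrac{z}{\mathrm{e}^z-1}$, in agreement with the Fiorenza--Manetti mapping-cone computation underlying~\cite{Fiorenza:0601312,Getzler:1010.5859}. Rather than carry out this resummation by hand, I would first fix the low orders using \ref{thm:antisym_hLie}, which already pins down $\mu_1,\mu_2,\mu_3$ and, since $\eps_2^2=0$ here, also forces $\mu_4=0$; these match $B_0=1$, $B_1=-\tfrac12$, $B_2=\tfrac16$ and $B_3=0$, respectively. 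To close the remaining orders without an explicit generating-function computation, I would appeal to the fact that both our $L_\infty$-algebra and Getzler's are transferred from the same dgLA data along the canonical contraction onto $\frg_{\leq 0}$, so that the two transferred structures coincide once the homotopies are matched, thereby reproducing the Bernoulli expression of~\cite{Getzler:1010.5859}.
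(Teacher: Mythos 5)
The paper does not prove this proposition by the route you take: it quotes it as a known result of Getzler, itself a specialization of Fiorenza--Manetti, and the derived-bracket/mapping-cone construction of those references is the proof. Your proposal instead derives the formula by composing \ref{thm:ophLie_from_dgLA} with the antisymmetrization \ref{thm:antisym}, and this is precisely the route the paper states it cannot complete: immediately after the diagram~\eqref{eq:diagram_algebras} the authors say their formulas establish the commutativity of that triangle only for differential graded Lie algebras concentrated in degrees $d\geq -3$ (i.e.\ through $\mu_4$), and that a complete proof ``would require an explicit expression of the homotopy $\sfH_0$ to all orders, which is currently beyond our technical capabilities.'' The homotopy is exhibited in \ref{app:homotopy} only to cubic order in $V$, and it is not unique (it already carries free parameters $\alpha_1,\alpha_2$ at that order), so the resummation of the rational weights into $\frac{(-1)^k}{(k-1)!}B_{k-1}$ that you yourself flag as the main obstacle cannot actually be carried out with the data available in the paper.

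Your fallback --- that your transferred structure and Getzler's ``coincide once the homotopies are matched'' --- begs the question. Getzler's $L_\infty$-algebra is not constructed by homotopy transfer along the contraction~\eqref{eq:homotopy_for_transfer_EL_to_L}; it arises from the Fiorenza--Manetti mapping-cone construction. To identify the two sets of structure maps termwise (rather than merely up to $L_\infty$-isomorphism, which is all that abstract transfer arguments provide) you would need to exhibit the matching of homotopies to all orders, which is exactly the missing ingredient. If your goal is only to establish the proposition as stated, the correct move is the paper's: cite \cite{Getzler:1010.5859,Fiorenza:0601312}, whose proof is independent of the $EL_\infty$ machinery. If your goal is to reprove it via the $\ophLie$ route, you have reproduced what the paper can do --- the verification through $\mu_4$, where $B_0=1$, $B_1=-\tfrac12$, $B_2=\tfrac16$, $B_3=0$ indeed match --- but you have not closed the all-orders gap that the paper explicitly leaves open.
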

    
    Altogether, our above constructions suggest the following picture:
    \begin{equation}\label{eq:diagram_algebras}
        \begin{tikzcd}
            \mbox{dg Lie algebra} \arrow[r,"\text{\Cref{thm:ophLie_from_dgLA}}", bend left=30] \arrow[rr,"\text{\Cref{prop:dgLA_to_L_infty}}",swap, bend right=20]& \mbox{$\ophLie_2$-algebra} \arrow[r,"\text{\Cref{thm:antisym_hLie}}", bend left=30] & \mbox{$L_\infty$-algebra}
        \end{tikzcd}
    \end{equation}
    Our formulas~\eqref{eq:HPL_formula_Q_EL_to_L} show that this picture is true at least for differential graded Lie algebras concentrated in degrees $d\geq -3$.
    
    From \ref{prop:dgLA_to_L_infty} it is also clear that $\mu_4$ in~\eqref{eq:antisymmetrization_hLie} vanishes because $B_3=0$. Similarly, all even higher brackets $\mu_{2i}$ with $i\geq 1$ vanish, as the odd Bernoulli numbers $B_k$ for $k\geq 3$ vanish.
    
    As a simple example, consider a quadratic Lie algebra $\frg$, and construct the differential graded Lie algebra
    \begin{equation}
        \frG=(~\ldots\xrightarrow{~0~}*\xrightarrow{~0~}\underbrace{\IR}_{\frG_{-2}}\xrightarrow{~0~}\underbrace{\frg}_{\frG_{-1}}\xrightarrow{~\sfid~}\underbrace{\frg}_{\frG_{0}}\xrightarrow{~0~}*\xrightarrow{~0~}\ldots~)~,
    \end{equation}
    concentrated in degrees $-2$, $-1$, $0$ with differential and Lie brackets
    \begin{equation}
        [x_1,x_2]_\frG=2(x_1,x_2)~,~~~[y_1,x_1]_\frG=-[x_1,y_1]_\frG=y_1(x_1)~,~~~[y_1,y_2]_\frG=(y_1,y_2)
    \end{equation}
    for all $x_1,x_2\in \frG_0\cong\frg$ and $y_1,y_2\in \frG_{-1}\cong \frg$, where $[-,-]$ and $(-,-)$ are the Lie bracket and the Cartan--Killing form on $\frg$. Then the associated $\ophLie_2$-algebra is 
    \begin{equation}
        \begin{gathered}
            \frE=(~\IR\xrightarrow{~0~}\frg~)~,
            \\
            \eps_1(r)\coloneqq 0~,
            \\
            \eps^0_2(x_1,x_2)=[x_1,x_2]~,~~~\eps^1_2(x_1,x_2)=2(x_1,x_2)~.
        \end{gathered}
    \end{equation}
    We thus recover the hemistrict $E_2L_\infty$-algebra $\frstring^{\rm wk,1}_{\rm sk}(\frg)$ introduced in \ref{ssec:string_models}. The antisymmetrization of this $\ophLie_2$-algebra then yields the skeletal string Lie 2-algebra model $\frstring_{\rm sk}(\frg)$. Interestingly, a quick consideration of the case leads to the conclusion that there is no differential graded Lie algebra that reproduces the strict string Lie 2-algebra model $\frstring_\rmlp(\frg)=\frstring^{\rmwk,0}_\rmlp(\frg)$. This points towards a possible extension of \ref{thm:ophLie_from_dgLA} producing $E_2L_\infty$-algebras from certain $L_\infty$-algebras.
    
    \section{Generalized and multisymplectic geometry}
    
    We now come to our first two applications of $E_2L_\infty$-algebras, or rather $\ophLie_2$-algebras: the symmetry algebras of symplectic $L_\infty$-algebroids and, in a closely related way, a categorified version of higher Poisson algebras.

    \subsection{Generalized geometry from symplectic \texorpdfstring{$L_\infty$}{L-infinity}-algebroids}\label{ssec:symplectic_L_infty_algebroids}
    
    The string Lie 2-algebra $\frstring_{\rm sk}(\frspin(n))$ is a finite-dimensional $L_\infty$-subalgebra of the 2-term $L_\infty$-algebra of symmetries associated to the Courant algebroid~\cite{Baez:2009:aa} over $\sfSpin(n)$. It is therefore not surprising that the symmetries of symplectic $L_\infty$-algebroids are important sources for examples of $E_2L_\infty$-algebras. This link was noticed before in~\cite{Roytenberg:0712.3461} and~\cite{Dehling:1710.11104}, where 2- and 3-term $E_2L_\infty$-algebras were constructed. Here, we can present the general picture. We shall follow the conventions of~\cite{Deser:2016qkw}.  
    
    \begin{theorem}
        The symmetry algebra of a symplectic $L_\infty$-algebroid\footnote{cf.~\cite{Deser:2016qkw} for a definition} is naturally an $\ophLie_2$-algebra.
    \end{theorem}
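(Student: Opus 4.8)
The plan is to reduce the statement to \Cref{thm:ophLie_from_dgLA} by exhibiting the symmetries as the $\ophLie$-algebra canonically associated to a differential graded Lie algebra that is already present in the data. Following the conventions of~\cite{Deser:2016qkw}, a symplectic $L_\infty$-algebroid is encoded dually by an $\IN$-graded symplectic manifold $(M,\omega)$, whose symplectic form carries some fixed degree $n$, together with a homological Hamiltonian $\caQ$ of degree $n+1$ obeying the classical master equation $\{\caQ,\caQ\}=0$; here $\{-,-\}$ is the Poisson bracket of degree $-n$ determined by $\omega$. The decisive observation is that the functions $C^\infty(M)$ carry $\{-,-\}$ as a graded Lie bracket and $\rmd_\caQ\coloneqq\{\caQ,-\}$ as a compatible differential, the latter squaring to zero precisely because of $\{\caQ,\caQ\}=0$ together with the graded Jacobi identity. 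After a degree shift rendering the Poisson bracket degree-preserving, $(C^\infty(M),\rmd_\caQ,\{-,-\})$ is a differential graded Lie algebra in the sense required by \Cref{thm:ophLie_from_dgLA}.

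With this dg Lie algebra in hand, the theorem follows directly: \Cref{thm:ophLie_from_dgLA} equips the appropriate truncation and grade shift of $C^\infty(M)$ with a canonical $\ophLie$-algebra structure, whose differential is $\rmd_\caQ$, whose binary product $\eps_2^0(x_1,x_2)=\{\delta x_1,x_2\}$ is the derived bracket, and whose second product $\eps_2^1(x_1,x_2)=(-1)^{|x_1|}\{x_1,x_2\}$ is the graded-symmetric alternator. This is exactly the refinement of the derived-bracket construction advertised in the introduction: in the exact Courant-algebroid case $\eps_2^0$ is the Dorfman bracket and $\eps_2^1$ measures its failure to be antisymmetric, while the antisymmetrization of this $\ophLie$-algebra via \Cref{thm:antisym_hLie}, whose composition with \Cref{thm:ophLie_from_dgLA} is \Cref{prop:dgLA_to_L_infty}, recovers the Courant bracket and the symmetry $L_\infty$-algebra of~\cite{Getzler:1010.5859}.

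It remains only to check that the $\ophLie$-algebra so produced genuinely computes the symmetries, i.e.~that its underlying complex and its antisymmetrization coincide with the symmetry $L_\infty$-algebra of the symplectic $L_\infty$-algebroid as defined in~\cite{Deser:2016qkw}. I expect the principal obstacle to be entirely one of bookkeeping rather than of new structure: one must reconcile the grading and Koszul-sign conventions of~\cite{Deser:2016qkw} with those of \Cref{thm:ophLie_from_dgLA}, correctly track the degree shift that turns the degree-$(-n)$ Poisson bracket into a degree-zero Lie bracket, and isolate the truncation of $C^\infty(M)$ to the functions generating symmetries rather than the full Poisson algebra. Once these identifications are fixed, the $\ophLie$-relations~\eqref{eq:hLie-relations} hold automatically by \Cref{thm:ophLie_from_dgLA}, with no further computation required. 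As a consistency check, specialization to the exact Courant algebroid should reproduce $\eps_2^0$ as the Dorfman bracket and, upon antisymmetrization, an $L_\infty$-algebra containing the string Lie $2$-algebra $\frstring_{\rm sk}(\frspin(n))$ noted above.
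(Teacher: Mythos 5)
Your proposal is correct and follows essentially the same route as the paper: both identify the function algebra of the symplectic $L_\infty$-algebroid, equipped with the Poisson bracket of the symplectic form and the homological Hamiltonian vector field as differential, as a differential graded Lie algebra (after the appropriate degree shift), and then invoke \Cref{thm:ophLie_from_dgLA} to obtain the $\ophLie$-algebra refining the usual derived-bracket symmetry $L_\infty$-algebra. The consistency checks you flag (Courant/Dorfman specialization, agreement with \Cref{prop:dgLA_to_L_infty}) are exactly what the paper works out in the Vinogradov examples that follow.
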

    \begin{proof}
        The Chevalley--Eilenberg algebra of a symplectic $L_\infty$-algebroids $\sfL$ is a differential graded Lie algebra. The differential is the Chevalley--Eilenberg differential, encoding the anchor and the higher maps on sections of $\sfL$, and it is given by a vector field $Q$ on $\sfL$. The Lie bracket is the Poisson bracket induced by the symplectic form $\omega$. Compatibility of the differential with the Lie bracket amounts to the condition $\caL_Q\omega=0$, which is part of the definition of a symplectic $L_\infty$-algebroid. The $\ophLie_2$-algebra of this differential graded algebra is a refined version of the symmetry algebra of the $L_\infty$-algebroid, which is the $L_\infty$-algebra obtained from the original differential graded Lie algebra via \ref{prop:dgLA_to_L_infty}.
    \end{proof}
    
    This theorem explains the interest in extension of Leibniz algebras in the context of generalized geometry and double field theory. The generalized tangent bundles used there are indeed symplectic $L_\infty$-algebroids (or symplectic pre-N$Q$-manifolds, as explained in~\cite{Deser:2016qkw}). Therefore, the relevant symmetry algebras are $\ophLie_2$-algebras, and the most prominently visible feature of them in all construction is their Leibniz brackets $\eps_2^0$.
    
    As a short example, let us work out the case of Vinogradov Lie $n$-algebroids, which generalize the Courant algebroid. The latter case, i.e.~the case $n=2$, was sketched in~\cite[Example 5.4]{Roytenberg:0712.3461}. The Vinogradov Lie $n$-algebroids are given as the graded vector bundles
    \begin{equation}
        \caV_n(M)\coloneqq T^*[n]T[1]M
    \end{equation}
    over some manifold $M$. We introduce local coordinates $x^\mu$ on the base $M$ and extend these to Darboux coordinates $(x^\mu,\xi^\mu,\zeta_\mu,p_\mu)$ of degrees $0,1,n-1,n$, leading to the canonical symplectic form
    \begin{equation}
        \omega=\rmd x^\mu\wedge\rmd p_\mu +\rmd \xi^\mu\wedge \rmd \zeta_\mu\eand \caQ=\xi^\mu p_\mu~.
    \end{equation}
    This symplectic form induces the Poisson bracket
    \begin{equation}\label{eq:Poisson_Vinogradov}
        \begin{aligned}
            \{f,g\}\coloneqq \left(\der{p_\mu}f\right)\left(\der{x^\mu} g\right)&-\left(\der{x^\mu}f\right)\left(\der{p_\mu} g\right)
            \\
            &-(-1)^{|f|}\left(\der{\zeta_\mu} f\right)\left(\der{\xi^\mu}g\right)-(-1)^{|f|}\left(\der{\xi^\mu}f\right)\left(\der{\zeta_\mu} g\right)~,
        \end{aligned}
    \end{equation}
    and we have a Hamiltonian vector field $Q$ given by
    \begin{equation}
        Q=\{\caQ,-\}=\xi^\mu\der{x^\mu}+p_\mu\der{\zeta_\mu}~~~\mbox{for}~~\caQ=\xi^\mu p_\mu~.
    \end{equation}
    The algebra of functions $C^\infty(\caV_n(M))$ is identified with the smooth functions in $x^\mu$ and the analytical functions in the remaining coordinates, and it receives a grading from the grading of the coordinates. The vector field $Q$ is a natural differential on $C^\infty(\caV_n(M))$, and $Q^2=0$ is equivalent to $\{\caQ,\caQ\}=0$. 
    
    We note that the Poisson bracket~\eqref{eq:Poisson_Vinogradov} is a Poisson bracket of degree~$-n$. We can now shift the grading in the algebra of functions by $+n$ to obtain the differential graded Lie algebra 
    \begin{equation}
        \sfL(M)\coloneqq C^\infty(\caV_n(M))[-n]
    \end{equation}
    with differential $Q$ and Lie bracket $\{-,-\}$. The $\ophLie_2$-algebra associated to $\sfL(M)$ (and thus to $\caV_n(M)$) by \ref{thm:ophLie_from_dgLA} is then
    \begin{equation}
        \begin{gathered}
            \frE=\big( \underbrace{C^\infty(M)}_{\frE_{-n+1}}~\xrightarrow{~Q~}~\underbrace{C^\infty_1(M)}_{\frE_{-n+2}}~\xrightarrow{~Q~}~\dotsb~\xrightarrow{~Q~}~\underbrace{C^\infty_{n-1}(M)}_{\frE_0}\big)~,
            \\
            \eps_1(f_1)=\begin{cases}
                Q f_1 & \mbox{for}~|f_1|_\frE<0~,
                \\
                0& \mbox{else}~,
            \end{cases}
            \\
            \eps_2^i(f_1,f_2)\coloneqq \begin{cases}
                \{Q f_1,f_2\} & \mbox{for}~i=0~\mbox{and}~|f_1|_{\frE}=0~,
                \\
                (-1)^{|f_1|_\frE}\{f_1,f_2\} & \mbox{for}~i=1~,
                \\
                0& \mbox{else}
            \end{cases}
        \end{gathered}
    \end{equation}
    for all $f_1,f_2\in \frE$. We can identify the elements of $\frE_k$ with $\Omega^{k+n-1}(M)$ for $k<0$ and $\frE_0\cong \frX(M)\oplus \Omega^{n-1}(M)$, where $\frX(M)$ and $\Omega^k(M)$ are the vector fields and differential $k$-forms on $M$, respectively. The latter are the generalized vector fields on $\caV_n(M)$. Restricted to these, $\eps_2^0$ is (a generalization of) the Dorfman bracket, whose antisymmetrization yields the Courant bracket, and $\eps_2^1$ is a natural contraction $(\frX(M)\oplus \Omega^{n-1}(M))\times (\frX(M)\oplus \Omega^{n-1}(M))\rightarrow \Omega^{n-2}$. 
    
    As an explicit example, let us briefly present the case $n=2$ for some manifold $M$. Here, we have the $2$-term $\ophLie_2$-algebra $\frE$ with underlying differential complex
    \begin{equation}
        \frE=(~\frE_{-1}~\xrightarrow{~\eps_1~}~\frE_{0}~)=(~C^\infty(M)~\xrightarrow{~\rmd~} \frX(M)\oplus \Omega^1(M)~)~.
    \end{equation}
    The binary brackets are the Dorfman bracket, the evident action of $\frE_0$ on $\frE_{-1}$, and the evident dual pairing on $\frE_0$:
    \begin{equation}\label{eq:Dorfman}
        \begin{aligned}
            \eps_2^0(X+\alpha,Y+\beta)&=[X,Y]+\caL_X\beta-\iota_Y\rmd \alpha~,
            \\
            \eps_2^0(X+\alpha,f)&=\caL_X f=\iota_X \rmd f~,
            \\
            \eps_2^1(X+\alpha,Y+\beta)&=\iota_X \beta+\iota_Y \alpha
        \end{aligned}
    \end{equation}
    for all $f\in C^\infty(M)$, $X,Y\in \frX(M)$, and $\alpha,\beta\in \Omega^1(M)$.
    The corresponding $L_\infty$-algebra obtained from \ref{thm:antisym_hLie} yields the well known $L_\infty$-algebra of the Courant algebroid, cf.~e.g.~\cite{Deser:2016qkw}. This $L_\infty$-algebra has the same differential complex as $\frE$, but with higher brackets
    \begin{equation}\label{eq:ass_Courant_algebra}
        \begin{aligned}
            \mu_1(f)&=\rmd f~,\\
            \mu_2(X+\alpha,Y+\beta)&=[X,Y]+\caL_X\beta-\caL_Y\alpha-\tfrac12\rmd\big(\iota_X\beta-\iota_Y\alpha)~,\\
            \mu_2(X+\alpha,f)&=\tfrac12\caL_X f~,\\
            \mu_3(X+\alpha,Y+\beta,Z+\gamma)&=\tfrac{1}{3!}\big(\iota_X\iota_Y\rmd \gamma+\tfrac32\iota_X\rmd\iota_Y\gamma\pm\mbox{perm.}\big)~.
        \end{aligned}
    \end{equation}
    Following~\cite{Deser:2016qkw}, one readily extends this discussion to the pre N$Q$-manifolds underlying double field theory to reproduce the $D$- and $C$-brackets there.
    
    Another class of symplectic $L_\infty$-algebroids is given by the differential graded algebra given by the Batalin--Vilkovisky (BV) complex of a classical field theory, cf.~\cite{Jurco:2018sby} for definitions and conventions. Here, we have a Poisson bracket of degree~$-1$ and a BV complex
    \begin{equation}
        C^\infty(\frF)\coloneqq \left( \ldots~\xrightarrow{~Q~}~C^\infty_{-2}(\frF)~\xrightarrow{~Q~}~C^\infty_{-1}(\frF)~\xrightarrow{~Q~}~C^\infty_{0}(\frF)~\xrightarrow{~Q~}~C^\infty_{1}(\frF)~\xrightarrow{~Q~}~\ldots\right)~,
    \end{equation}
    where $\frF$ is the full BV field space and the $C^\infty_i$ contains (the coordinate functions for) ghosts or gauge parameters for $i=1$, fields for $i=0$, antifields for $i=-1$, and antifields of ghosts for $i=-2$. If we shift this complex by $-1$, we obtain a differential graded Lie algebra, which then gives rise to an $\ophLie_2$-algebra. At the moment, we do not have a concrete interpretation of this $E_2L_\infty$-algebra.
    
    \subsection{Multisymplectic geometry}
    
    There is a close relation between the associated $L_\infty$-algebras of $L_\infty$-algebroids and multisymplectic geometry, as explained in~\cite{Rogers:2010sc} and~\cite{Ritter:2015ffa}.
    
    A multisymplectic manifold $(M,\varpi)$ of degree~$p$, or a $p$-plectic manifold, is a manifold $M$ with a closed differential form $\varpi\in \Omega^{p+1}(M)$ which is non-degenerate in the sense that $\iota_X \varpi=0$ implies $X=0$ for all $X\in \frX(M)$.
    
    Any multisymplectic manifold $(M,\varpi)$ comes with a differential complex
    \begin{equation}
        \sfL(M,\varpi)=\left(~\underbrace{\Omega^0(M)}_{\sfL(M,\varpi)_{-n}}~\xrightarrow{~\rmd~}~\underbrace{\Omega^1(M)}_{\sfL(M,\varpi)_{1-n}}~\xrightarrow{~\rmd~}~\dotsb~\xrightarrow{~\rmd~}~\underbrace{\Omega^{n-1}_{\rm Ham}(M)}_{\sfL(M,\varpi)_{-1}}~\xrightarrow{~\delta~} \underbrace{\frX(M)}_{\sfL(M,\varpi)_0}~\right)~,
    \end{equation}
    where $\Omega^{n-1}_{\rm Ham}(M)$ are the Hamiltonian $n-1$-forms, i.e.~differential forms $\alpha$ for which there are vector fields $\delta(\alpha)$ such that
    \begin{equation}
        \iota_{\delta(\alpha)} \varpi=\rmd \alpha~.
    \end{equation}
    In previous work~\cite{Rogers:2010sc,Rogers:2010nw}, it was realized that the shifted complex $\sfL(M,\varpi)[-1]$ restricted to non-positive degrees carries an $L_\infty$-algebra as well as a differential graded Leibniz algebra. The situation is, in fact, a bit richer.
    \begin{theorem}
        The complex $\sfL(M,\varpi)$ carries a natural differential graded Lie algebra structure with the Lie bracket $\{-,-\}$ given by 
        \begin{equation}
            \begin{aligned}
                \{X_1,X_2\}&\coloneqq [X_1,X_2]~,
                \\
                \{X_1,\alpha_1\}&\coloneqq \caL_{X_1}\alpha_1~,
                \\
                \{\alpha_1,\alpha_2\}&\coloneqq \iota_{\delta(\alpha_1)}\alpha_2-(-1)^{|\alpha_1|\,|\alpha_2|}\iota_{\delta(\alpha_2)}\alpha_1
            \end{aligned}
        \end{equation}
        for all $X_1,X_2\in \frX(M)$ and $\alpha_{1,2}\in \sfL(M,\varpi)$ with $|\alpha_{1,2}|_{\sfL(M,\varpi)}<0$.
    \end{theorem}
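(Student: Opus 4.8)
The plan is to verify the four defining axioms of a differential graded Lie algebra for the stated structure — nilpotency of the differential, graded antisymmetry, the graded Leibniz rule relating the differential to the bracket, and the graded Jacobi identity — directly, using Cartan calculus together with the two structural inputs $\rmd\varpi=0$ and the non-degeneracy of $\varpi$, and applying the defining relation $\iota_{\delta(\alpha)}\varpi=\rmd\alpha$ throughout. Conceptually, this is the multisymplectic incarnation of the statement that the functions on a symplectic manifold form a Lie algebra under the Poisson bracket; once established, the resulting dgLA feeds directly into \Cref{thm:ophLie_from_dgLA} to produce the associated $\ophLie$-algebra, exactly in parallel with the symplectic $L_\infty$-algebroid case treated above.

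I would first dispatch the structural and easy axioms. The differential $D$, acting as $\rmd$ below the top and as $\delta$ on $\Omega^{n-1}_{\rm Ham}(M)$, squares to zero: the only non-trivial composite is $\delta\circ\rmd$ on $\Omega^{n-2}(M)$, and since $\iota_{\delta(\rmd\beta)}\varpi=\rmd\rmd\beta=0$, non-degeneracy forces $\delta(\rmd\beta)=0$ (and simultaneously shows $\rmd\beta$ is Hamiltonian, with vanishing Hamiltonian vector field). Graded antisymmetry is manifest: the vector-field bracket is antisymmetric, $\{\alpha_1,\alpha_2\}$ is antisymmetric by inspection of its definition, and $\{X,\alpha\}$ is then fixed by antisymmetry. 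For the graded Leibniz rule the representative case is that of two top Hamiltonian forms with $v_i=\delta(\alpha_i)$: writing $\rmd\{\alpha_1,\alpha_2\}=\rmd\iota_{v_1}\alpha_2+\rmd\iota_{v_2}\alpha_1$, Cartan's formula $\caL_v=\iota_v\rmd+\rmd\iota_v$ together with $\rmd\alpha_i=\iota_{v_i}\varpi$ turns this into $\caL_{v_1}\alpha_2+\caL_{v_2}\alpha_1-\iota_{v_1}\iota_{v_2}\varpi-\iota_{v_2}\iota_{v_1}\varpi$, and the last two terms cancel by anticommutativity of interior products, matching $\{\delta\alpha_1,\alpha_2\}+(-1)^{|\alpha_1|}\{\alpha_1,\delta\alpha_2\}$.

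The main obstacle is the graded Jacobi identity, which I would organize by the number of degree-$0$ (vector-field) entries. The purely form-theoretic case, with all three arguments of negative degree, is essentially combinatorial: the inner bracket of two negative-degree forms has total degree $\le -2$, on which $\delta$ vanishes identically, so only the single outermost $\delta$ survives, and the identity collapses to the anticommutativity of nested interior products — no use of $\varpi$ is needed. The mixed cases, with one or two vector-field entries, are where the geometry enters. Using $[\caL_X,\iota_Y]=\iota_{[X,Y]}$ and $\caL_{[X,Y]}=[\caL_X,\caL_Y]$, they reduce to the single key identity $\delta(\caL_X\alpha)=[X,\delta(\alpha)]$, i.e.\ that $\delta$ intertwines the Lie-derivative action on forms with the commutator of vector fields.

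This last identity is the delicate point, and I expect it to be the technical heart of the proof. From $\iota_{\delta(\caL_X\alpha)}\varpi=\rmd\caL_X\alpha=\caL_X\iota_{\delta\alpha}\varpi=\iota_{[X,\delta\alpha]}\varpi+\iota_{\delta\alpha}\caL_X\varpi$ one reads off $\delta(\caL_X\alpha)=[X,\delta\alpha]$ precisely when the term $\iota_{\delta\alpha}\caL_X\varpi$ drops out. Since $\caL_X\varpi=\rmd\iota_X\varpi$ by closedness of $\varpi$, this holds whenever $X$ preserves $\varpi$, and in particular for every Hamiltonian vector field $X=\delta\beta$, where $\caL_{\delta\beta}\varpi=\rmd\rmd\beta=0$. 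The same observation is exactly what is needed to ensure that the brackets close on the complex — that $\caL_X\alpha$ remains Hamiltonian, so that $\{X,\alpha\}$ lands in $\Omega^{n-1}_{\rm Ham}(M)$ — and so establishing this closure (controlling the interaction of $\caL_X$ with the Hamiltonian condition) and then carrying out the resulting mixed Jacobi cancellations is the step I expect to require the most care.
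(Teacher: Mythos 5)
The paper's own ``proof'' of this theorem is a single sentence asserting that the axioms can be verified directly, so your Cartan-calculus verification is exactly the intended argument, and the parts you actually carry out are correct: nilpotency of the differential via $\iota_{\delta(\rmd\beta)}\varpi=\rmd\rmd\beta=0$ and non-degeneracy, graded antisymmetry, the Leibniz rule for two degree-$(-1)$ forms (where the $\iota_{v_1}\iota_{v_2}\varpi$ terms cancel), and the all-negative-degree Jacobi identity collapsing to anticommutativity of interior products because $\delta$ kills everything of degree $\le -2$.

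However, the step you single out as ``the technical heart'' is not merely delicate --- it is a genuine obstruction to the theorem as literally stated, and your proof cannot be closed for all of $\frX(M)$. As your own computation shows, $\delta(\caL_{X}\alpha)-[X,\delta(\alpha)]$ is measured by $\iota_{\delta(\alpha)}\caL_X\varpi$, which vanishes only when $X$ preserves $\varpi$; and this identity is forced both by the Leibniz rule $D\{X,\alpha\}=\{X,D\alpha\}$ (since $DX=0$) and by the Jacobi identity with one vector-field and two form entries. It genuinely fails for general $X$: take $M=\IR^{2}$ with $\varpi=\rmd x\wedge \rmd y$ (so $n=1$ and $\Omega^{0}_{\rm Ham}(M)=C^\infty(M)$), $X=x\,\partial/\partial x$ and $\alpha=x$; then $\delta(\caL_X\alpha)=\delta(x)=-\partial/\partial y$ while $[X,\delta(x)]=[x\,\partial/\partial x,-\partial/\partial y]=0$, and analogous examples exist for every $n$. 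So the statement holds only if the degree-zero component is restricted to vector fields with $\caL_X\varpi=0$ (in particular the Hamiltonian ones $\delta(\beta)$, for which $\caL_{\delta(\beta)}\varpi=\rmd\rmd\beta=0$, as you observe); with that restriction your argument, including the closure of $\caL_X$ on $\Omega^{n-1}_{\rm Ham}(M)$, goes through and the subsequent corollary is unaffected, since the derived brackets of \ref{thm:ophLie_from_dgLA} only ever invoke $\{\delta(\alpha),-\}$. You should therefore either record this restriction explicitly or exhibit the counterexample; as written, the final paragraph of your proposal leaves the general case open when it is in fact false.
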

    \begin{proof}
        The proof is a straightforward verification of the axioms of a differential graded Lie algebra.
    \end{proof}
    Via \ref{thm:ophLie_from_dgLA}, the above theorem has the following corollary.
    \begin{corollary}
        Any multisymplectic manifold $(M,\varpi)$ comes with an $\ophLie_2$-algebra
        \begin{equation}
            \frE(M,\varpi)=\left(~\underbrace{\Omega^0(M)}_{\frE(M,\varpi)_{-n+1}}~\xrightarrow{~\rmd~}~\underbrace{\Omega^1(M)}_{{\frE(M,\varpi)_{-n+2}}}~\xrightarrow{~\rmd~}~\ldots~\xrightarrow{~\rmd~}~\underbrace{\Omega^{n-1}_{\rm Ham}(M)}_{\frE(M,\varpi)_{0}}~\right)
        \end{equation}
        with nonvanishing binary products
        \begin{equation}
            \begin{aligned}
                \eps_2^0(\alpha,\beta_1)&=\{\delta(\alpha),\beta_1\}=\caL_{\delta(\alpha)}\beta_1
                \\
                \eps_2^1(\beta_1,\beta_2)&=(-1)^{|\beta_1|_{\frE}}\{\beta_1,\beta_2\}=\iota_{\delta(\beta_1)}\beta_2-(-1)^{|\beta_1|\,|\beta_2|}\iota_{\delta(\beta_2)}\beta_1
            \end{aligned}
        \end{equation}
        for all $\alpha\in \frE(M,\varpi)_0$ and $\beta_1,\beta_2\in \frE(M,\varpi)$.
    \end{corollary}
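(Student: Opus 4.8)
The plan is to feed the differential graded Lie algebra $\sfL(M,\varpi)$ established in the preceding theorem into the machine of \ref{thm:ophLie_from_dgLA} and then to unwind the general formulas there into the explicit expressions claimed. First I would set $\frg=\sfL(M,\varpi)$, so that $\frg_0=\frX(M)$, $\frg_{-1}=\Omega^{n-1}_{\rm Ham}(M)$, and $\frg_{-j}=\Omega^{n-j}(M)$ for $2\leq j\leq n$, with $\rmd_\frg$ given by the de Rham differential on the form part and by the Hamiltonian assignment $\delta$ on $\frg_{-1}\to\frg_0$. \ref{thm:ophLie_from_dgLA} then produces the $\ophLie$-algebra $\frE$ with $\frE_k=\frg_{k-1}$ for $k\leq 0$; in particular $\frE_0=\frg_{-1}=\Omega^{n-1}_{\rm Ham}(M)$ and $\frE_{-n+1}=\frg_{-n}=\Omega^0(M)$, while the vector fields $\frg_0=\frX(M)$ are truncated away. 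This reproduces exactly the complex stated in the corollary.

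Next I would identify the differential and the two products. The differential $\eps_1$ of \ref{thm:ophLie_from_dgLA} is $\rmd_\frg$ on strictly negative $\frE$-degree and vanishes on $\frE_0$; since the only piece of $\rmd_\frg$ landing in the truncated $\frX(M)$ is precisely $\delta\colon\frg_{-1}\to\frg_0$, this leaves exactly the de Rham differential on $\Omega^0\to\dotsb\to\Omega^{n-1}_{\rm Ham}$, with $\eps_1=0$ on the top term. For the products, I would substitute the explicit brackets of the preceding theorem into $\eps^0_2(x_1,x_2)=\{\delta x_1,x_2\}$ and $\eps^1_2(x_1,x_2)=(-1)^{|x_1|_\frE}\{x_1,x_2\}$. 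The $\delta$ of \ref{thm:ophLie_from_dgLA}, namely $\rmd_\frg|_{\frg_{-1}}$, coincides with the Hamiltonian-vector-field map $\delta$, so for $\alpha\in\frE_0$ one has $\delta(\alpha)\in\frX(M)$ and $\eps^0_2(\alpha,\beta_1)=\{\delta(\alpha),\beta_1\}=\caL_{\delta(\alpha)}\beta_1$ by the bracket $\{X,\beta\}=\caL_X\beta$. Likewise $\eps^1_2(\beta_1,\beta_2)=(-1)^{|\beta_1|_\frE}\{\beta_1,\beta_2\}$ is read off from the third bracket $\{\alpha_1,\alpha_2\}=\iota_{\delta(\alpha_1)}\alpha_2-(-1)^{|\alpha_1|\,|\alpha_2|}\iota_{\delta(\alpha_2)}\alpha_1$.

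Since \ref{thm:ophLie_from_dgLA} already guarantees that the $\ophLie$-relations hold for the output of the construction, no further verification of the axioms is needed; the entire content of the corollary is the explicit identification of the complex and the two binary products. The only points requiring care are bookkeeping ones: matching the three gradings under the shift $\frE_k=\frg_{k-1}$ (so that the $\frE$-degree exceeds the $\sfL(M,\varpi)$-degree by one), confirming that the truncation removes $\frX(M)$ and sends the would-be differential $\delta$ on $\frE_0$ to zero, and reconciling the sign conventions. In particular, I would check that the prefactor $(-1)^{|\beta_1|_\frE}$ in $\eps^1_2$ combines with the graded-antisymmetry sign in $\{\beta_1,\beta_2\}$ to reproduce exactly the displayed contraction $\iota_{\delta(\beta_1)}\beta_2-(-1)^{|\beta_1|\,|\beta_2|}\iota_{\delta(\beta_2)}\beta_1$. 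I expect this grading/sign reconciliation, together with the identification of the Hamiltonian map $\delta$ with $\rmd_\frg|_{\frg_{-1}}$, to be the only mildly delicate point; once it is in place the formulas match verbatim.
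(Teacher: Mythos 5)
Your proposal is correct and matches the paper's own (implicit) argument exactly: the paper derives this corollary simply by applying \ref{thm:ophLie_from_dgLA} to the differential graded Lie algebra $\sfL(M,\varpi)$ of the preceding theorem, which is precisely what you do, including the grade shift $\frE_k=\frg_{k-1}$, the truncation removing $\frX(M)$, and the substitution of the explicit brackets into $\eps_2^0$ and $\eps_2^1$. Your flagged sign check on $(-1)^{|\beta_1|_\frE}$ is the right (and only) delicate point, and it works out because $\delta$ is supported on $\frE_0$, where that prefactor is trivial.
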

    The antisymmetrization of this $\ophLie_2$-algebra is the $L_\infty$-algebra described in~\cite{Rogers:2010sc,Rogers:2010nw}. Note that the special case $M=S^3$ and $\varpi={\rm vol}_{S^3}$, upon restricting to left-invariant objects, yields another derivation of the hemistrict string $E_2L_\infty$-algebra model $\frstring^{\rm wk,1}_{\rm sk}(\frg)$.
    
    \section{Higher gauge theory with \texorpdfstring{$E_2L_\infty$}{EL-infinity}-algebras}
    
    In this section, we develop and explore the generalities of higher gauge theory using $E_2L_\infty$-algebras as higher gauge algebras.    
    
    \subsection{Homotopy Maurer--Cartan theory for \texorpdfstring{$E_2L_\infty$}{EL-infinity}-algebras}
    
    Recall that given an $L_\infty$-algebra $\frL$ with higher products $\mu_i$, there is a functor $\sfMC(\frL,-)$ taking a differential graded commutative algebra $\fra$ to Maurer--Cartan elements with values in $\fra$, cf.~e.g.~\cite{Chuang:0912.1215}. This functor is represented by the Chevalley--Eilenberg algebra $\sfCE(\frL)$ of the $L_\infty$-algebra. 
    
    What we usually call Maurer--Cartan elements in $\frL$ are Maurer--Cartan elements with values in $\IR$, where the latter is regarded as a trivial differential graded algebra $\IR_\fra$ with underlying vector space $\IR$, spanned by a generator $w$ subject to the relation $w^2=w$, and trivial differential. 
    
    For concreteness sake, let us assume that $\frL$ is degree-wise finite, and let $(t^A)$ be the generators of $\frL[1]^*$ dual to some basis $(\tau_A)$ of $\frL$. A Maurer--Cartan element is encoded in a morphism of differential graded commutative algebras $a:\sfCE(\frL)\rightarrow \IR_\fra$, which is fully determined by the image of the generators $(t^\alpha)$ of degree~$0$,
    \begin{equation}\label{eq:dga_morphism}
        a: \sfCE(\frL)\rightarrow \IR~,~~~t^\alpha \mapsto a^\alpha w
    \end{equation}
    for $a^\alpha\in \IR$. Dually, we have an element $a\coloneqq a^\alpha\tau_\alpha\in \frL_1$, the {\em gauge potential}. Compatibility with the differential requires the {\em curvature}
    \begin{equation}
        f\coloneqq \mu_1(a)+\tfrac12 \mu_2(a,a)+\tfrac1{3!}\mu_3(a,a,a)+\dotsb~~\in \frL_2
    \end{equation}
    to vanish, and the equation $f=0$ is called the homotopy Maurer--Cartan equation. This curvature satisfies the {\em Bianchi identity}
    \begin{equation}
        \sum_{k\geq0}\frac{1}{k!}\mu_{k+1}(a,\ldots,a,f)=0~.
    \end{equation}
    Infinitesimal gauge transformations are obtained from infinitesimal homotopies between morphisms from $\sfCE(\frL)$ to $\IR$. They are 
    parameterized by elements $c\in \frg_0$ and act according to
    \begin{equation}\label{eq:GaugeTrafo}
        \delta_{c} a=\sum_{i\geq 0} \frac{1}{k!}\mu_{k+1}(a,\ldots,a, c)~.
    \end{equation}
    Higher homotopies yield higher gauge transformations.
    
    Similarly, one defines Maurer--Cartan elements of an $A_\infty$-algebra with values in a differential graded algebra.
    
    In the case of $E_2L_\infty$-algebras, we can still consider tensor products of a base $E_2L_\infty$-algebra $\frE$ and a differential graded commutative algebra $\frA$. However, the Chevalley--Eilenberg algebra $\sfCE(\frE)$ is an $\opEilh_2$-algebra and not a differential graded commutative algebra. Therefore the homotopy Maurer--Cartan functor cannot be represented by it directly.
    
    There are two loopholes to this obstruction. First, we can lift the differential graded commutative algebra $\frA$, if it is semifree, to an $\opEilh_2$-algebra $\hat \frA$ as explained in \ref{thm:lift_dgca_to_Eilh}. We can then consider $\opEilh_2$-algebra morphisms 
    \begin{equation}
        a:\sfCE(\frE)\rightarrow \hat \frA~.
    \end{equation}
    Second, we can project $\sfCE(\frE)$ to the Chevalley--Eilenberg algebra of the $L_\infty$-algebra $\frL$ induced by $\frE$ and consider the usual morphisms 
    \begin{equation}
        a:\sfCE(\frL)\rightarrow \frA~.
    \end{equation}
    A third approach is simply to consider general morphisms of $\opEilh_2$-algebras. In particular, one may want to replace differential forms with more general objects, cf.~also~\cite{Ritter:2015zur}.
    
    We note that, in general, the three different types of morphism will give rise to different sets of Maurer--Cartan elements with the first one encompassing the second one. In all the applications we are aware of, however, the second approach is the appropriate one. While the difference between an $E_2L_\infty$-algebra and the corresponding $L_\infty$-algebra obtained by antisymmetrization is then invisible at the level of homotopy Maurer--Cartan theory, the additional algebraic structure in an $E_2L_\infty$-algebra is important in adjusting non-flat higher gauge theories.
    
    \subsection{Adjustment of higher gauge theory}
    
    In the construction of a higher gauge theory from an $E_2L_\infty$-algebra $\frE$, we will always employ the corresponding $L_\infty$-algebra $\frL$ obtained from \ref{thm:antisym}. We then consider its {\em Weil algebra}, which is the Chevalley--Eilenberg algebra of the inner derivations of $\frL$,
    \begin{equation}
        \sfW(\frL)=\Big(\odot^\bullet(\frL[1]^*\oplus \frL[2]^*), Q_\sfW\Big)
        ~,~~~
        Q_\sfW=Q_\sfCE+\sigma~,
    \end{equation}
    where $Q_\sfCE$ is the Chevalley--Eilenberg differential of $\frL$ and $\sigma$ is the shift isomorphism $\sigma:\frL[1]^*\rightarrow \frL[2]^*$, extended to a morphism of differential graded commutative algebras. 
    
    The local kinematical data of an unadjusted higher gauge theory over a patch $U$ of some manifold $M$ is given by a differential graded algebra morphism
    \begin{equation}\label{eq:naive_dga_morphism}
        \caA:\sfW(\frL)\longrightarrow \Omega^\bullet(M)~.
    \end{equation}
    This yields the definition of gauge potentials (the images of $\frL[1]^*$), curvatures (the images of $\frL[2]^*$ together with compatibility of $\caA$ with the differentials on $\frL[1]^*$) and Bianchi identities (compatibility of $\caA$ with the differentials on $\frL[2]^*$). Infinitesimal gauge transformations are given as partially flat homotopies between two such morphisms, and they are therefore determined by the form of the curvatures. For details, see the original discussion in~\cite{Sati:2008eg}; the worked examples in~\cite{Saemann:2019dsl} may also be helpful.
    
    One severe issue with this direct definition of higher gauge theory is that consistency of the gauge algebroid (read: closure of the BRST differential) requires the so-called fake curvature condition, which is highly restrictive~\cite{Saemann:2019dsl}, as mentioned in the introduction. Within supergravity, this problem had been solved in a special case corresponding to the string Lie 2-algebra~\eqref{eq:skeletal_string} by working with different curvatures~\cite{Bergshoeff:1981um,Chapline:1982ww}. As shown in~\cite{Sati:2009ic}, this kinematical data can be obtained from a morphism~\eqref{eq:naive_dga_morphism} after a modification of the Weil algebra, which also results in nicer mathematical properties. Such a modification can be performed for a large class of higher gauge theories, and an appropriately modified Weil algebra was termed {\em adjusted Weil algebra} in~\cite{Saemann:2019dsl}, where also a number of examples were worked out that are relevant to the (1,0) tensor hierarchies of gauged supergravity. In fact, all the kinematical data arising within the tensor hierarchies seem to be adjusted higher gauge theories, and we shall return to them in \ref{sec:tensor_hierachies}. Moreover, the additional structure constants arising in the adjustment seem to originate from the higher products contained in $E_2L_\infty$-algebras that antisymmetrize to the gauge $L_\infty$-algebra. While we do not have a complete picture of the situation yet, we develop a partial one in the next section, which is sufficient for the treatment of tensor hierarchies in maximally supersymmetric gauged supergravities.
    
    \subsection{Firmly adjusted Weil algebras from \texorpdfstring{$\ophLie_2$}{hLie2}-algebras}\label{ssec:firmly_adjusted}
    
    Special cases of Weil algebras that are adjusted and whose corresponding morphisms~\eqref{eq:naive_dga_morphism} into differential forms yield adjusted higher gauge theories with closed BRST complex are the following ones:
    \begin{definition}
        A \uline{firmly adjusted Weil algebra} of an $L_\infty$-algebra $\frL$ is a differential graded commutative algebra obtained from the Weil algebra $\sfW(\frL)$ by a coordinate change
        \begin{equation}\label{eq:firm_rotation}
            \hat t^A\mapsto \hat t'^A\coloneqq \hat t^A+p^A_{B_1B_2\dotso B_mC_1C_2\dotso C_n}\hat t^{B_1}\dotsm\hat t^{B_m}t^{C_1}\dotsm t^{C_n}~,
        \end{equation}
        where $t^A\in \frL[1]^*$, $\hat t^A\in \frL[2]^*$, $m\ge1$, and $n\ge0$, such that the image of the resulting differential $Q_{\rm fadj}$ on generators in $\frL[2]^*$ contains no generator in $\frL[1]^*$ except for at most one of degree~$1$.
    \end{definition}
    We note that putting the generators $(\sigma t^A)$ to zero still recovers the Chevalley--Eilenberg algebra $\sfCE(\frL)$ of $\frL$. In this sense, the coordinate change has not changed the underlying $L_\infty$-algebra. Moreover, note that any Weil algebra is fully contractible in the sense that the cohomology of its linearized differential is trivial. Dually, it is the Chevalley--Eilenberg algebra of an $L_\infty$-algebra which is quasi-isomorphic to the trivial $L_\infty$-algebra. The non-trivial information contained in the Weil algebra is the relation between the generators $(t^A)$ and $(\sigma t^A)$, which translates under the morphism~\eqref{eq:naive_dga_morphism} into the relation between gauge potentials and their curvatures. Our coordinate change thus changes the definition of the curvatures and, as partially flat homotopies describe gauge transformations, also the gauge transformations. Firmly adjusted Weil algebras ensure that the corresponding BRST complex closes: the restricted terms govern the Bianchi identities, which fix the gauge transformations of the curvatures. Closure of the latter is what induces the fake curvature conditions, cf.~the discussion in~\cite[section 4.4]{Saemann:2019dsl}. Thus, firmly adjusted Weil algebras are adjusted Weil algebras in the sense of~\cite{Saemann:2019dsl}.    
    
    As an example, consider the following firmly adjusted Weil algebra of the string Lie 2-algebra~\eqref{eq:skeletal_string}:
    \begin{equation}\label{eq:ext_twt_string_sk_differential}
        \begin{aligned}
            Q_{\rm fadj}~&:~&t^\alpha &\mapsto -\tfrac12 f^\alpha_{\beta\gamma} t^\beta  t^\gamma + \hat t^\alpha~,~&r &\mapsto \tfrac{1}{3!} f_{\alpha\beta\gamma} t^\alpha  t^\beta  t^\gamma -\kappa_{\alpha\beta}t^\alpha\hat t^\beta+ \hat r'~,
            \\
            &&\hat t^\alpha &\mapsto -f^\alpha_{\beta\gamma} t^\beta  \hat t^\gamma~,~
            &\hat r'&\mapsto \kappa_{\alpha\beta}\hat t^\alpha\hat t^\beta~,
        \end{aligned}
    \end{equation}
    which is obtained from the coordinate transformation $\hat r\mapsto \hat r'=\hat r+\kappa_{\alpha\beta} \hat t^\alpha t^\beta$. Here, $t\in \frg[1]^*$, $r\in \IR[2]^*$ and $\hat t=\sigma t$, $\hat r=\sigma r$. Under the morphism~\eqref{eq:naive_dga_morphism}, this firmly adjusted Weil algebra gives rise to the usual string connections
    \begin{equation}
        \begin{aligned}
            a&=A+B\in \Omega^1(M,\frg)\oplus \Omega^2(M,\IR)~,
            \\
            f&=F+H\in \Omega^2(M,\frg)\oplus \Omega^3(M,\IR)~,
            \\
            F&=\rmd A+\tfrac12[A,A]~,
            \\
            H&=\rmd B-\tfrac1{3!}(A,[A,A])+(A,F)=\rmd B+{\rm cs}(A)~.
        \end{aligned}
    \end{equation}
    
    More generally, consider an $L_\infty$-algebra obtained from an $\ophLie_2$-algebra by antisymmetrization. For simplicity, we also assume that the $L_\infty$-algebra has maximally ternary brackets. Its Weil algebra then reads as
    \begin{equation}
        \begin{aligned}
            Q_\sfW t^\alpha&=-(-1)^{|\beta|}m^\alpha_\beta t^\beta-(-1)^{|\gamma|(|\beta|-1)}\tfrac12\,m^{\alpha}_{\beta\gamma}\,t^\beta t^\gamma
            \\
            &~~~~~-(-1)^{|\beta|(|\gamma|+1)+|\delta|(|\beta|+|\gamma|+1)}\tfrac{1}{3!}\,m^{\alpha}_{\beta\gamma\delta}\,t^\beta t^\gamma t^\delta +\hat t^\alpha~,
            \\
            Q_\sfW \hat t^\alpha&=(-1)^{|\beta|}m^\alpha_\beta \hat t^\beta+(-1)^{|\gamma|(|\beta|-1)}\,m^{\alpha}_{\beta\gamma}\,\hat t^\beta t^\gamma
            \\
            &~~~~~+(-1)^{|\beta|(|\gamma|+1)+|\delta|(|\beta|+|\gamma|+1)}\tfrac{1}{2}\,m^{\alpha}_{\beta\gamma\delta}\,\hat t^\beta t^\gamma t^\delta~.
        \end{aligned}
    \end{equation}
    In general, this Weil algebra is clearly not firmly adjusted because of the explicit form of $Q_\sfW\hat t^\alpha$. Let us therefore perform the coordinate change
    \begin{equation}
        \hat t^\alpha\mapsto \hat t'^\alpha\coloneqq \hat t^\alpha+s^\alpha_{\beta\gamma} \hat t^\beta t^\gamma~.
    \end{equation}
    The new Weil differential then reads as follows.
    \begin{equation}\label{eq:fadjust_Qp}
        \begin{aligned}
            Q'_\sfW \hat t'^\alpha&=(-1)^{|\beta|}m^\alpha_\beta \hat t'^\beta
            +(-1)^{1+|\beta|}s^\alpha_{\beta\gamma}\hat t'^\beta \hat t'^\gamma
            +(-1)^{|\gamma|(|\beta|-1)}\,m^{\alpha}_{\beta\gamma}\,\hat t'^\beta t^\gamma
            \\
            &~~~~~+\big(-(-1)^{|\beta|} m^\alpha_\beta s^\beta_{\gamma\delta}+(-1)^{|\gamma|} s^\alpha_{\beta\delta}m^\beta_\gamma
            +(-1)^{|\gamma|+|\delta|} s^\alpha_{\gamma\beta}m^\beta_\delta\big)\hat t'^\gamma t^\delta+\cdots~,
        \end{aligned}
    \end{equation}
    where the ellipsis denotes cubic and higher terms. Let us now further restrict to $\ophLie_2$-algebras obtained from a differential graded algebra via \ref{thm:antisym_hLie} with differential $\Theta^\alpha_\beta$ and structure constants $f^\alpha_{\beta\gamma}$. In this case, we have
    \begin{subequations}
        \begin{equation}
            m^\alpha_\beta=\Theta^\alpha_\beta
            ~,~~~
            m^\alpha_{\beta\gamma}=\begin{cases}
                \tfrac12 f^\alpha_{\delta\gamma}\Theta^\delta_\beta & \mbox{if $|\beta|=1$}~,
                \\
                0 & \mbox{else}~;
            \end{cases}
        \end{equation}
        we also put
        \begin{equation}
            s^\alpha_{\beta\gamma}=\tfrac12(-1)^{|\beta|(|\gamma|+1)} f^\alpha_{\beta\gamma}~.
        \end{equation}
    \end{subequations}
    In the above formulas, $|\alpha|,|\beta|,|\gamma|\geq 1$, and $|\delta|=0$. Together with the Jacobi identity for the $f^\alpha_{\beta\gamma}$, one can then easily verify that $Q'$ becomes a firmly adjusted Weil differential,
    \begin{equation}\label{eq:fadjust_Q}
        Q_{\rm fadj} \hat t'^\alpha=(-1)^{|\beta|}m^\alpha_\beta \hat t'^\beta
        +(-1)^{1+|\beta|\,|\gamma|}\tfrac12f^\alpha_{\beta\gamma}\hat t'^\beta \hat t'^\gamma~.
    \end{equation}
    We thus conclude the following theorem.
    \begin{theorem}\label{thm:firmly_adjust}
        Given an $L_\infty$-algebra with maximally ternary brackets that is obtained from the antisymmetrization of a differential graded Lie algebra by~\ref{prop:dgLA_to_L_infty}, then there is a corresponding firmly adjusted Weil algebra. The data necessary for an adjustment arises from the alternators in the corresponding $\ophLie_2$-algebra.
    \end{theorem}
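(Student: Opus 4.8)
The plan is to prove the statement by an explicit computation in the dual Weil/Chevalley--Eilenberg picture, realizing the firm adjustment as a concrete coordinate change of the form \eqref{eq:firm_rotation}. First I would write out the Weil differential $Q_\sfW$ of the antisymmetrized $L_\infty$-algebra $\frL$ on the generators $t^\alpha\in\frL[1]^*$ and $\hat t^\alpha=\sigma t^\alpha\in\frL[2]^*$, inserting the higher products computed in \ref{thm:antisym_hLie}. The key structural observation is that the binary structure constants $m^\alpha_{\beta\gamma}$ of $\frL$ are assembled from the Leibniz product $\eps_2^0$, while the ternary ones $m^\alpha_{\beta\gamma\delta}$ are assembled from the alternator $\eps_2^1$ nested with $\eps_2^0$. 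For an $\ophLie$-algebra arising from a differential graded Lie algebra $(\frg,\rmd,[-,-])$ via \ref{thm:ophLie_from_dgLA}, the alternator is, up to a sign, the Lie bracket itself, so its structure constants are exactly the $f^\alpha_{\beta\gamma}$; this is where the adjustment data will originate.

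Next I would perform the shift $\hat t^\alpha\mapsto\hat t'^\alpha=\hat t^\alpha+s^\alpha_{\beta\gamma}\hat t^\beta t^\gamma$ and compute the transformed image $Q'_\sfW\hat t'^\alpha$ as in \eqref{eq:fadjust_Qp}. The parameters $s^\alpha_{\beta\gamma}$ are then fixed by demanding that the term linear in a single uncarated generator $t^\delta$, which is precisely the obstruction to firm adjustment, cancel against the contribution of $m^\alpha_{\beta\gamma}$. Tracking the Koszul signs, this forces $s^\alpha_{\beta\gamma}=\tfrac12(-1)^{|\beta|(|\gamma|+1)}f^\alpha_{\beta\gamma}$, which manifestly identifies the adjustment data with the alternator and establishes the second sentence of the theorem.

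Finally I would substitute this choice together with the explicit values $m^\alpha_\beta=\Theta^\alpha_\beta$ and $m^\alpha_{\beta\gamma}=\tfrac12 f^\alpha_{\delta\gamma}\Theta^\delta_\beta$ (nonzero only when $|\beta|=1$ and $|\delta|=0$) and check that the image $Q_{\rm fadj}\hat t'^\alpha$ collapses to the purely $\hat t'$-valued expression \eqref{eq:fadjust_Q}, in which no generator of $\frL[1]^*$ survives except the permitted degree-one one. Since the hypothesis of maximally ternary brackets ensures $\mu_k=0$ for $k\geq4$, no quartic or higher products enter and the computation terminates, so this is exactly the firm adjustment condition.

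The main obstacle, and the only place where genuine work is needed, is the vanishing of the cubic and higher terms hidden in the ellipsis of \eqref{eq:fadjust_Qp}. With $s^\alpha_{\beta\gamma}$ set to its $f$-valued expression these become quadratic in $f$, and they cancel only after invoking the Jacobi identity for $f^\alpha_{\beta\gamma}$ together with the $\rmd$-compatibility relation encoded in $m^\alpha_{\beta\gamma}\propto f^\alpha_{\delta\gamma}\Theta^\delta_\beta$. Keeping the Koszul signs consistent across the nested $\oslash_0$-products and the shift $\sigma$ is delicate, so I would organize the verification degree by degree and, as elsewhere in the paper, confirm it with a computer algebra program.
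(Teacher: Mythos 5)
Your proposal follows the paper's own proof essentially step for step: write out $Q_\sfW$ on $\frL[1]^*\oplus\frL[2]^*$ using the products from \ref{thm:antisym_hLie}, perform the coordinate change $\hat t^\alpha\mapsto\hat t^\alpha+s^\alpha_{\beta\gamma}\hat t^\beta t^\gamma$, fix $s^\alpha_{\beta\gamma}=\tfrac12(-1)^{|\beta|(|\gamma|+1)}f^\alpha_{\beta\gamma}$ from the alternator, and invoke the Jacobi identity for $f^\alpha_{\beta\gamma}$ together with $m^\alpha_{\beta\gamma}\propto f^\alpha_{\delta\gamma}\Theta^\delta_\beta$ to collapse $Q'_\sfW\hat t'^\alpha$ to the firmly adjusted form \eqref{eq:fadjust_Q}. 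This is correct and matches the paper's argument, including the role of the maximally-ternary hypothesis and the sign bookkeeping you flag for computer verification.
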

    
    Below, we shall give examples motivated from higher gauge theory. We stress, however, that the definition of an adjustment is also interesting for purely algebraic considerations, as it allows for the definition of a differential graded algebra of invariant polynomials for an $L_\infty$-algebra which is compatible with quasi-isomorphisms of this $L_\infty$-algebra, cf.~the discussion in~\cite{Saemann:2019dsl}. 
    
    We also note that our construction highlights the features needed for obtaining a firmly adjusted Weil algebra. In particular, it is not necessary that the $\ophLie_2$-algebra was obtained from a differential graded Lie algebra; it was sufficient that there be a relation between the parameters $s^\alpha_{\beta\gamma}$ of the coordinate change and the structure constants $f^\alpha_{\beta\gamma}$ of the Lie algebra to ensure that~\eqref{eq:fadjust_Qp} reduces to~\eqref{eq:fadjust_Q}. This is the case, for example, in the tensor hierarchies in non-maximally supersymmetric gauged supergravity.
    
    \subsection{Example: (1,0)-gauge structures}\label{ssec:10-structures}
    
    As a first more involved example of $E_2L_\infty$-algebras arising in the context of higher gauge theory, let us consider the higher gauge algebra defined in~\cite{Saemann:2017rjm}, see also~\cite{Saemann:2017zpd,Saemann:2019dsl,Rist:2020uaa}. This algebra is a specialization of the general non-abelian algebraic structure identified in~\cite{Samtleben:2011fj} and can be derived from tensor hierarchies, to which we shall return shortly. The latter had received an interpretation as an $L_\infty$-algebra with some ``extra structure'' before, cf.~\cite{Palmer:2013pka} as well as~\cite{Lavau:2014iva}. Here, we show that it is, in fact an $E_2L_\infty$-algebra.
    
    The higher gauge algebra $\hat\frg_{\rm sk}^\omega$ for $\frg$ a quadratic Lie algebra has underlying graded complex
    \begin{equation}\label{eq:ghsk-complex}
        \hat\frg_{\rm sk}^\omega=\left(
        \begin{tikzcd}[row sep=0cm,column sep=2cm]
            \frg^*_v\arrow[r]{}{\mu_1=\sfid} & \frg^*_u & \IR^*_s \arrow[r]{}{\mu_1=\sfid} & \IR_p^*
            \\
            & \oplus & \oplus & \oplus
            \\
            \underbrace{\phantom{\frg^*_v}}_{\hat \frg^\omega_{\rm sk,-3}} & \underbrace{\IR_q}_{\hat \frg^\omega_{\rm sk,-2}} \arrow[r]{}{\mu_1=\sfid} & \underbrace{\IR_r}_{\hat \frg^\omega_{\rm sk,-1}} & \underbrace{\frg_t}_{\hat \frg^\omega_{\rm sk,0}}
        \end{tikzcd}\right)~,
    \end{equation}
    where the subscripts merely help to distinguish between isomorphic subspaces. In~\cite{Saemann:2017zpd}, this differential complex was extended to an $L_\infty$-algebra $\hat\frg_{\rm sk}^\omega$ with higher products
    \begin{equation}\label{eq:ghsk-brackets}
        \begin{aligned}
            \mu_2(t_1,t_2)&=[t_1,t_2]\in\frg_t~,\\
            \mu_2(t,u)&=u\big([-,t]\big)\in\frg^*_u~,~~~&
            \mu_2(t,v)&=v\big([-,t]\big)\in\frg_v^*~,\\
            \mu_3(t_1,t_2,t_3)&=(t_1,[t_2,t_3])\in\IR_r~,~~~&
            \mu_3(t_1,t_2,s)&= s\big(\,(-,[t_1,t_2])\,\big)\in\frg^*_u~,
        \end{aligned}
    \end{equation}
    where $t\in \frg_t$, etc. Moreover, $[-,-]$ and $(-,-)$ denote the Lie bracket and the quadratic form in $\frg$, respectively. When constructing gauge field strengths based on this $L_\infty$-algebra, the following, additional maps feature:
    \begin{equation}
        \begin{aligned}
            \nu_2(t_1,t_2)&=-2(t_1,t_2)\in \IR_r~,~~~&\nu_2(t,s)&=2s(-,t)\in\frg_u^*~,
            \\
            \nu_2(t_1,u_1)&=u_1\big([-,t_1]\big)\in \frg_v^*~.
        \end{aligned}
    \end{equation}
    
    As motivated in more detail later, it is useful to first perform a quasi-isomorphism on $\hat\frg_{\rm sk}^\omega$ leading to the higher brackets
    \begin{equation}\label{eq:ghsk-brackets2}
        \begin{aligned}
            \mu_2(t_1,t_2)&=[t_1,t_2]\in\frg_t~,\\
            \mu_2(t,u)&=\tfrac12u\big([-,t]\big)\in\frg^*_u~,~~~&
            \mu_2(t,v)&=\tfrac12v\big([-,t]\big)\in\frg_v^*~,\\
            \mu_3(t_1,t_2,t_3)&=(t_1,[t_2,t_3])\in\IR_r~,~~~&
            \mu_3(t_1,t_2,s)&= s\big(\,(-,[t_1,t_2])\,\big)\in\frg^*_u~,
            \\
            \mu_3(t_1,t_2,u)&= \tfrac14v\big(\,(-,[t_1,t_2])\,\big)\in\frg^*_v~.
        \end{aligned}
    \end{equation}
    This is the $L_\infty$-algebra obtained by \ref{thm:antisym_hLie} from the $\ophLie_2$-algebra $\frE$ with differential complex~\eqref{eq:ghsk-complex} with $\eps_1=\mu_1$ and the additional binary products
    \begin{equation}
        \begin{aligned}
            \eps_1&=\mu_1~,
            \\
            \eps^0_2(t_1,t_2)&=-\eps^0_2(t_2,t_1)=[t_1,t_2]\in \frg_t~,
            \\
            \eps^0_2(t,u)&=u\big([-,t]\big)\in \frg^*_u~,~~~&
            \eps^0_2(t,v)&=v\big([-,t]\big)\in \frg_v^*~,
            \\
            \eps^1_2(t_1,t_2)&=\eps^1_2(t_2,t_1)=2(t_1,t_2)\in \IR_r~,
            \\
            \eps^1_2(t,s)&=3!s\big(-,t\big)\in\frg_u^*~,~~~
            &\eps^1_2(s,t)&=\eps^1_2(t,s)=3!s\big(-,t\big)\in \frg_u^*~,
            \\
            \eps^1_2(t,u)&=u\big([-,t]\big)\in \frg_v^*~,~~~&\eps^1_2(u,t)&=\eps^1_2(t,u)=u\big([-,t]\big)\in \frg_v^*~,
        \end{aligned}
    \end{equation}
    as one verifies by direct computation. This $\ophLie_2$-algebra is obtained from a differential graded Lie algebra $\frG$ by \ref{thm:ophLie_from_dgLA}, and we have
    \begin{equation}\label{eq:ghsk-complex2}
        \frG=\left(
        \begin{tikzcd}[row sep=0cm,column sep=1.6cm]
            \frg^*_v\arrow[r]{}{\mu_1=\sfid} & \frg^*_u & \IR^*_s \arrow[r]{}{\mu_1=\sfid} & \IR_p^*
            \\
            & \oplus & \oplus & \oplus
            \\
            \underbrace{\phantom{\frg^*_v}}_{\frG_{-4}} & \underbrace{\IR_q}_{\hat \frG_{-3}} \arrow[r]{}{\mu_1=\sfid} & \underbrace{\IR_r}_{\frG_{-2}} & \underbrace{\frg_t}_{\frG_{-1}} \arrow[r]{}{\mu_1=\sfid} & \underbrace{\frg_{\hat t}}_{\frG_0}
        \end{tikzcd}\right)
    \end{equation}
    with the non-trivial Lie brackets $[-,-]_\frG$ fixed by
    \begin{equation}
        \begin{aligned}
            [\hat t_1,\hat t_2]_\frG&\coloneqq [\hat t_1,\hat t_2]\in\frg_{\hat t}~,
            ~~~&[\hat t_1,t_2]_\frG&\coloneqq [\hat t_1,t_2]\in\frg_{t}~,
            \\
            [\hat t_1,u]_\frG&\coloneqq u([-,t])\in\frg_u^*~,
            ~~~&[\hat t_1,v]_\frG&\coloneqq v([-,t])\in\frg_u^*~,
            \\
            [t_1,t_2]_\frG&\coloneqq (t_1,t_2)\in\IR_t~,
            ~~~&[t_1,s]_\frG&\coloneqq \alpha_2 s(-,t_1)\in\frg_u^*~.
        \end{aligned}
    \end{equation}
    This is an extension of the example presented at the end of~\ref{ssec:antisym_hLie}. 
    
    We thus see that we have the following sequence that leads to a construction of $\hat\frg_{\rm sk}^\omega$:
    \begin{equation}
        \mbox{dg Lie algebra}~\frG~~\xrightarrow{~\mbox{\Cref{thm:ophLie_from_dgLA}}~}~~\mbox{$\ophLie_2$-algebra}~\frE~~\xrightarrow{~\mbox{\Cref{thm:antisym_hLie}}~}~~\mbox{$L_\infty$-algebra}~\hat\frg_{\rm sk}^\omega~,
    \end{equation}
    specializing the picture~\eqref{eq:diagram_algebras}. The additional information (i.e.~structure constants) contained in the $E_2L_\infty$-algebra are vital for constructing the adjusted form of the curvatures. 
    
    A corresponding adjusted Weil algebra was found in~\cite{Saemann:2019dsl}, and it agrees with the one obtained from our construction of a firmly adjusted Weil algebra from \ref{ssec:firmly_adjusted}:
    \begin{equation}
        \def\arraystretch{1.5}
        \begin{aligned}
            Q_{\rm fadj}~&:~&t^\alpha &\mapsto  -\tfrac12 f^\alpha_{\beta\gamma} t^\beta t^\gamma+\hat t^\alpha~,~~~& p&\mapsto -s+\hat p~,\\
            &&\hat t^\alpha &\mapsto  -f^\alpha_{\beta\gamma} t^\beta \hat t^\gamma~,~~~& \hat p&\mapsto \hat s~,\\
            &&r&\mapsto \tfrac{1}{3!} f_{\alpha\beta\gamma} t^\alpha t^\beta t^\gamma-\kappa_{\alpha\beta}t^\alpha \hat t^\beta+q+\hat r~,~~~& s&\mapsto \hat s~,\\
            &&\hat r&\mapsto \kappa_{\alpha\beta}\hat t^\alpha \hat t^\beta-\hat q~,~~~& \hat s&\mapsto 0~,\\
            &&u_\alpha &\mapsto -f^\gamma_{\alpha\beta}t^\beta u_\gamma-\tfrac12 f_{\alpha\beta\gamma}t^\beta t^\gamma s-v_\alpha+\hat u_\alpha~,~~~&
            q &\mapsto \hat q~,\\
            &&\hat u_\alpha &\mapsto -f^\gamma_{\alpha\beta}t^\beta \hat u_\gamma+\hat v_\alpha~,~~~&
            \hat q &\mapsto 0~,\\
            &&v_\alpha&\mapsto -f^\gamma_{\alpha\beta}t^\beta v_\gamma - f^\gamma_{\alpha\beta}\hat t^\beta u_\gamma +f_{\alpha\beta\gamma}t^\beta\hat t^\gamma s-\tfrac12 f_{\alpha\beta\gamma} t^\beta t^\gamma \hat s+\hat v_\alpha~,\\
            &&\hat v_\alpha&\mapsto -f^\gamma_{\alpha\beta}t^\beta \hat v_\gamma+f^\gamma_{\alpha\beta}\hat t^\beta \hat u_\gamma~.
        \end{aligned}
    \end{equation}

    \section{Tensor hierarchies}\label{sec:tensor_hierachies}
    
    Tensor hierarchies are particular forms of higher gauge theories that were introduced in the context of gauging maximal supergravity theories~\cite{deWit:2005hv, deWit:2005ub,Samtleben:2005bp,deWit:2008ta,Bergshoeff:2009ph}. They are constructed using the embedding tensor formalism, introduced in~\cite{Cordaro:1998tx, Nicolai:2000sc, deWit:2002vt, deWit:2003hq}. For comprehensive reviews  see~\cite{Samtleben:2008pe, Trigiante:2016mnt}. Tensor hierarchies are also crucial to, for example, conformal field theories such as the $\mathcal{N}=(1,0)$ superconformal models of~\cite{Samtleben:2011fj, Samtleben:2012mi,Samtleben:2012fb}. 
    
    Although initially applied to gauged supergravity theories,  tensor hierarchies do not  require supersymmetry and appear through the embedding tensor formalism applied to  the gauging of  a  broader class of Einstein--Maxwell-matter  theories, as discussed in~\cite{Bergshoeff:2009ph,Hartong:2009vc}.  
    
    \subsection{Physical context}
    
    Before analyzing the algebraic structure underlying tensor hierarchies in more detail, let us briefly review the physical context. Consider the Lagrangian of ungauged Einstein--Maxwell-scalar theory in $d$ dimensions,
    \begin{equation}\label{Lth}
        \mathcal{L}_{\text{ungauged}} = \star R +\tfrac{1}{2} g_{xy} \rmd \varphi^x \wedge \star \rmd \varphi^y  -\tfrac{1}{2} a_{ij} F^{i} \wedge \star F^{j} +\cdots
    \end{equation}
    with scalars $\varphi^x$ mapping spacetime to a scalar manifold $\caM$ and 1-form abelian gauge potentials $A^i$ with field strengths $F^{i}=\rmd A^{i}$. Here, $g_{xy}(\varphi)$ and $a_{ij}(\varphi)$ are symmetric and positive-definite on the entire scalar manifold $\caM$. The ellipsis denotes possible deformations, such as a scalar potential $\mathcal{V}(\varphi)$ or topological terms such as, e.g., ${d}_{ijk}F^{i} \wedge  F^{j}\wedge A^{k}$ for $d=5$, as familiar from supergravity. The set of constant tensors controlling these deformation terms, which includes those appearing in the tensor hierarchies that do not enter~\eqref{Lth}, will be referred to as {\em deformation tensors}. 
    
    We assume that there is a global symmetry group $\sfG$ acting on scalars and 1-form potentials such that the undeformed action~\eqref{Lth} is invariant under this action. In particular, the total gauge potential one-form $A$ takes values in a representation $V_{-1}$ of $\sfG$, which is isomorphic to the fibers of the tangent space of the scalar manifold. In the presence of deformations, we assume that there is a non-abelian subgroup $\sfK\subseteq \sfG$ leaving the full action invariant.    
    
    Infinitesimally, the corresponding Lie algebra actions of $\frg=\sfLie(\sfG)$ on the scalars and the gauge potential are given by 
    \begin{equation}    
        \delta_\lambda \varphi^x = \lambda^{\alpha} k_{\alpha}{}^{x}(\varphi)
        \eand
        \delta_\lambda A^{i} =  \lambda^{\alpha} t_{\alpha}{}^{i}{}_{j} A^{j}~,
    \end{equation}
    where  $t_{\alpha}{}^{i}{}_{j}, \alpha=1,2,\ldots,\dim \frg$, are the generators of $\frg$ in the representation $V_{-1}$ with respect to some bases $e_\alpha$ of $\frg$ and $e_i$ of $V_{-1}$. Invariance under $\sfG$ requires that $k(\phi)$ be  Killing vectors of the scalar manifold and $\mathcal{L}_{k_\alpha}a_{ij}=-2\ t_{\alpha}{}^{k}{}_{(i} a_{j)k}$. 
    
    In order to gauge\footnote{That is, we promote a global symmetry $\sfH$ to a local one by adding a principal $\sfH$-bundle on our spacetime and consider (a part of) the one-form potential as a connection on this bundle.} a subgroup $\sfH\subseteq \sfK\subseteq \sfG$ with Lie algebra $\frh=\sfLie(\sfH)$, we first note that we can trivially regard the pair $(V_{-1},\frg$) as a differential graded Lie algebra
    \begin{equation}\label{eq:trivialdgLA}
        V=(~V_{-1}~\xrightarrow{~0~}~V_{0}=\frg~)
    \end{equation}
    with evident Lie bracket on $V_0$ and the Lie bracket $[-,-]:V_{0}\times V_{-1}\rightarrow V_{-1}$ given by the action of $\frg$ on $V_{-1}$. Because the gauge potential takes values in $V_{-1}$, it does not make sense to gauge a Lie subalgebra of $\frg$ which is larger than $V_{-1}$. Therefore, we can identify the subalgebra $\frh$ with the image of a linear map 
    \begin{equation}
        \begin{aligned}
            \Theta: V_{-1}&\twoheadrightarrow \frh~~~~~~\subset \frg~,
            \\
            e_i&\mapsto \Theta_i{}^\alpha e_\alpha~.
        \end{aligned}
    \end{equation}
    The $(\Theta_{i}{}^{\alpha} e_\alpha)$ then form a spanning set\footnote{but not necessarily a basis} of the Lie algebra $\frh$. Moreover, we have an induced action of $\frh$ on $V_{-1}$, given by 
    \begin{equation}
        (\Theta_{i}{}^{\alpha} e_\alpha) \acton e_j=\Theta_{i}{}^{\alpha} t_{\alpha j}{}^k e_k\eqqcolon X_{ij}{}^k e_k
    \end{equation}
    with 
    \begin{equation}
        t_{\alpha j}{}^k=-t_{\alpha}{}^k{}_j\eand t_{\alpha i}{}^jt_{\beta j}{}^k-t_{\beta i}{}^jt_{\alpha j}{}^k=-f_{\alpha\beta}{}^\gamma t_{\gamma i}{}^k~.
    \end{equation}
    
    In order to guarantee closure of the Lie bracket on $\frh$ and consistency of the action, we can assume that we can incorporate $\Theta$ into~\eqref{eq:trivialdgLA} such that 
    \begin{equation}\label{eq:ThetadgLA}
        V_\Theta=(~V_{-1}~\xrightarrow{~\Theta~}~V_{0}=\frg~)
    \end{equation}
    is again a differential graded Lie algebra. To jump ahead of the story, note that this guarantees the existence of a higher gauge algebra via~\ref{prop:dgLA_to_L_infty}, which we anticipate as part of the construction of a higher gauge theory. The fact that $\Theta$ is a derivation for the graded Lie bracket then implies the quadratic {\em closure constraint}
    \begin{equation}\label{eq:quad_closure}
        f_{\alpha\beta}{}^{\gamma}\Theta^{\alpha}_{i}\Theta^{\beta}_{j}=\Theta^{\gamma}_{k}X_{ij}{}^{k}~~~\Leftrightarrow~~X_{im}{}^\ell X_{j\ell}{}^n-X_{jm}{}^\ell X_{i\ell}{}^n=-X_{ij}{}^\ell X_{\ell m}{}^n~. 
    \end{equation}
    It is well known that this quadratic closure constraint implies that the $X_{ij}{}^k$ form the structure constants of a Leibniz algebra on $V_{-1}$,
    \begin{equation}\label{eq:action_V_1_V_1}
        e_i\circ e_j\coloneqq X_{ij}{}^k e_k~,
    \end{equation}
    cf.~e.g.~\cite{Lavau:2017tvi,Hohm:2018ybo,Kotov:2018vcz}. This is unsatisfactory given that the 1-form gauge potentials $A$ will take values in $V_{-1}$ and $V_{-1}$ should therefore have some Lie structure. As noted in~\cite{Saemann:2019dsl}, and as evident from \ref{prop:Leib_is_hemistrict_ELinfty}, this Leibniz algebra can be promoted to an $\ophLie_2$-algebra. Moreover, the fact that we have the differential graded algebra~\eqref{eq:ThetadgLA} guarantees that we will have an $\ophLie_2$-algebra via \ref{thm:antisym_hLie} (or, if preferred, the corresponding $L_\infty$-algebra obtained from \ref{thm:antisym_hLie}). This will turn out to be indeed the higher gauge algebra underlying the tensor hierarchies.
    
    But let us continue with the tensor hierarchy from the physicists' perspective. The quadratic closure constraint~\eqref{eq:quad_closure} allows us to introduce a consistent combination of a covariant derivative on the scalar fields and local transformations parameterized by $\Lambda_{(0)}\in C^\infty(M)\otimes V_{-1}$:
    \begin{equation}\label{eq:first_gauge}
        \begin{gathered}
            \nabla\varphi^i\coloneqq  \rmd \varphi^i + \Theta^{\alpha}_{j} A^{j} k_{\alpha}{}^{i}(\varphi)~,
            \\
            \delta_{\Lambda_{(0)}} \varphi^i\coloneqq  {\Lambda_{(0)}}^{i}\Theta^{\alpha}{}_{j}  k_{\alpha}{}^{j}(\varphi)~,~~~
            \delta_{\Lambda_{(0)}} A^i\coloneqq \rmd {\Lambda_{(0)}}^i+X_{jk}{}^iA^j{\Lambda_{(0)}}^k~.            
        \end{gathered}
    \end{equation}
    Note that the action~\eqref{eq:action_V_1_V_1} of $V_{-1}$ on $V_{-1}$ is usually not faithful, and the parameterization by $\Lambda^i$ is thus usually highly degenerate. 
    
    In light of our above discussion of the higher Lie algebra arising from the Leibniz algebra~\eqref{eq:action_V_1_V_1}, it is not surprising that the naive gauge transformation~\eqref{eq:first_gauge} of the gauge potential $A$ does not render the naive definition of curvature $\rmd A^{i}+\tfrac12 X_{jk}{}^{i}A^{j}\wedge A^{k}$ covariant. This is remedied by introducing a second $\sfG$-module $V_{-2}\subset \text{Sym}^2(V_{-1})$, where $r=1,2\ldots\dim V_{-2}$ for some basis $(e_{r})$ together with a map
    \begin{equation}
        \begin{aligned}
            Z: V_{-2}&\rightarrow V_{-1}~,
            \\
            e_r&\mapsto Z^i{}_r e_i~.
        \end{aligned}
    \end{equation}
    This allows us to introduce a $V_{-2}$-valued 2-form potential $B$ and a $V_{-2}$-valued 1-form gauge parameter $\Lambda_{(1)}$ to generalized gauge transformations and curvatures as usual in higher gauge theory:
    \begin{equation}
        \begin{gathered}
            \delta A^{i} = \rmd \Lambda^i_{(0)}+X^i_{jk}A^j\Lambda^k_{(0)} +Z^{i}{}_{r}\Lambda^{r}_{(1)}~,~~~
            \delta B^r= \nabla\Lambda^{r}_{(1)}+\ldots~,
            \\
            F^{i}=\rmd A^{i}+\tfrac12 X^{i}{}_{jk}A^{j}\wedge A^{k}+Z^{i}{}_{r}B^{r}~,~~~H^r=\nabla B^r+\ldots~,
        \end{gathered}
    \end{equation}
    where here $\nabla$ is the covariant derivative given by the natural action of $\frh$ on $V_{-2}$ and the ellipses refer to covariantizing terms that are needed to complete the kinematical data to that of an adjusted higher gauge theory. In particular, the latter will include terms involving the various deformation tensors. This process is then iterated in a reasonably obvious fashion until the full kinematical data of an adjusted higher gauge theory is obtained\footnote{The fact that this iteration terminates is guaranteed because spacetime is finite-dimensional.}.
    
    In the gauged supergravity literature there is also often a  linear \emph{representation constraint}
    \begin{equation}    
        P_\Theta \Theta =\Theta~,
    \end{equation}
    where $P_\Theta$ is the projector onto the representation contained in $V_1^*\otimes \mathfrak{g}$  carried by $\Theta$, which  will be denoted $\rho_\Theta$. This can be understood as a requirement of supersymmetry~\cite{deWit:2002vt,deWit:2005hv}, the mutual locality of the action~\cite{deWit:2005ub} or anomaly cancellation~\cite{DeRydt:2008hw}. 
    
    A final important ingredient is now that the electromagnetic duality contained in U-duality needs to link potential $p$-forms to potential $d-p-2$-forms, and correspondingly the $\sfG$-modules $V_{-p}$ and $V_{p+2-d}$ have to be dual to each other in the lowest degrees that involve physical gauge potentials. 
    
    The above constraints restrict severely the choices of representations $V_{-2}$, $V_{-3}$. In \ref{Mcharges} we listed some important concrete examples of maximal supergravities, in which $\sfK=\sfG$. In this case, there is a tensor hierarchy dgLa determined by the U-duality group~\cite{Palmkvist:2011vz, Palmkvist:2013vya}, with graded vector space described in \autoref{Mcharges} and derivation given by the action of $\Theta$. Also, the electromagnetic duality is visibly reflected in the duality of representations in the cases $d=5,6,7$. 
    \begin{table}[!ht]\small
        \begin{tabular*}{\textwidth}{cccccccccccccccc}
            \hline
            \hline
            $d$ &$ \sfG                           $&$ V_{-1}            $&$ V_{-2}           $&$ V_{-3}           $&$ V_{-4}            $&$ V_{-5}       $&$ V_{-6}       $   \\
            \hline
            7   &$ \sfSL(5,\IR )                  $&$ \rep{10}_c    $&$ \rep{5}     $&$ \rep{5}_c    $&$ \rep{10} $&$ \rep{24} $&$ \rep{15}_c\oplus\rep{40} $& \\
            6   &$ \sfSO(5,5)                $&$ \rep{16}_c     $&$ \rep{10}    $&$ \rep{16} $&$ \rep{45} $&$ \rep{144} $&$ \rep{10\oplus126\oplus320} $&\\
            5   &$ \sfE_{6(6)}              $&$ \rep{27}_c     $&$ \rep{27} $&$ \rep{78} $&$ \rep{351}_c $&$ \rep{27\oplus1728} $\\
            4   &$ \sfE_{7(7)}              $&$ \rep{56} $&$ \rep{133} $&$ \rep{912} $&$ \rep{133\oplus8645} $\\
            3   &$ \sfE_{8(8)}              $&$ \rep{248} $&$ \rep{1\oplus 3875} $&$ \rep{3875\oplus147250} $\\
            \hline
            \hline
        \end{tabular*}
        \caption{Global symmetry groups $\sfG$ of maximal supergravity in $3\leq d\leq 7$ spacetime dimensions and their maximal compact subgroups (ignoring discrete factors). The $\sfG$ representations $V_{-p}$ are carried by  $p$-forms in the tensor hierarchy. The scalars ($0$-forms) are valued in $\caM\coloneqq \sfG/\sfG_0$,  where $\sfG_0\subset \sfG$ is the maximal compact subgroup. } \label{Mcharges}
    \end{table}
    
    We note that in the presence of generic deformations, the differential graded Lie algebra constructed in the maximally supersymmetric case is actually insufficient and needs to be extended further by at least one step in both directions. We shall explain this below, when discussing the example $d=5$.
    
    \subsection{Generic tensor hierarchies}\label{ssec:gen_tensor_hierarchies}
    
    Let us ignore the link between tensor hierarchies and gauged supergravity for a moment; clearly, the resulting kinematical data is potentially of interest in higher gauge theory in a much wider context. 
    
    The construction prescription is rather straightforward. We consider a Lie algebra $\frg$, which we enlarge to a differential graded Lie algebra 
    \begin{equation}
        \begin{aligned}
            V=\Big(~\dotsb \xrightarrow{~\rmd~} V_{-2} \xrightarrow{~\rmd~} V_{-1} \xrightarrow{~\rmd~} V_0=\frg \xrightarrow{~\rmd~} V_1 \xrightarrow{~\rmd~} \dotsb~\Big)~,
        \end{aligned}
    \end{equation}
    where we allowed for additional vector spaces $V_i$ with $i>0$. All vector spaces $V_i$ are $\frg$-modules, and the Lie bracket on $V_0$ as well as the Lie brackets on $V_0\otimes V_i$ are given. Further Lie brackets $[-,-]:V_i\otimes V_j\rightarrow V_{i+j}$ can be introduced, but due to the Jacobi identity, the underlying structure constants have to be invariant tensors of $\frg$ (as we shall also see below in an example). The differentials do not have to satisfy this restriction. As an additional constraint, we can also impose the condition that $V_{-p}^*=V_{p+2-d}$ as required by the U-duality condition from supergravity. This can be useful in the construction of action principles.
    
    To illustrate the above, let us construct a generic example in $d=5$. Let $\frg$ be a Lie algebra and $V_{-1}$ any representation. Imposing the duality constraint and allowing for an extension in one degree on either side leads to the differential complex 
    \begin{equation}
        \begin{aligned}
            V=\Big(~V_{-4}\cong{\rm coker}(\Theta)^* \xrightarrow{~\rmd~} V_{-3}\cong\frg^* &\xrightarrow{~\rmd~} V_{-2}\cong V_{-1}^* 
            \\
            &\xrightarrow{~\rmd~} V_{-1} \xrightarrow{~\Theta~} V_0=\frg \xrightarrow{~\rmd~} V_1\cong{\rm coker}(\Theta)~\Big)~.
        \end{aligned}
    \end{equation}
    Let us now switch to the Chevalley--Eilenberg description $\sfCE(V)$ of the differential graded Lie algebra $V$ we want to construct, which is generated by coordinates $r^\mu$, $r^\alpha$, $r^a$, $r_a$, $r_\alpha$, $r_\mu$ of degrees $0,1,2,3,4,5$, respectively. We note that we have a natural symplectic form on $V[1]^*$ of degree 5,
    \begin{equation}
        \omega=\rmd r^\alpha \wedge \rmd r_\alpha+\rmd r^a\wedge \rmd r_a+\rmd r^\mu \wedge \rmd r_\mu~.
    \end{equation}
    Compatibility of the Lie algebra action with the duality pairing amounts to the fact that the Chevalley--Eilenberg differential $Q$ is Hamiltonian for the Poisson bracket of degree~$-5$,
    \begin{equation}\label{eq:Poisson_5d}
        \begin{aligned}
            \{f,g\}\coloneqq &-\parder[f]{r_\alpha}~\parder[g]{r^\alpha}+(-1)^{|f|+1}\parder[f]{r^\alpha}~\parder[g]{r_\alpha}-\parder[f]{r_a}~\parder[g]{r^a}+(-1)^{|g|+1}\parder[f]{r^a}~\parder[g]{r_a}
            \\
            &-\parder[f]{r_\mu}~\parder[g]{r^\mu}+(-1)^{|g|+1}\parder[f]{r^\mu}~\parder[g]{r_\mu}
        \end{aligned}
    \end{equation}
    induced by $\omega$. That is,
    \begin{equation}
        Q=\{\caQ,-\}~,~~~|\caQ|=6~.
    \end{equation}
    The most generic Hamiltonian $\caQ$ of degree~$6$ that is at most cubic in the generators\footnote{This restriction is required to obtain a differential graded Lie algebra, as opposed to an $L_\infty$-algebra} is
    \begin{equation}
        \begin{aligned}
            \caQ&=\tfrac12 f_{\beta\gamma}{}^\alpha r_\alpha r^\beta r^\gamma+t_{\alpha a}{}^b r^\alpha r^a r_b+\tfrac1{3!}d_{abc} r^ar^br^c+\tfrac12 Z^{ab}r_ar_b+\Theta_{a}{}^{\alpha}r^ar_\alpha
            \\
            &~~~~+g^{\mu}_{1\alpha}r_\mu r^\alpha+g^{\mu}_{2\alpha\nu}r_\mu r^\alpha r^\nu+g_{3\mu a}^\alpha r^\mu r^ar_\alpha+g_{4\mu}^{ab} r^\mu r_ar_b~,
        \end{aligned}
    \end{equation}
    where besides the structure constants $f_{\beta\gamma}{}^\alpha$ and the embedding tensor $\Theta_{a}{}^{\alpha}$ we have the deformation tensors $d_{abc}$ and $Z^{ab}$, which are totally symmetric and antisymmetric, respectively, due to the grading of the generators. The remaining structure constants will be called {\em auxiliary}. For $\caQ$ to give rise to a Chevalley--Eilenberg differential, we have to impose
    \begin{equation}
        Q^2=0~~~\Leftrightarrow~~~\{\caQ,\caQ\}=0~.
    \end{equation}
    This equation implies conditions on the structure constants. For example, we have
    \begin{equation}\label{eq:cond_1}
        \Theta_{a}{}^{\gamma} f_{\beta\gamma}{}^{\alpha}+t_{\beta a}{}^b\Theta_{b}{}^{\alpha}-g^\mu_{1\beta}g_{3\mu a}^\alpha=0~.
    \end{equation}
    For $g_{1}=g_{3}=0$, this means that the embedding tensor is an invariant tensor, which is clearly too strong a condition. We can make a non-canonical choice of an embedding
    \begin{equation}
        i:{\rm coker}(\Theta)\hookrightarrow\frg~,
    \end{equation}
    which is given by structure constants $i^\alpha_\mu$ such that
    \begin{equation}
        i^\alpha_\mu g_{1\alpha}^\nu=\delta_\mu^\nu~.
    \end{equation}
    With this choice, we can split the condition~\eqref{eq:cond_1} into
    \begin{equation}
        \begin{aligned}
            \Theta_c{}^\beta\Theta_{a}{}^{\gamma} f_{\beta\gamma}{}^{\alpha}+X_{ca}^b\Theta_{b}{}^{\alpha}&=0~,
            \\
            i^\beta_\mu(\Theta_{a}{}^{\gamma} f_{\beta\gamma}{}^{\alpha}+t_{\beta a}{}^b\Theta_{b}{}^{\alpha})&=g_{3\mu a}^\alpha~,
        \end{aligned}
    \end{equation}
    and the first condition is the usual one encountered in the $d=5$ tensor hierarchy, while the second condition fixes one of the auxiliary structure constants. Besides the above condition and the fact that $f_{\beta\gamma}{}^{\alpha}$ and $t_{\alpha a}{}^b$ are the structure constants of the Lie algebra $\frg$ and a representation of $\frg$, we also have
    \begin{equation}
        \begin{aligned}
            Z^{ab}\Theta_{b}{}^{\alpha} &=0
            ~,~~~&
            \Theta_{a}{}^{\alpha} g_{1\alpha}^\mu&=0~,
            \\
            Z^{ab}d_{acd}-2X^{a}_{(cd)}&=0
            ~,~~~&
            Z^{a[b} t_{\alpha a}{}^{c]}+2g_{1\alpha}^\mu g_{4\mu}^{bc}&=0~,
            \\
            t_{\alpha (a}{}^d d_{bc)d}&=0~,
        \end{aligned}
    \end{equation}
    as well as a number of conditions for the auxiliary structure constants. As expected, the tensor $d_{abc}$ capturing the Lie bracket $V_{-1}\otimes V_{-1}\rightarrow V_{-2}$ has to be an invariant tensor.
    
    The kinematical data of a generic tensor hierarchy can then be constructed from the firmly adjusted Weil algebra of the corresponding $L_\infty$-algebra as described in detail in \ref{ssec:firmly_adjusted}.
    
    We note that the condition that $d_{abc}$ be an invariant tensor is to strong a constraint, e.g.~for the non-maximally supersymmetric case. From the formulas of the curvatures, it is clear that there is no differential graded Lie algebra underlying this case, if the higher gauge algebra is constructed using the formulas of \ref{thm:ophLie_from_dgLA}. This observation strongly suggests that there are  generalizations of these derived bracket constructions, but this is beyond the scope of this paper.
    
    \subsection{Example: \texorpdfstring{$d=5$}{d=5} maximal supergravity}
    
    Let us give a concrete and complete picture of the interpretation of a tensor hierarchy using $\ophLie_2$-algebras, including the construction of curvatures. We choose the case $d=5$, which allows us to recycle observations made in \ref{ssec:gen_tensor_hierarchies}. For a detailed discussion of this theory, see~\cite{deWit:2004nw}.
    
    Maximal supergravity in $d=5$ dimensions has the non-compact global symmetry group $\sfE_{6(6)}(\IR)$~\cite{Cremmer:1980gs}. When dimensionally reducing from $d=11$, in order to make manifest the $\fre_{6(6)}$ structure of the scalar sector in $d=5$, one must first dualize the 3-form potential, as described in detail in~\cite{Cremmer:1997ct}. This gives a total of 42~scalars parameterizing
    $\sfE_{6(6)}(\IR)/\sfUSp(8)$.
    
    The fully dualized bosonic Lagrangian with manifest $\sfE_{6(6)}(\IR)$-invariance can be written as
    \begin{equation}\label{D5L}
        \mathcal{L}_5 = R\star 1 +\tfrac{1}{2} g_{xy} \rmd \varphi^x \wedge \star \rmd \varphi^y  -\tfrac{1}{2} a_{ab} F^{a} \wedge \star F^{b} -\tfrac{1}{6} d_{abc}F^{a}_{\2} \wedge F^{b}_{\2} \wedge A^{c}_{\1}~.
    \end{equation}
    The 1-form potentials transform linearly in the ${\rep{27 } }_c$ of  $\mathfrak{e}_{6(6)}$, and $a,b,c \in \{ 1,\ldots, 27\}$. In addition to the singlet $\rep{1}\in{\rep{27 } }_c\otimes \rep{27}$, used to construct the 1-form kinetic term, there is a singlet in the totally symmetric 3-fold tensor product $\rep{1}\in\bigodot^3(\mathbf{{27} }_c)$, which is used to construct the topological cubic term.
    
    For the construction of the tensor hierarchy, we shall need the following $\sfE_6$-invariant tensors: 
    \begin{equation}        
        \begin{aligned}
            f_{\alpha\beta\gamma} & \in \bigwedge\nolimits^3~\mathbf{78}
            ~,~~~&
            t_{\alpha}{}_{a}{}^{b}& \in \mathbf{78}\otimes \mathbf{{27}}\otimes\mathbf{{27} }_c~,
            \\
            d^{abc}& \in \bigodot\nolimits^3~\mathbf{27}
            ~,~~~&
            d_{abc}& \in \bigodot\nolimits^3~\mathbf{{27} }_c~.
        \end{aligned}
    \end{equation}
    To optimize our notation, we also introduce the following tensors:
    \begin{equation}        
        \begin{aligned}
            X_{ab}{}^{c}& = \Theta_{a}{}^{\alpha} t_{\alpha}{}_{b}{}^{c}~,~~~&Y_{a}{}_{\alpha}{}^{\beta}& =  \Theta_{a}{}^{\gamma} f_{\gamma\alpha}{}^{\beta}+ t_{\alpha}{}_{a}{}^{b}\Theta_{b}{}^{\beta}\equiv\delta_\alpha \Theta_{b}{}^{\beta}~,
            \\
            X_{a\alpha}{}^{\beta}& =\Theta_{a}{}^{\gamma} f_{\gamma\alpha}{}^{\beta}~,~~~
            &Z^{ab}& = \Theta_{c}{}^{\alpha} t_{\alpha}{}_{d}{}^{a}d^{bcd}= X_{cd}{}^{a}d^{bcd}=Z^{[ab]}~.
        \end{aligned}
    \end{equation}
    The above tensors satisfy the following identities~\cite{deWit:2004nw}:
    \begin{subequations}\label{eq:def_tens_identities}
        \begin{equation}
            d_{a cd}d^{b cd} = \delta_{a}{}^{b}~,~~~
            X_{(ab)}{}^{c} = d_{ab d}Z^{cd}~,~~~X_{[ab]}{}^{c} = 10 d_{adf}d_{beg}d^{c de}Z^{fg}~,
        \end{equation}
        and in addition, we have the following three equivalent forms of the closure constraints:
        \begin{equation}        
            2X_{[a|c|}{}^{d}X_{b]d}{}^{e}+X_{[ab]}{}^{d}X_{dc}{}^{e}=0~,~~~
            Z^{ab}X_{bc}{}^{d}=0~,~~~
            X_{dc}{}^{[a}Z^{b]c}=0~.
        \end{equation}
    \end{subequations}
    Using these, we can now apply the formalism of \ref{ssec:gen_tensor_hierarchies} and construct the differential graded Lie algebra. It helps to broaden the perspective a bit and derive the latter from a graded Lie algebra $V$, with underlying vector space consisting of $\mathfrak{e}_{6(6)}$-modules:
    \begin{equation}    
        \begin{array}{ccccccccccccccccccccc}
            V_{\mathfrak{e}_{6(6)}}& =& V_{-5}&\oplus& V_{-4} &\oplus& V_{-3}&\oplus& V_{-2}&\oplus& V_{-1}&\oplus& V_{0} &\oplus& V_{1}\\[5pt]
            \rho_{\sst{(k)}}&& \mathbf{27\oplus 1728}&& \mathbf{351}_c && \mathbf{78} && \mathbf{27} && \mathbf{27}_c && \mathbf{78} && \mathbf{351} \\[5pt]
            e_{\sst{(k)}}&& (e^{a}, e_{ab}{}^{\alpha}) && e_{a}{}^{\alpha} && e^{\alpha} &&e^{a} && e_a &&  e_\alpha  &&  e_{\alpha}{}^{a}
        \end{array}
    \end{equation}
    We have indicated the $\mathfrak{e}_{6(6)}$-representations $\rho_{\sst{(k)}}$ carried by each $V_{\mathfrak{e}_{6(6)}}$-degree~$k$ summand, $V_k$, and their corresponding basis elements $e_{\sst{(k)}}$, e.g.~$(e_\0)_\alpha=e_\alpha$, where $(e_\alpha)$ is some basis for the exceptional Lie algebra $\mathfrak{e}_{6(6)}$. Note that the embedding tensor $\Theta=\Theta_{a}{}^{\alpha}e_{\alpha}{}^{a}$ is an element of $V_1$ and $e_{\alpha}{}^{a}= P_{\mathbf{351}}e_{\alpha}\otimes e^{a}$.  
    
    The graded Lie bracket on $V$ is now given mostly by the obvious projections of the graded tensor products,
    \begin{equation}\label{D5_gla_com}
        [e_\alpha,e_\beta]=f_{\alpha\beta}{}^\gamma e_\gamma~,~~~
        [e_\alpha, e_{\sst{(k)}}] = \rho_{\sst{(k)}}(e_\alpha) e_{\sst{(k)}}~,~~~
        [e_{\sst{(k)}}, e_{\sst{(l)}}] = T_{k,l}e_{\sst{(k+l)}}~.
    \end{equation}
    Here, $T_{k,l}$ are the intertwiners dual to the projectors  $P_{k,l}:V_k\wedge V_l\rightarrow V_{k+l}$. For example, 
    \begin{subequations}\label{eq:e6_gla_comm_def}
        \begin{equation}
            [e_a, e_b] = 2d_{abc}e^c~,~~~[e_a, e^b] = (t_{\alpha})_{a}{}^{b} e^\alpha~,
        \end{equation}
        where
        \begin{equation}
            e^a \coloneqq \tfrac12 d^{abc}[e_b, e_c], \qquad  e^\alpha \coloneqq (t_{\alpha})_{b}{}^{a} [e_a, e^b]~. 
        \end{equation}
    \end{subequations}
    The adjoint indices are raised/lowered with $\eta_{\alpha\beta}= \tr (t_\alpha t_\beta)$, which is proportional to the Cartan--Killing form. 
    
    Selecting an element $\Theta=\Theta_{a}{}^{\alpha}e_{\alpha}{}^{a}\in V_1$ now defines a differential 
    \begin{equation}    
        \rmd v \coloneqq [\Theta, v] 
    \end{equation}
    for $v\in V$, and we note that $[\Theta,\Theta]=0$ for degree reasons. The explicit action of $\rmd e_{\sst{(k)}} \coloneqq [\Theta, e_{\sst{(k)}} ] $ can be determined using the graded Jacobi identity  from the initial condition
    \begin{equation}    
        [ e_{\alpha}{}^{a} , e_b] = P_{\rep{351}}\delta_{b}{}^{a}e_\alpha~, 
    \end{equation}
    where $P_{\rep{351}}$ is the projector $P_{\mathbf{351}}: \rep{78}\otimes \rep{27}_c \rightarrow \rep{351}$,
    \begin{equation}    
        (P_{\rep{351}})_{\alpha}{}^{a}{}_{b}{}^{\beta} = -\tfrac65 t_{\alpha}{}_{b}{}^{c}t^{\beta}{}_{c}{}^{a}+\tfrac{3}{10} t_{\alpha}{}_{c}{}^{a}t^{\beta}{}_{b}{}^{c}+\tfrac15 \delta_{\alpha}{}^{\beta}\delta^{a}{}_{b}~.
    \end{equation}
    We thus obtain a differential graded Lie algebra, and this is a special case of the algebra called dgLie (THA$'$) in \ref{ssec:tensor_hierarchy_algebras}.
    
    Let us now construct the $\ophLie_2$-algebra $\frE$ of this differential graded Lie algebra using \ref{thm:ophLie_from_dgLA}. We arrive at the graded vector space 
    \begin{equation}    
        \begin{array}{ccccccccccccccccccccc}
            \frE_{\mathfrak{e}_{6(6)}}=&  \frE_{-4} &\oplus& \frE_{-3}&\oplus& \frE_{-2}&\oplus& \frE_{-1}&\oplus& \frE_{0} \\
            & \mathbf{27\oplus 1728} && \mathbf{351}_c && \mathbf{78} && \mathbf{27} && \mathbf{27}_c \\
            & (e^{a}, e_{ab}{}^{\alpha}) && e_{a}{}^{\alpha} &&e^{\alpha} &&  e^{a}&&  e_a
        \end{array}
    \end{equation}
    with non-trivial products
    \begin{equation}
        \eps_1(x)\coloneqq [\Theta, x]
        ~,~~~
        \eps^0_2(x,y)\coloneqq     [[\Theta, x] , y]
        ~,~~~ 
        \eps^{1}_{2}(x,y)\coloneqq     (-1)^{|x|}[x , y]~.
    \end{equation}
    Explicitly, we have the differentials
    \begin{subequations}
        \begin{equation}
            \begin{split}
                \eps_1(e^a)&=  \Theta_{b}{}^{\alpha} t_{\alpha}{}_{c}{}^{d}d^{bca}e_d = X_{bc}{}^{d}d^{bca}e_d=-Z^{ad}e_d~,
                \\
                \eps_1(e^\alpha)&=   \Theta_{b}{}^{\alpha}e^{b}~,\\
                \eps_1(e_{a}{}^{\alpha})&=  - \delta_{\beta}\Theta_{a}{}^{\alpha}e^{\beta}=-Y_{a}{}_{\beta}{}^{\alpha}e^{\beta}~,         
            \end{split}
        \end{equation}
        the Leibniz-like products
        \begin{equation}
            \begin{aligned}
                \eps^0_2(e_a, e_b)&=    [\Theta_{a}{}^{\alpha} e_\alpha  , e_b] =\Theta_{a}{}^{\alpha} t_{\alpha}{}_{b}{}^{c}e_{c}  =X_{ab}{}^{c}e_{c}~,
                \\
                \eps^0_2(e_a, e^b)& =    [\Theta_{a}{}^{\alpha} e_\alpha  , e^b] =-\Theta_{a}{}^{\alpha} t_{\alpha}{}_{c}{}^{b}e^{c}  =-X_{ac}{}^{b}e^{c}~,
                \\
                \eps^0_2(e_a, e^\beta)& =    [\Theta_{a}{}^{\alpha} e_\alpha  , e^\beta] =-\Theta_{a}{}^{\alpha}f_{\alpha\gamma}{}^\beta e^\gamma =-X_{a\gamma}{}^\beta e^\gamma~,
                \\
                \eps^0_2(e_a, e_{b}{}^{\beta})& =    [\Theta_{a}{}^{\alpha} e_\alpha  , e_{b}{}^{\beta}] =-\Theta_{a}{}^{\alpha}f_{\alpha\gamma}{}^\beta e_{b}{}^{\gamma}+\Theta_{a}{}^{\alpha}t_{\alpha b}{}^{c} e_{c}{}^{\beta} =-X_{a\gamma}{}^\alpha   e_{b}{}^{\gamma}+X_{ab}{}^{c}e_{c}{}^{\beta}~,
            \end{aligned}
        \end{equation}
        as well as the alternator-type products
        \begin{equation}
            \begin{aligned}
                \eps^{1}_{2}(e_a, e_b)&=  2 d_{abc}e^{c}~,
                \\
                \eps^{1}_{2}(e_a, e^b)&=  t_{\alpha a}{}^{b}e^{\alpha}   =\eps^{1}_{2}(e^b, e_a)~,
                \\
                \eps^{1}_{2}(e_a, e^\alpha)&= e_{a}{}^{\alpha}  = P_{\rep{351}_c}e_{b}{}^{\beta} =-\eps^{1}_{2}(e^\alpha, e_a)~,
                \\
                \eps^{1}_{2}(e^a, e^b)&= -e^{[ab]}  = t_{\alpha c}{}^{[a}d^{b]cd}e_{d}{}^{\alpha}~,
            \end{aligned}
        \end{equation}
        where we used that  $t_{\alpha c}{}^{[a}d^{b]cd}$ is the intertwiner between the $\mathbf{351 }_c\in \mathbf{27 }_c \otimes \mathbf{78}$ and $\mathbf{351 }_c\cong  \bigwedge^2\,\mathbf{27}$. 
    \end{subequations}
    
    We can now construct the corresponding curvatures. We start from the Chevalley--Eilenberg algebra of $\frE_{\mathfrak{e}_{6(6)}}$ with the following generators $(r^A)$ spanning $\frE_{\mathfrak{e}_{6(6)}}[1]^*$:
    \begin{equation}    
        \begin{array}{ccccccccccccccccccccc}
            \text{degree} & 1&& 2&&3&&4&&5\\[5pt]
            \frE_{\mathfrak{e}_{6(6)}}[1]^*= & \mathbf{27} &\oplus& \mathbf{27 }_c &\oplus& \mathbf{78} &\oplus& \mathbf{351} &\oplus& \mathbf{27 }_c\oplus\rep{1728 }_c \\[5pt]
            & r^a &&r_{a} &&r_{\alpha} &&  r_{\alpha}{}^{a} &&  (r_{a}, r_{\alpha}{}^{ab})
        \end{array}
    \end{equation}
    Consider now the Weil algebra $\sfW(\frE_{\mathfrak{e}_{6(6)}})$, cf.~\ref{def:Weil-EL-infty}. Here, we introduce a second copy of shifted generators $(\hat r^A)$ spanning $\frE_{\mathfrak{e}_{6(6)}}[2]^*$ with $|\hat r^A|=|r^A|+1$. The usual Weil differential up to degree~$3$ elements, dual to scalars in $d=5$, then reads as 
    \begin{equation}\label{N=8D5QW}
        \begin{aligned}
            Q_{\sf W} r^{a} &=-Z^{ab}r_{b} - X_{bc}{}^{a}    r^{b} \oslash_0 r^{c}+\hat r^a~,
            \\
            Q_{\sf W} r_{a} &= 
            \Theta_{a}{}^{\alpha}r_{\alpha}
            +X_{ba}{}^{c}r^b \oslash_0 r_c 
            - d_{abc} r^{b} \mathbin{\hat \oslash_1}   r^{c}+ \hat r_a~,
            \\
            Q_{\sf W}r_{\alpha} &=  
            Y_{a\alpha}{}^{\beta}r_{\beta}{}^{a}
            +X_{a\alpha}{}^{\beta}r^{a} \oslash_0 r_{\beta}
            + t_{\alpha}{}_{a}{}^{b}  r^{a} \mathbin{\hat \oslash_1} r_{b} 
            +  \hat r_\alpha~,
            \\ 
            Q_{\sf W} \hat r^{a} &=
            Z^{ab} \hat r_{b} + X_{bc}{}^{a}  \hat  r^{b} \oslash_0 r^{c}- X_{bc}{}^{a}    r^{b} \oslash_0 \hat r^{c}~,
            \\
            Q_{\sf W} \hat r_{a} &= 
            - \Theta_{a}{}^{\alpha}\hat r_{\alpha}
            - X_{ba}{}^{c}\hat r^b \oslash_0 r_c 
            + X_{ba}{}^{c} r^b \oslash_0 \hat r_c 
            +2 d_{abc} \hat r    ^{b} \mathbin{\hat \oslash_1} r    ^{c}~,
            \\
            Q_{\sf W} \hat r_{\alpha} &=-Y_{a\alpha}{}^{\beta}\hat r_{\beta}{}^{a}
            -X_{a\alpha}{}^{\beta} \hat r^{a} \oslash_0 r_{\beta}
            + X_{a\alpha}{}^{\beta}  r^{a} \oslash_0 \hat r_{\beta}
            - t_{\alpha}{}_{a}{}^{b} \hat r^{a} \mathbin{\hat \oslash_1} r_{b} 
            + t_{\alpha}{}_{a}{}^{b}  \hat r_{b} \mathbin{\hat \oslash_1} r^{a}~,
        \end{aligned}
    \end{equation}
    where we have introduced the notation
    \begin{equation}
        \begin{split}
            a \mathbin{\hat \oslash_i} b& =  a  \oslash_i b +(-1)^{i+|a|\,|b|} b  \oslash_i a~, \\
            a \mathbin{\check \oslash_i} b& = a  \oslash_i b -(-1)^{i+|a|\,|b|} b  \oslash_i a~.
        \end{split}
    \end{equation} 
    The deformed Leibniz rule~\eqref{eq:def_Leibniz}, together with the remaining $\opEilh_2$-relations~\eqref{eq:Eilh-relations} and the identities~\eqref{eq:def_tens_identities}, then imply $Q_\sfW^2=0$ as one can check by direct computation. 
    
    In order to define the curvatures of the tensor hierarchy, we symmetrize to an $L_\infty$-algebra using \ref{thm:antisym_hLie}. We can then use the formalism of \ref{ssec:firmly_adjusted} to construct an adjusted Weil algebra in the sense of~\cite{Saemann:2019dsl}, ensuring closure of the gauge algebra without any further constraints on the field strengths.
    
    To illustrate in more detail the procedure and what it achieves, we can perform the coordinate change already at the level of the Weil algebra of the $\ophLie_2$-algebra. This coordinate change yields a symmetrized and firmly adjusted Weil algebra through an  evident  coordinate change, $r^A\mapsto \tilde r^A$, which removes all appearances of $\oslash_1$ in $Q_{\sf W}\tilde r^A$ via the deformed Leibniz rule~\eqref{eq:def_Leibniz}. Hence,  by \ref{thm:el_infty_contain_l_infty} we are left with an $L_\infty$-algebra. Explicitly, the following  coordinate change manifestly removes all appearances of $\oslash_1$:
    \begin{equation}    
        \begin{aligned}\label{eq:can_rotation}
            r^a&\mapsto a^a \coloneqq  r^a~,
            \\
            r_a&\mapsto b_a \coloneqq  r_a  +\tfrac12  d_{abc} r^b \mathbin{\check \oslash_0} r^c~,
            \\
            r_{\alpha}&\mapsto c_{\alpha}\coloneqq  r_{\alpha}- \tfrac12   t_{\alpha}{}_{a}{}^{b}  r^{a} \mathbin{\check  \oslash_0} r_{b}~,
            \\
            r_{\alpha}{}^{a}&\mapsto d_{\alpha}{}^{a} \coloneqq  r_{\alpha}{}^{a}+\tfrac12 P_{\rep{351}_c} r^{b} \mathbin{\check  \oslash_0} r_{\beta}+\tfrac12 t_{\alpha c}{}^{[b}d^{c]ad}r_{b} \mathbin{\check  \oslash_0} r_{c}~,
        \end{aligned}
    \end{equation}
    where $ d_{\alpha}{}^{a}$ is included as it is needed for $Q_{\sf W}\tilde r_{\alpha}$. The corresponding coordinate change on $\hat r^A$ is firmly adjusted by simply first ordering the occurrences of  $\hat r^B$ in  $\hat{\tilde{r}}^A$ to the left (which is permitted by the appearance of only $\check \oslash_0$ in $\hat{\tilde{r}}^A$) and then  sending $\check \oslash_i$ to $\check \oslash_i+\hat \oslash_i=2\oslash_i$. The choice of left ordering follows from the choice of left Leibniz rule, which is a matter of convention. Applied to~\eqref{eq:can_rotation} this yields
    \begin{equation}    
        \begin{aligned}
            \hat   r^a&\mapsto f^a \coloneqq  \hat r^a~,
            \\
            \hat    r_a&\mapsto h_a \coloneqq \hat  r_a  + 2d_{abc}\hat r^b  \oslash_0  r^c~,
            \\
            \hat    r_{\alpha}&\mapsto g_{\alpha} \coloneqq  \hat r_{\alpha}-  t_{\alpha}{}_{a}{}^{b} (\hat  r^{a}   \oslash_0  r_{b} -  \hat r_{b}  \oslash_0  r^{a})~, 
            \\
            \hat  r_{\alpha}{}^{a}&\mapsto k_{\alpha}{}^{a} \coloneqq  \hat r_{\alpha}{}^{a}+ P_{\rep{351}_c} (\hat r^{b}   \oslash_0  r_{\beta} +\hat r_{\beta}  \oslash_0  r^{b})+2 t_{\alpha c}{}^{[b}d^{c]ad} \hat r_{b}   \oslash_0  r_{c}~.
        \end{aligned}
    \end{equation}
    Note that this is a special case of the transformation~\eqref{eq:firm_rotation} for a firm adjustment.

    The result of this coordinate change is the differential graded commutative algebra $\sfW_{\rm adj}(\frE_{\mathfrak{e}_{6(6)}})$ generated by $\frE_{\mathfrak{e}_{6(6)}}[1]^*\oplus\frE_{\mathfrak{e}_{6(6)}}[2]^*$ and differential
    \begin{equation}\label{symN=8D5QW}
        \begin{aligned}
            Q_{\sfW_{\rm adj}} a^{a} &=-Z^{ab}b_{b} - \tfrac12 X_{bc}{}^{a}    a^{b}   a^{c}+f^a~,
            \\
            Q_{\sfW_{\rm adj}} b_{a} &= 
            \Theta_{a}{}^{\alpha}c_{\alpha}
            + \tfrac12 X_{ba}{}^{c}a^b   b_c + \tfrac16 d_{abc} X_{de}{}^{b}a^c   a^d   a^e 
            - d_{abc} f^{b}   a^{c}+ h_a~,
            \\
            Q_{\sfW_{\rm adj}}c_{\alpha} &=  
            Y_{a\alpha}{}^{\beta}d_{\beta}{}^{a}
            +\tfrac12 X_{a\alpha}{}^{\beta}a^{a}   c_{\beta}
            +(\tfrac14 X_{a\alpha}{}^{\beta}t_{\beta b}{}^{c}+\tfrac13t_{\alpha a}{}^{d} X_{(db)}{}^{c})a^a  a^b b_c
            \\ 
            &~~~~~+\tfrac12 t_{\alpha}{}_{a}{}^{b} f^a   b_b -\tfrac12 t_{\alpha}{}_{a}{}^{b} h_b a^a    -\tfrac{1}{6} t_{\alpha}{}_{a}{}^{b} d_{bcd}a^{a}  a^{c}  f^{d}+ g_\alpha~,\\
            Q_{\sfW_{\rm adj}} f^{a} &=
            Z^{ab} h_{b} + X_{bc}{}^{a} a^{b}    f^{c} ~,
            \\
            Q_{\sfW_{\rm adj}} h_{a} &= 
            -\Theta_{a}{}^{\alpha}g_{\alpha}
            + X_{ab}{}^{c} a^b   h_c 
            + d_{abc}f^b   f^c~,
            \\
            Q_{\sfW_{\rm adj}} g_{\alpha} &=-Y_{a\alpha}{}^{\beta}k_{\beta}{}^{a}
            +X_{a\alpha}{}^{\beta}  a^{a}   g_{\beta}
            - t_{\alpha}{}_{a}{}^{b}   h_{b}    f^{a}~.
        \end{aligned}
    \end{equation}
    
    We can now define the corresponding curvatures in the adjusted higher gauge theory as usual as a morphism of differential graded algebras
    \begin{equation}\label{eq:dga_morphism_5d}
        (\caA,\caF): \sfW_{\rm adj}(\frE_{\mathfrak{e}_{6(6)}})~\longrightarrow~\Omega^\bullet(M)~,
    \end{equation}
    where\footnote{The additional signs here follow from the choice of sign convention in~\eqref{eq:e6_gla_comm_def}.}
    \begin{equation}
        \begin{split}
            (a^a,b_a,c_\alpha,  d_{\alpha}{}^{a}) &\mapsto  (A^a, B_a, -C_\alpha,  -D_{\alpha}{}^{a})~,\\
            (f^a,h_a,g_\alpha,  k_{\alpha}{}^{a}) &\mapsto  (F^a, H_a, -G_\alpha, -K_{\alpha}{}^{a})~.
        \end{split}
    \end{equation}
    This indeed yields  the gauge potentials and curvatures of the $d=5$ tensor hierarchy:
    \begin{subequations}\label{eq:field_strengths}
        \begin{align} 
            F^a &= \rmd A^a +\tfrac{1}{2} X_{bc}{}^{a} A^b\wedge A^c  + Z^{ab}B_b~,
            \\
            H_a 
            &= \rmd B_a -\tfrac 12 X_{ba}{}^{c} A^b\wedge B_c   - \tfrac{1}{6} d_{abc} X_{de}{}^{b} A^c\wedge A^d \wedge A^e  +d_{abc} A^b\wedge F^c + \Theta_{a}{}^{\alpha}  C_\alpha~,
            \\
            G_\alpha  &= \rmd  C_\alpha-\tfrac12 X_{a\alpha}{}^{\beta}A^a\wedge  C_\gamma+(\tfrac14 X_{a\alpha}{}^{\beta}t_{\beta b}{}^{c}+\tfrac13t_{\alpha a}{}^{d} X_{(db)}{}^{c})A^a\wedge  A^b \wedge B_c \\
            &\phantom{=} +\tfrac12 t_{\alpha}{}_{a}{}^{b} F^a \wedge B_b -\tfrac12 t_{\alpha}{}_{a}{}^{b}H_b  \wedge A^a  -\tfrac{1}{6} t_{\alpha}{}_{a}{}^{b} d_{bcd}A^{a}\wedge A^{c}\wedge F^{d}- Y_{a\alpha}{}^{\beta} D_{\beta}{}^{a}~,
        \end{align}
    \end{subequations}
    along with the corresponding Bianchi identities,
    \begin{subequations}\label{D5Bianchi}
        \begin{align} 
            0&=\rmd F^{a} 
            - X_{bc}{}^{a} A^{b} \wedge   F^{c} -Z^{ab} H_{b}  ~,
            \\
            0 &= \rmd H_{a}
            - X_{ab}{}^{c} A^b  \wedge H_c 
            - d_{abc}F^b \wedge  F^c-\Theta_{a}{}^{\alpha}G_{\alpha}~,
            \\
            0&=  \rmd G_{\alpha}- X_{a\alpha}{}^{\beta}  A^{a} \wedge  G_{\beta}- t_{\alpha}{}_{a}{}^{b}   H_{b} \wedge   F^{a} +Y_{a\alpha}{}^{\beta}K_{\beta}{}^{a}~.
        \end{align}
    \end{subequations}
    We note that the full kinematical data is determined in this way: the Bianchi identities are implied by compatibility of the morphism~\eqref{eq:dga_morphism_5d} with the differential, and the gauge transformations are constructed as infinitesimal partially flat homotopies, cf.~e.g.~\cite{Saemann:2019dsl} for details.
    
    To make contact with the expressions in the supergravity literature, cf.~\cite{deWit:2004nw, Hartong:2009vc}, one must make the field redefinitions 
    \begin{equation}
        \begin{aligned}
            C_{\alpha}&\mapsto C_\alpha+\tfrac12t_{\alpha a}{}^b A^a\wedge B_b~,
            \\
            D_{\alpha}{}^{a} &\mapsto D_{\alpha}{}^{a} - \tfrac12 P_{\rep{351 }_c}  A^a\wedge  C_\alpha~.
        \end{aligned}
    \end{equation}
    Similar field redefinitions were also used in~\cite{Greitz:2013pua} to link another elegant derivation of the curvature forms (in which, however, the link to higher gauge algebras also is somewhat obscured) to the supergravity literature. We stress that from the higher gauge algebra point of view, the form~\eqref{eq:field_strengths} is special in the sense that all exterior derivatives of gauge potentials in non-linear terms have been absorbed in field strengths. This makes~\eqref{eq:field_strengths} particularly useful, as it exposes cleanly the separation of unadjusted curvature and adjustment. From the former, one can straightforwardly identify the higher Lie algebra of the structure group of the underlying higher principal bundle. Moreover, gauge transformations are readily derived from partially flat homotopies, as mentioned above. As a side effect, it is interesting to note that the arising higher products are at most ternary.
    
    An interesting aspect of~\eqref{eq:field_strengths} is the fact that the covariantizations of the differentials $\rmd B$ and $\rmd C$ contain a perhaps unexpected factor of $\tfrac12$. This factor is a clear indication that
    the origin of the gauge $L_\infty$-algebra is indeed an $\ophLie_2$-algebra: the action $\acton$ of $A$ on $B$ and $C$ is encoded in an $\ophLie_2$-algebra with
    \begin{equation}
        \eps_2^0(A,B)\coloneqq A\acton B\eand \eps_2^0(A,C)\coloneqq A\acton C~,
    \end{equation}
    which is then antisymmetrized by \ref{thm:antisym_hLie} to 
    \begin{equation}
        \mu_2(A,B)\coloneqq \tfrac12 \eps^0_2(A,B)\eand \mu_2(A,C)\coloneqq \tfrac12 \eps^0_2(A,C)~,
    \end{equation}
    at the cost of introducing non-trivial higher products $\mu_3$, cf.~\eqref{eq:antisymmetrization_hLie}. This is fully analogous to the situation in generalized geometry, cf.~e.g.~the Dorfman and Courant brackets~\eqref{eq:Dorfman} and~\eqref{eq:ass_Courant_algebra}.
    
    \section{Comparison to the literature}\label{sec:comparison}
    
    We conclude by comparing our results with algebraic structures previously introduced in the literature to capture the gauge structure underlying the higher gauge theories obtained in the tensor hierarchies of gauged supergravity. We shall focus on the particularities of the gauge algebraic structures of the tensor hierarchies; for other work linking the tensor hierarchy to ordinary $L_\infty$-algebras, see also~\cite{Cagnacci:2018buk}.
    
    \subsection{Enhanced Leibniz algebras}
    
    A notion of enhanced Leibniz algebras was introduced in~\cite{Strobl:2016aph,Strobl:2019hha} to capture the parts of the higher gauge algebraic structures appearing in the tensor hierarchy. See also~\cite{Kotov:2018vcz} for a discussion of the higher gauge theory employing these enhanced Leibniz algebras and the link to the tensor hierarchy.
    
    \begin{definition}[\cite{Strobl:2019hha}]
        An \uline{enhanced Leibniz algebra} is a Leibniz algebra $(\sfV,[-,-])$ together with a vector space $\sfW$ and a linear map $t:\sfW\rightarrow \sfV$ as well as a binary operation $\circ:\sfV\otimes\sfV\to\sfW$ such that
        \begin{equation}
            \begin{aligned}
                [t(w),v]&=0~~~&u\stackrel{s}{\circ}[v,v]&=v\stackrel{s}{\circ}[u,v]
                \\
                t(w)\circ t(w)&=0~,~~~&[v,v]&=t(v\circ v)
            \end{aligned}
        \end{equation}
        for all $u,v\in \sfV$ and $w\in \sfW$, where $u\stackrel{s}{\circ}v$ denotes the symmetric part of $u\circ v$.
        
        A \uline{symmetric enhanced Leibniz algebra} additionally satisfies the condition that
        \begin{equation}
            u\circ v=v\circ u
        \end{equation}
        for all $u,v\in \sfV$.
    \end{definition}
    
    A symmetric enhanced Leibniz algebra is an $\ophLie_2$-algebra concentrated in degrees $-1$ and $0$ with a few axioms missing. We can identify the structure maps as follows.
    \begin{equation}
        \begin{gathered}
            \frE=(\frE_{-1}\xrightarrow{~\eps_1~}\frE_0)~=~(\sfW\xrightarrow{~\sft~}\sfV)~,
            \\
            \eps_2(v_1,v_2)=[v_1,v_2]~,~~~\eps_2(v,w)=0~,~~~\sfalt(v_1,v_2)=v_1\circ v_2~,
        \end{gathered}
    \end{equation}
    for $v,v_1,v_2\in \sfV$ and $w\in\sfW$. The $\ophLie_2$-algebra relations~\eqref{eq:hLie-relations} are trivially satisfied since $\eps_2$ is a Leibniz bracket. Moreover, $\eps_1$ is trivially a differential and a derivation of $\eps_2$. The relation $\eps_2(v_1,v_2)+\eps_2(v_2,v_1)=\eps_1(\sfalt(v_1,v_2))$ is the polarization of $[v,v]=t(v\circ v)$. The relation $u\stackrel{s}{\circ}[v,v]=v\stackrel{s}{\circ}[u,v]$ fails to accurately reproduce the relation between $\eps_2$ and the alternator, $\sfalt(v_1,\eps_2(v_2,v_3))=\sfalt(\eps_2(v_2,v_3),v_1)$. Moreover, the relation $t(w)\circ t(w)=0$ fails to reproduce the appropriate relation for the alternator, $\sfalt(v_1,\sft(w_1))=\sfalt(\sft(w_1),v_1)=0$.
    
    The original definition in~\cite{Strobl:2019hha} of a (not necessarily symmetric) ``enhanced Leibniz algebra'' is slightly more general, allowing for the operation $\circ$ to be not symmetric. However, this is not very natural, as discussed in \cref{ssec:EL_infty_and_L_infty}. Moreover, the algebraic structure underlying the tensor hierarchy is an $\ophLie_2$-algebra, so enhanced Leibniz algebras require axiomatic completion.
    
    \subsection{\texorpdfstring{$\infty$}{Infinity}-Enhanced Leibniz algebras}
    
    A similar notion of extended Leibniz algebras was formulated in~\cite{Bonezzi:2019ygf}, see also~\cite{Bonezzi:2019bek} as well as the previous work on Leibniz algebra gauge theories~\cite{Hohm:2018ybo}.
    \begin{definition}[\cite{Bonezzi:2019ygf}]
        An \uline{\(\infty\)-enhanced Leibniz algebra} is an $\IN$-graded differential complex $(X=\oplus_{i\in\IN} X_i,\rmd)$ with differential of degree~$-1$, endowed with two binary operations 
        \begin{subequations}
            \begin{equation}
                \begin{aligned}
                    \circ&:X_0\otimes X_0\rightarrow X_0~,
                    \\
                    \bullet&:X_i\otimes X_j\rightarrow X_{i+j+1}~,
                \end{aligned}
            \end{equation}
            satisfying the following relations:
            \begin{align}
                (x\circ y)\circ z&=x \circ(y\circ z)-y\circ (x\circ z)~,\label{eq:eL_Leibniz}
                \\
                a\bullet b&=(-1)^{|a|\,|b|}(b\bullet a)~,\label{eq:eL_sym_bullet}
                \\
                (\rmd w)\circ x &= 0~, \label{eq:ielax1}
                \\ 
                \rmd(x\bullet y)&= x\circ y+y\circ x~,\label{eq:ielax2}
                \\
                \rmd(u\bullet v) &= - (\rmd u)\bullet v + (-1)^{|u|+1}u\bullet\rmd v~, \label{eq:ielax5} 
                \\
                (a\bullet b)\bullet c&= 
                (-1)^{|a|+1}a\bullet(b\bullet c)-(-1)^{(|a|+1)|b|} b\bullet(a\bullet c)~, \label{eq:ielax6} 
                \\ 
                \rmd(x\bullet(y\bullet z)) &= (x\circ y)\bullet z + (x\circ z)\bullet y - (y\circ z+z\circ y)\bullet x~, \label{eq:ielax3} 
                \\
                \left[\rmd(x\bullet(y\bullet z))\right]_{x\leftrightarrow y}&=\left[(x\circ y)\bullet u-2x\bullet\rmd(y\bullet u)-x\bullet(y\bullet \rmd u)
                \right]_{x\leftrightarrow y}~, \label{eq:ielax4} 
            \end{align}
        \end{subequations}
        where \(x,y,z\) range over degree~$0$ elements, \(w\) ranges over degree \(1\) elements, \(u,v\) range over positive degree elements, and \(a,b,c\) over arbitrary elements of homogeneous degrees, and where \(
        [\dotsb]_{x\leftrightarrow y}
        \)
        signifies that the enclosed expression is antisymmetrized with respect to the permutation between \(x\) and \(y\).
    \end{definition}
    
    An $\infty$-enhanced Leibniz algebra is a particular type of $\ophLie_2$-algebra with some axioms missing. Clearly, to compare the axioms, we have to invert the sign of the degree. We thus consider an $\ophLie_2$-algebra $\frE$ concentrated in non-positive degrees with $\eps_2^0=\circ$ non-trivial only on elements of degree~$0$. Moreover, we are led to identify $\eps_2^1$ with $\bullet$; all other $\eps_2^i$ are trivial. Then we have the following relations between the axioms of an $\infty$-enhanced Leibniz algebra and an $\ophLie_2$-algebra:
    \begin{itemize}[leftmargin=1.5cm]
        \item[\eqref{eq:eL_Leibniz}] is simply the Leibniz identity and follows from the quadratic relation for $\eps_2^0$.
        \item[\eqref{eq:eL_sym_bullet}] amounts to $\eps_2^1$ being graded symmetric and follows from the modified Leibniz rule, as do~\eqref{eq:ielax1}--\eqref{eq:ielax5}.
        \item[\eqref{eq:ielax6}] follows from the $\ophLie_2$-axiom for $\eps_2^1\circ \eps_2^1$.
        \item[\eqref{eq:ielax3}] follows from the modified Leibniz rule together with the $\ophLie_2$-axioms for $\eps_2^1\circ \eps_2^0$ and $\eps_2^0\circ \eps_2^1$:
        \begin{equation}
            \begin{aligned}
                \eps_1(\eps_2^1(x,\eps_2^1(y,z))) &=
                \eps_2^0(x,\eps_2^1(y,z))
                +  \eps_2^0(\eps_2^1(y,z),x)
                - \eps_2^1(x,\eps_2^0(y,z)+\eps_2^0(z,y)) \\
                &= \eps_2^1(\eps_2^0(x,y),z) + \eps_2^1(y,\eps_2^0(x,z))
                - \eps_2^1(x,\eps_2^0(y,z)+\eps_2^0(z,y))~,
            \end{aligned}
        \end{equation}
        as does~\eqref{eq:ielax4}: we have:
        \begin{equation}
            \begin{aligned}
                \eps_2^0(\eps_2^1(-,-),-)&=0~,
                \\
                \eps_1(\eps_2^1(x,\eps_2^1(y,u))) &= \eps_2^1(x,\eps_2^1(y,\eps_1(u)))
                + \eps_2^0(x,\eps_2^1(y,u)) - \eps_2^1(x,\eps_2^0(y,u)+\eps_2^0(u,y)) 
                \\
                &= \eps_2^1(x,\eps_2^1(y,\eps_1(u)))
                +\eps_2^1(\eps_2^0(x,y),u) + \eps_2^1(y,\eps_2^0(x,u))
                \\
                &\qquad - \eps_2^1(x,\eps_2^0(y,u)) -\eps_2^1(x,\eps_2^0(u,y))~,
                \\
                \eps_2^1(x,\eps_1(\eps_2^1(y,u))) &= -\eps_2^1(x,\eps_2^1(y,\eps_1(u))) + \eps_2^1(x,\eps_2^0(y,u)+\eps_2^0(u,y))~,
            \end{aligned}
        \end{equation}
        and putting this together, we obtain 
        \begin{equation}
            \begin{aligned}
                &\left[\rmd(\eps_2^1(x,\eps_2^1(y,z)))
                +2\eps_2^1(x,\rmd(\eps_2^1(y,u))
                \right]_{x\leftrightarrow y}
                \\
                &\hspace{3cm}=[
                -\eps_2^1(x,\eps_2^1(y,\rmd u))
                +\eps_2^1(\eps_2^0(x,y),u)
                +\eps_2^1(x,\eps_2^0(u,y))
                ]_{x\leftrightarrow y}~.
            \end{aligned}
        \end{equation}
    \end{itemize}
    Note, however, that while the $\ophLie_2$-algebra axioms imply the axioms of an $\infty$-enhanced Leibniz algebra, the reverse statement is not true, even for $\infty$-enhanced Leibniz algebras concentrated in degrees $0$ and $1$. The latter essentially implies that $\infty$-enhanced Leibniz algebras are an incomplete abstraction of the operad $\opLie$ and thus do not give the full picture. Altogether, we arrive at the same conclusion as for enhanced Leibniz algebras. 
    
    As a side remark, we note that in the outlook of~\cite{Bonezzi:2019ygf}, the authors mentioned the desire for the interpretation of $\infty$-enhanced Leibniz algebras as the homotopy algebras of some simpler algebraic structure. Our discussion suggests that this is not possible; instead, the axiomatic completion of $\infty$-enhanced Leibniz algebras yields $\ophLie_2$-algebras whose homotopy algebras form $E_2L_\infty$-algebras, a much weaker version of $L_\infty$-algebras.
    
    \subsection{Algebras producing the tensor hierarchies}\label{ssec:tensor_hierarchy_algebras}
    
    We now come to larger picture of algebras that lead to the gauge structures visible in the tensor hierarchies, see \ref{fig:alg_diagram}. Note that this picture has only been applied in the context of the tensor hierarchy for maximal supersymmetry. We shall be less detailed in the following.
    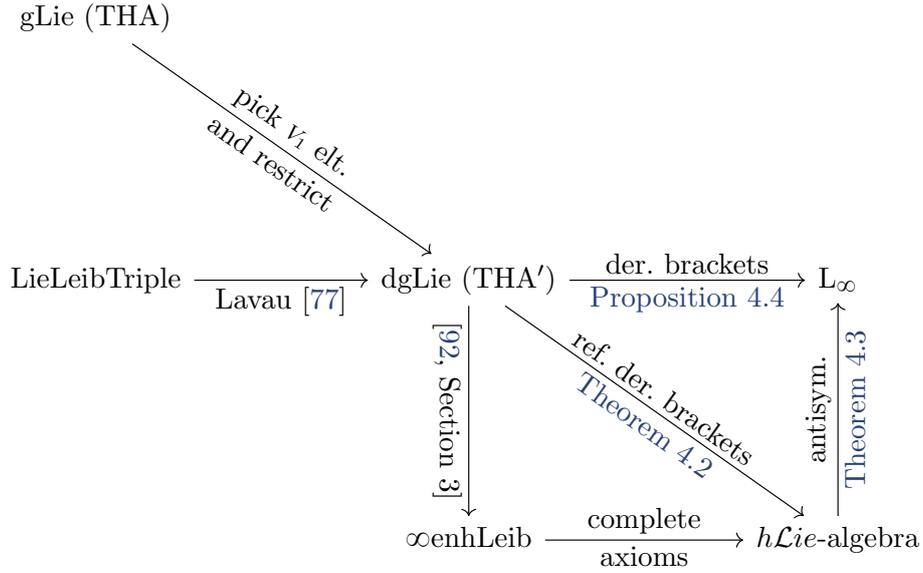
\begin{figure}[h]
        \begin{center}
            \begin{tikzcd}[row sep=2.8cm,column sep=2.3cm]
                \mbox{gLie (THA)} \arrow{rd}[sloped]{\mbox{pick $V_1$ elt.}}[sloped,swap]{\mbox{and restrict}}& 
                \\
                \mbox{LieLeibTriple} \rar[swap,"\mbox{Lavau~\cite{Lavau:2017tvi}}"] & \mbox{dgLie (THA\('\))} \arrow{r}{ \mbox{der.~brackets}}[swap]{\mbox{\Cref{prop:dgLA_to_L_infty}}}
                \arrow{d}[sloped,swap]{\mbox{\cite[Section 3]{Lavau:2019oja}}}
                \arrow{rd}[sloped]{\mbox{ref.~der.~brackets}}[swap,sloped]{\mbox{\Cref{thm:ophLie_from_dgLA}}}
                & \mbox{L\(_\infty\)}\\
                & \mbox{$\infty$enhLeib} \arrow{r}{\mbox{complete}}[swap]{\mbox{axioms}}& \mbox{$\ophLie_2$-algebra} \arrow{u}[sloped]{ \mbox{antisym.}}[swap,sloped]{\mbox{\Cref{thm:antisym_hLie}}}
            \end{tikzcd}
        \end{center}
        \caption{The relation between the various algebraic structures in the literature and how $\ophLie_2$-algebras fit into the picture.}\label{fig:alg_diagram}
    \end{figure}
    
    In~\cite{Palmkvist:2013vya}, Palmkvist constructs an infinite-dimensional $\IZ/2$-graded Lie algebra, which he calls the {\em tensor hierarchy algebra}, ``gLie (THA)'' in \ref{fig:alg_diagram}. For further work on the tensor hierarchy algebra, see also~\cite{Cederwall:2019qnw,Cederwall:2019bai,Cederwall:2021ymp}. As observed in~\cite{Greitz:2013pua}, see also~\cite{Palmkvist:2013vya}, this $\IZ/2$-grading can be naturally refined into a $\IZ$-grading, and picking an element of degree~1 and subsequent restriction induces the structure of a differential graded Lie algebra, ``dgLie (THA\('\))'' in~\ref{fig:alg_diagram}. In~\cite{Lavau:2017tvi}, Lavau called this differential graded Lie algebra the ``tensor hierarchy algebra'' (not to be confused with Palmkvist's larger graded Lie algebra), and derived it from a further algebraic structure called {\em Lie--Leibniz triples}, ``LieLeibTriple'' in~\ref{fig:alg_diagram}. This differential graded Lie algebra then naturally gives rise to $\infty$-enhanced Leibniz algebras, as described in~\cite[Section 3]{Lavau:2019oja}. As explained above, the $\infty$-enhanced Leibniz algebra were an incomplete ``guess'' of the axioms of an $\ophLie_2$-algebra with $\eps_2^i=0$ for $i\geq 2$. Thus, from our perspective, $\infty$-enhanced Leibniz algebras are appropriately replaced by these, and we then have the construction of the gauge $L_\infty$-algebra via the picture~\eqref{eq:diagram_algebras}, which is refined in \ref{fig:alg_diagram}. We note that the composition of the arrows ``complete axioms'' and ``antisym.'', which produces an $L_\infty$-algebra from an $\infty$-enhanced Leibniz algebra, is found in~\cite[Appendix B]{Bonezzi:2019ygf}. As indicated in~\ref{fig:alg_diagram}, the direct construction of an $L_\infty$-algebra from a differential graded Lie algebra is the Fiorenza--Manetti--Getzler construction~\Cref{prop:dgLA_to_L_infty}, as pointed out in~\cite{Lavau:2019oja}, where Getzler's formulas were specialized to the tensor hierarchy differential graded Lie algebra.
    
    For prior relations among tensor hierarchies, the embedding tensor formalism and (homotopy) algebras see also~\cite{Hohm:2018ybo, Kotov:2018vcz}. We again stress that from our point of view, it is not natural to consider gauge theories with infinitesimal symmetries that are not (weaker forms of) Lie algebras. Axiomatically completing the various forms of Leibniz algebras to $\ophLie_2$-algebras solves this issue.
    
    As a side remark, let us note that the fact that Leibniz algebras naturally produce $L_\infty$-algebras has been pointed out in~\cite{Lavau:2020pwa}. This is \ref{prop:Leib_is_hemistrict_ELinfty} stating that any Leibniz algebra naturally extends to an $\ophLie_2$-algebra combined with \ref{thm:antisym_hLie} antisymmetrizing this $\ophLie_2$-algebra to an $L_\infty$-algebra.

    \section*{Acknowledgments}
    
    This work was supported by the Leverhulme Research Project Grant RPG--2018--329 ``The Mathematics of M5-Branes.'' We would like to thank Alexander Schenkel, Bruno Vallette, and Martin Wolf for interesting discussions and in particular Jim Stasheff for helpful comments on an earlier version of the manuscript.
    
    \appendices
    
    \subsection{Example of an \texorpdfstring{$E_2L_\infty$}{E2L-infinity}-algebra}\label{app:involved_example}
    
    Let us give the explicit form of an $E_2L_\infty$-algebra $\frE$, in which the products $\eps_1$, $\eps_2^0$, $\eps_2^1$, and $\eps_3^{00}$ are generic while all other products are trivial. We do not impose any conditions on the underlying differential complex. The compatibility relations are readily computed in the Chevalley--Eilenberg picture, and they read as follows:
    \begin{subequations}
        \begin{equation}
            \begin{aligned}
                \eps_1(\eps_1(e_1))&=0~,
                \\
                \eps_1(\eps_2^0(e_1,e_2))&=\eps_2^0(\eps_1(e_1),e_2))+(-1)^{|e_1|}\eps_2^0(e_1,\eps_1(e_2))~,
                \\
                \eps_1(\eps_2^1(e_1,e_2))&=\eps_2^0(e_1,e_2)+(-1)^{|e_1|\,|e_2|}\eps_2^0(e_2,e_1)-\eps_2^1(\eps_1(e_1),e_2)\\
                &~~~-(-1)^{|e_1|}\eps_2^1(e_1,\eps_1(e_2))~,
                \\
                \eps_2^1(e_1,e_2)&=(-1)^{|e_1|\,|e_2|}\eps_2^1(e_2,e_1)
            \end{aligned}
        \end{equation}
        \begin{equation}
            \begin{aligned}
                \eps_1(\eps_3^{00}(e_1,e_2,e_3))&=\eps_2^0(\eps_2^0(e_1,e_2),e_3)+(-1)^{|e_1|\,|e_2|}\eps_2^0(e_2,\eps_2^0(e_1,e_3))-\eps_2^0(e_1,\eps_2^0(e_2,e_3))\\
                &~~~-\eps_3^{00}(\eps_1(e_1),e_2,e_3)-(-1)^{|e_1|}\eps_3^{00}(e_1,\eps_1(e_2),e_3)\\
                &~~~-(-1)^{|e_1|+|e_2|}\eps_3^{00}(e_1,e_2,\eps_1(e_3))~,
                \\
                \eps_2^0(\eps_2^1(e_1,e_2),e_3)&=\eps_3^{00}(e_1,e_2,e_3)+(-1)^{|e_1|\,|e_2|}\eps_3^{00}(e_2,e_1,e_3)~,
                \\
                \eps_2^0(e_1,\eps_2^1(e_2,e_3))&=-(-1)^{|e_1|}\eps_3^{00}(e_1,e_2,e_3)-(-1)^{|e_2|\,|e_3|+|e_1|}\eps_3^{00}(e_1,e_3,e_2)
                \\
                &~~~+(-1)^{|e_1|}\eps_2^1(\eps_2^0(e_1,e_2),e_3)+(-1)^{|e_1|(1+|e_2|)}\eps_2^1(e_2,\eps_2^0(e_1,e_3))~,\\
                \eps_2^1(e_1,\eps_2^1(e_2,e_3))&=-(-1)^{|e_1|}\eps_2^1(\eps_2^1(e_1,e_2),e_3)+(-1)^{(|e_1|+1)(|e_2|+1)}\eps_2^1(e_2,\eps_2^1(e_1,e_3))~,
            \end{aligned}
        \end{equation}
        \begin{equation}\label{eq:E2L:axioms_b}
            \begin{aligned}
                \eps_2^0(&e_1,\eps_3^{00}(e_2,e_3,e_4))+(-1)^{|e_1|}\eps_3^{00}(e_1,\eps_2^0(e_2,e_3),e_4)+(-1)^{|e_1|+|e_2|\,|e_3|}\eps_3^{00}(e_1,e_3,\eps_2^0(e_2,e_4))\\
                &+(-1)^{|e_1|}\eps_2^0(\eps_3^{00}(e_1,e_2,e_3),e_4)+(-1)^{(|e_1|+|e_2|+1)|e_3|+|e_1|}\eps_2^0(e_3,\eps_3^{00}(e_1,e_2,e_4))\\
                &=(-1)^{|e_1|}\eps_3^{00}(e_1,e_2,\eps_2^0(e_3,e_4))+(-1)^{|e_1|}\eps_3^{00}(\eps_2^0(e_1,e_2),e_3,e_4)\\
                &\hspace{1cm}-(-1)^{(|e_1|+1)(|e_2|+1)}\eps_2^0(e_2,\eps_3^{00}(e_1,e_3,e_4))+(-1)^{|e_1|(|e_2|+1)}\eps_3^{00}(e_2,\eps_2^0(e_1,e_3),e_4)\\
                &\hspace{1cm}+(-1)^{|e_1|(|e_2|+|e_3|+1)}\eps_3^{00}(e_2,e_3,\eps_2^0(e_1,e_4))~,
            \end{aligned}
        \end{equation}
        \begin{equation}
            \begin{aligned}
                \eps_3^{00}(\eps_2^1(e_1,e_2),e_3,e_4)&=0~,\\
                \eps_3^{00}(e_1,\eps_2^1(e_2,e_3),e_4)&=0~,\\
                \eps_2^1(\eps_3^{00}(e_1,e_2,e_3),e_4)&=-(-1)^{|e_1|+|e_2|}\eps_3^{00}(e_1,e_2,\eps_2^1(e_3,e_4))\\
                &\hspace{1cm}-(-1)^{|e_3|(|e_1|+|e_2|+1)}\eps_2^1(e_3,\eps_3^{00}(e_1,e_2,e_4))~,
            \end{aligned}
        \end{equation}
        and
        \begin{equation}
            \begin{aligned}
                0=&\eps_3^{00}(\eps_3^{00}(e_1, e_2, e_3), e_4, e_5)+
                (-1)^{|e_1| + |e_2|} \eps_3^{00}(e_1, e_2, \eps_3^{00}(e_3, e_4, e_5))\\
                &\hspace{0.5cm}- (-1)^{|e_1| + (1 + |e_2|)|e_3|} \eps_3^{00}(e_1,e_3, \eps_3^{00}(e_2, e_4, e_5))\\
                &\hspace{0.5cm} 
                + (-1)^{|e_1| + (1 + |e_2| + |e_3|) |e_4|} \eps_3^{00}(e_1,e_4, \eps_3^{00}(e_2, e_3, e_5))\\
                &\hspace{0.5cm}
                + (-1)^{|e_1|} \eps_3^{00}(e_1, \eps_3^{00}(e_2, e_3, e_4), e_5) + (-1)^{(|e_1| +1)(|e_2| + |e_3|)} \eps_3^{00}(e_2, e_3, \eps_3^{00}(e_1, e_4, e_5))\\
                &\hspace{0.5cm}- (-1)^{(1 + |e_1|) |e_2| + (1 + |e_1| + |e_3|)|e_4|}\eps_3^{00}(e_2, e_4, \eps_3^{00}(e_1, e_3, e_5))\\
                &\hspace{0.5cm}- (-1)^{|e_2| + |e_1| |e_2|} \eps_3^{00}(e_2, \eps_3^{00}(e_1, e_3, e_4), e_5)\\
                &\hspace{0.5cm}+ (-1)^{(1 + |e_1| + |e_2|) (|e_3| + |e_4|)} \eps_3^{00}(e_3, e_4, \eps_3^{00}(e_1, e_2, e_5))\\
                &\hspace{0.5cm}+ (-1)^{|e_3| + (|e_1| + |e_2|) |e_3|} \eps_3^{00}(e_3, \eps_3^{00}(e_1, e_2, e_4), e_5)
            \end{aligned}
        \end{equation}
    \end{subequations}
    for all $e_i\in \frE$.
    
    \bibliography{bigone}
    
    \bibliographystyle{latexeu}
    
\end{document}